\newtheorem{problem_description}{Problem}
\newcommand{\LL}{\ell} 
\newcommand{\A}{A} 
\newcommand{\PP}{P} 
\newcommand{\conv}{\text{conv}}
\newcommand{\bigO}[1]{\ensuremath{\mathcal{O}}{\left(#1\right)}\xspace}
\newcommand{\bigtO}[1]{\ensuremath{\mathcal{\tilde{O}}}{\left(#1\right)}\xspace}
\newcommand{\biProblem}{\ensuremath{P_F}\xspace}
\newcommand{\freeProblem}{\ensuremath{P^\alpha_\text{F}}\xspace}
\newcommand{\problem}[1]{\ensuremath{P^\alpha_{#1}}\xspace}
\newcommand{\solution}{\ensuremath{\mathcal{S}}\xspace}
\newcommand{\polygons}{\ensuremath{\mathcal{B}}\xspace}
\newcommand{\subdivision}{\ensuremath{\mathcal{D}}\xspace}
\newcommand{\region}{\ensuremath{G}\xspace}
\newcommand{\enclosed}[2]{\ensuremath{\region(#1,#2)}\xspace}
\newcommand{\sarc}[2]{C_{#1}^{#2}}
\newcommand{\allowedArcs}{\ensuremath{F_\alpha(\polygons)}\xspace}
\newcommand{\Per}{\operatorname{Per}}
\newcommand{\combinatorialsol}{\ensuremath{\mathcal{K}}\xspace}
\newcommand{\combinatorialsolevaluated}[1]{\ensuremath{\mathcal{K}[{#1}]}\xspace}
\newcommand{\validSet}[1]{\ensuremath{\mathcal{I}({#1})}\xspace}
\newcommand{\possibleArcs}{\ensuremath{\hat{F}_\alpha(\polygons)}\xspace}
\newcommand{\wind}{\ensuremath{\omega}\xspace}
\newcommand{\tangentangle}[2]{\ensuremath{\sphericalangle(#1,#2)}\xspace}
\newcommand{\ignoreset}[1]{\ensuremath{\Phi({#1})}\xspace}
\newcommand{\forwardRay}[2]{\ensuremath{\vec{r}_{#1}(#2)}\xspace}
\newcommand{\cev}[1]{\reflectbox{\ensuremath{\vec{\reflectbox{\ensuremath{#1}}}}}}
\newcommand{\backwardRay}[2]{\ensuremath{\cev{r}_{#1}(#2)}\xspace}
\newcommand{\vecTwo}[2]{\ensuremath{\begin{pmatrix}#1\\#2\end{pmatrix}}}
\def\CC{{C\nolinebreak[4]\hspace{-.05em}\raisebox{.4ex}{\tiny\textbf{++}}}}
\newcommand{\amin}{\ensuremath{\alpha_\text{min}}\xspace}
\newcommand{\amax}{\ensuremath{\alpha_\text{max}}\xspace}
\newcommand{\anew}{\ensuremath{\alpha_N}\xspace}
\newcommand{\alowsup}{\ensuremath{\alpha^\text{sup}_L}\xspace}
\newcommand{\alowinf}{\ensuremath{\alpha^\text{inf}_L}\xspace}
\newcommand{\aupsup}{\ensuremath{\alpha^\text{sup}_U}\xspace}
\newcommand{\aupinf}{\ensuremath{\alpha^\text{inf}_U}\xspace}
\newcommand{\anewsup}{\ensuremath{\alpha^\text{sup}_N}\xspace}
\newcommand{\anewinf}{\ensuremath{\alpha^\text{inf}_N}\xspace}
\newcommand{\lmin}{\ensuremath{\lambda_\text{min}}\xspace}
\newcommand{\lmax}{\ensuremath{\lambda_\text{max}}\xspace}
\newcommand{\lnew}{\ensuremath{\lambda_N}\xspace}
\newcommand{\comblow}{\ensuremath{\combinatorialsol_L}\xspace}
\newcommand{\combnew}{\ensuremath{\combinatorialsol_N}\xspace}
\newcommand{\combup}{\ensuremath{\combinatorialsol_U}\xspace}
\newcommand{\sollow}{\ensuremath{\solution_L}\xspace}
\newcommand{\solnew}{\ensuremath{\solution_N}\xspace}
\newcommand{\solup}{\ensuremath{\solution_U}\xspace}
\newcommand{\done}[1]{#1}
\title{Bicriteria Polygon Aggregation with Arbitrary Shapes} 
\author{Lotte Blank}{University of Bonn, Germany}{lblank@uni-bonn.de}{https://orcid.org/0000-0002-6410-8323}{}
\author{David Eppstein}{University of California, Irvine \and\url{https://www.ics.uci.edu/~eppstein/}}{eppstein@uci.edu}{}{Research supported in part by NSF grant CCF-2212129.}
\author{Jan-Henrik Haunert}{University of Bonn, Germany}{haunert@igg.uni-bonn.de}{https://orcid.org/0000-0001-8005-943X}{}
\author{Herman Haverkort}{University of Bonn, Germany}{haverkort@uni-bonn.de}{}{}
\author{Benedikt Kolbe}{University of Bonn, Hausdorff Center for Mathematics, Lamarr Institute for Machine Learning and Artificial Intelligence, Germany}{bkolbe@uni-bonn.de}{https://orcid.org/0009-0005-0440-4912}{This research has partly been funded by the Federal Ministry of Education and Research of Germany and the state of North-Rhine Westphalia as part of the Lamarr-Institute for Machine Learning and Artificial Intelligence.} 
\author{Philip Mayer\footnote{corresponding author}}{University of Bonn, Germany}{pmayer@uni-bonn.de}{https://orcid.org/0009-0007-4800-7753}{}
\author{Petra Mutzel}{University of Bonn \and Lamarr Institute, Germany}{pmutzel@uni-bonn.de}{https://orcid.org/0000-0001-7621-971X}{}
\author{Alexander Naumann}{University of Bonn, Germany}{naumann@igg.uni-bonn.de}{https://orcid.org/0009-0009-5442-3336}{}
\author{Jonas Sauer}{Karlsruhe Institute of Technology, Germany}{jonas.sauer@kit.edu}{https://orcid.org/0000-0002-7196-7468}{}
\authorrunning{Blank, Eppstein, Haunert, Haverkort, Kolbe, Mayer, Mutzel, Naumann, Sauer} 
\keywords{polygon aggregation, fencing, minimum-perimeter clustering, map generalization, urban analytics} 
\begin{document}

\maketitle

\begin{abstract}
We study the problem of aggregating a set of polygons by covering them with disjoint representative regions, thereby inducing a clustering of the polygons. Equivalently, this can be seen as a fence enclosure problem, where the goal is to surround the polygons with a set of closed curves.
Our objective is to minimize a weighted sum of the total area and the total perimeter of the regions, which naturally extends other fencing problems and has applications in geographical information systems.
Previously, this objective was only studied in a restricted variant, in which the boundary curves of the regions must be selected from a fixed subdivision of the plane. 
It is natural to ask whether the problem is still tractable if this restriction is removed, allowing output regions to be bounded by arbitrary curves. We provide a positive answer in the form of an algorithm with runtime $\mathcal{\tilde{O}}(n^4)$, where $n$ is the number of input vertices. To achieve this, we fully characterize the optimal solutions by showing that their boundaries are composed of input edges and circular arcs of constant radius.
Additionally, we consider the parametric problem, where for every weighting factor we seek a solution that is optimal for that trade-off of area and perimeter. We show that $\bigO{n^2}$ combinatorial solutions suffice to describe all optimal solutions across all weighting factors, and provide both an exact algorithm and an approximation scheme. 
To make the algorithms scalable in practice, we develop engineering techniques that exploit structural properties of the solutions.
Our experimental evaluation on real-world data shows linear runtime in practice, even for the parametric variant.

\end{abstract}

\section{Introduction}\label{sec:intro}
\begin{figure}[!bp]
    \centering
    \includegraphics[width=0.95\textwidth]{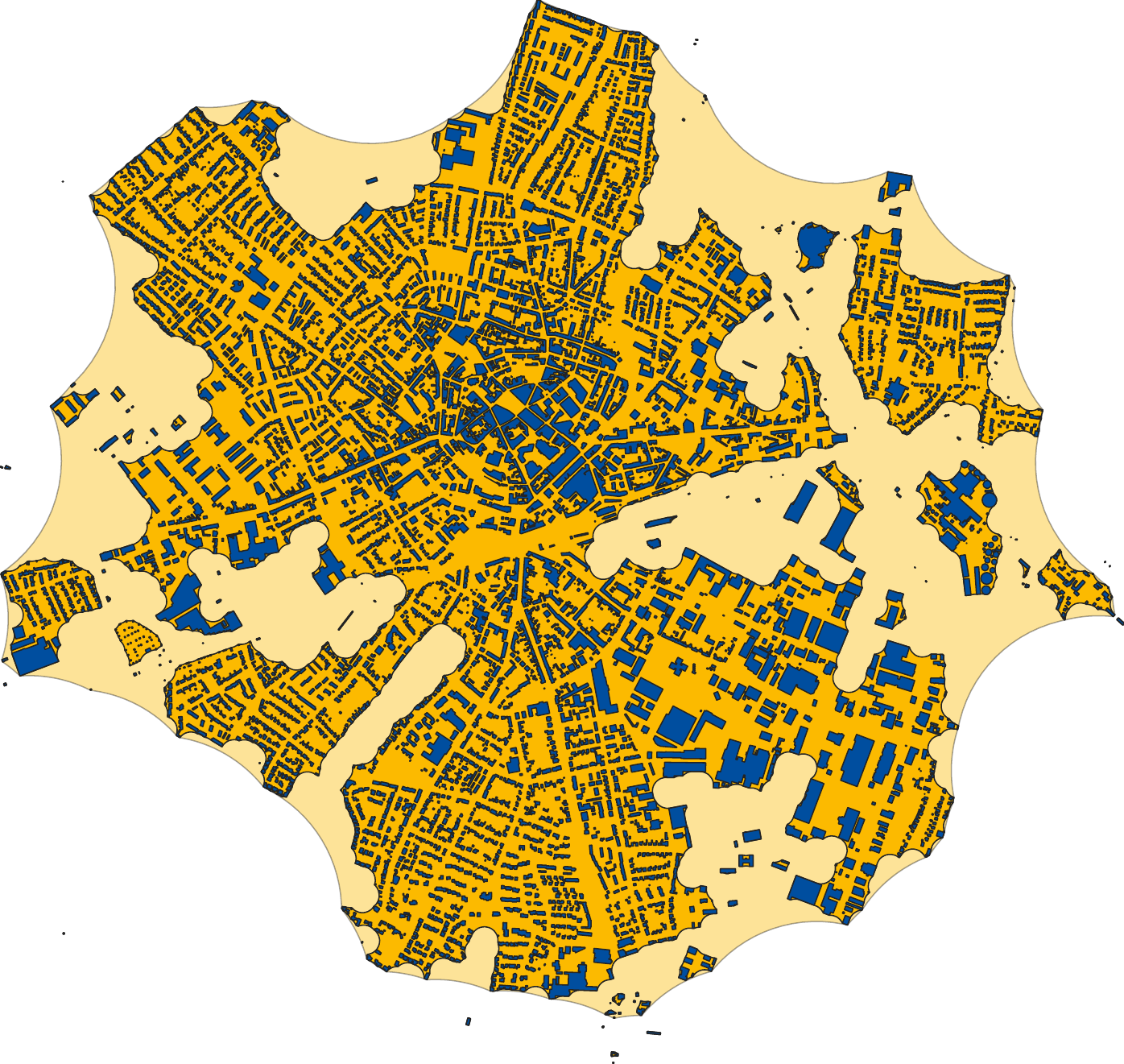}
    \caption{All figures show input polygons in blue and solutions in orange. Shown here is the unrestricted polygon aggregation problem with objective function~$g_\alpha$ for the town of Euskirchen. Two optimal solutions are superimposed in increasingly lighter shades of orange, with~$\alpha \in \{500,3000\}$.}
    \label{fig:initial_example}
\end{figure} 
    We study the task of aggregating a set of polygons~$\polygons$ in the plane by covering them with a set of disjoint representative regions~$\solution$, i.e., for every polygon~$B \in \polygons$ there must be a region~$S \in\solution$ with~$B \subseteq S$; see \Cref{fig:initial_example}. This problem formulation appears frequently in cartographic applications~\cite{rottmann2024bicritshapes, Peng_BuildingGeneralization_2017,Funke2024Smoot-70095,FME_area_amalgamator}.
    In the context of computational geometry, it can be classified as a type of \emph{fence enclosure}, or \emph{fencing}, problem~\cite{AbrahamsenABCMR18,DBLP:conf/wads/ArkinKM91,DBLP:journals/jal/CapoyleasRW91}. Here, given a set of points or polygons, we want to enclose them by a set of closed curves (fences) such that a certain objective is achieved.
    For example, Arkin et al.~\cite{DBLP:conf/wads/ArkinKM91} consider the problem of surrounding resource-rich plots of land by one or multiple fences while minimizing construction costs.

    In our work, we consider a bi-criteria objective function, which was first proposed by Rottmann et al.~\cite{rottmann2024bicritshapes} and is motivated by geographical information systems~(GIS): a solution~$\solution$ is optimal for some parameter~$\alpha \in [0,\infty]$ if it minimizes the linear combination  $g_\alpha(\solution)=\A(\solution)+\alpha\PP(\solution)$ of the overall area~$\A(\solution)$ and the overall perimeter~$\PP(\solution)$.   
    In the fencing analogy, the perimeter term corresponds to the cost of the fences and the area term to the cost of acquiring additional land. Both terms have been considered as separate objective functions for the fencing problem~\cite{DBLP:conf/wads/ArkinKM91,AbrahamsenABCMR18}, but not jointly.
    From a theoretical perspective, their combination into a bi-criteria cost function provides a natural generalization.

    The work by Rottmann et al.~\cite{rottmann2024bicritshapes}  considers the objective function~$g_\alpha$ in very restricted settings, leading to a constrained problem.
    Here, a subdivision~$\subdivision$ of the plane (e.g., a constrained Delaunay triangulation) is fixed and the boundary curves of the representative regions must be edges of~$\subdivision$. Rottmann et~al.\ give a transformation of this problem to a graph-cut problem, allowing it to be solved in~$\mathcal O(n^2 / \log n)$ time. 
    In addition to the problem where $\alpha$ is fixed, they also consider the parametric problem of finding a set~$\mathfrak{S}$ that contains an optimal solution for every value of~$\alpha$.
    {A powerful property of the objective function~$g_\alpha$ is that the optimal solutions are \emph{nested} with respect to~$\alpha$, i.e., an optimal solution for value $\alpha$ is contained in all optimal solutions for values $\alpha'>\alpha$~\cite{rottmann2024bicritshapes}.}
    {This implies that the optimal solutions for different values of~$\alpha$ induce a hierarchical clustering of the input polygons.}
     Using this, they show that~$\mathfrak{S}$ has size~$\bigO{n}$ and can be computed using a dichotomic scheme.

    Fixing the subdivision~$\subdivision$ makes both problems much easier to solve, but this comes at the cost of constraining the solution space. This raises a natural question: is the problem still tractable if the restrictions imposed by the subdivision are lifted?
     We answer this question by investigating a more general problem setting: any set of regions with differentiable boundary curves that covers~$\polygons$ is allowed, and we aim to find one that minimizes $g_\alpha$.
     
    \subparagraph*{GIS Applications.} Rottmann et al.\ introduced the bicriteria polygon aggregation problem in a GIS context, where polygons represent map features such as buildings or bodies of water. One potential application is map generalization, where a given map must be transformed into a less detailed map to be displayed at a smaller scale. This requires aggregation and omission of details. Especially at the transition from large-scale to small-scale maps, buildings must be grouped into areas representing entire settlements~\cite{Funke2024Smoot-70095,rottmann2024bicritshapes,Peng_BuildingGeneralization_2017}.
    The delineation of settlement areas is also an important task in urban analytics, e.g., to study urban growth and morphology, where the aggregations provide evidence of urban land use~\cite{Arribas2021,DeBellefon2021,Harig2021}.
    
    In such applications, the objective function~$g_\alpha$ models a trade-off between two natural goals: The output regions should be compact and have low shape complexity, which is captured by minimizing the perimeter~\cite{MacEachren_MapComplexity_1982,Harrie_MapReadability_2015}. On the other hand, they should remain as faithful as possible to the input polygons, which is captured by minimizing the area. The trade-off parameter~$\alpha$ corresponds to the scale at which the map is viewed: larger scales (zoomed in) prioritize faithfulness to the original input, whereas smaller scales (zoomed out) prioritize simplicity.
    {The nestedness property with respect to the parameter $\alpha$ guaranteed by our problem formulation provides visual stability across scale changes~\cite{Funke2024Smoot-70095,rottmann2024bicritshapes,Peng_BuildingGeneralization_2017}, which is a desirable property in map generalization.}
    Furthermore, the spatial patterns encoded in the induced (hierarchical) clustering are useful for data analysis tasks in urban morphology~\cite{AndersSester2000}, such as the classification and comparison of different settlement types~\cite{Jochem2021,Taubenbock2020}.

\subparagraph*{Our Contributions.}
\begin{itemize}
    \item We characterize solutions for the unrestricted polygon aggregation problem (denoted $\freeProblem$) and show that their boundary curves consist of parts of the input boundaries and circular arcs.
    Unlike previous approaches, the shapes of the boundary curves are not specified in advance.
    Rather, they emerge naturally as a result of minimizing the objective.
    \item Using this characterization, we construct a subdivision~$\subdivision_C$ of size~$\bigO{n^4}$ such that an optimal solution for~$\freeProblem$ can be obtained as an optimal solution for the subdivision-restricted problem~$\problem{\subdivision_C}$ in $\bigtO{n^4}$ time. 
    \item We show that the optimal solutions for~$\freeProblem$ are nested with respect to~$\alpha$. In addition, we show that the problem also exhibits a \emph{subset-nestedness} property: if new polygons are added to the input, the optimal solution will only grow.
    \item We propose a preprocessing algorithm that leverages subset-nestedness by iteratively merging sets of nearby polygons that belong to the same output region. This drastically improves the runtime, especially for large~$\alpha$ values. 
    \item We prove that a solution set~$\mathfrak{S}$ for the parametric unrestricted  aggregation problem contains~$\bigO{n^2}$ (combinatorial) solutions.
    \item  We give a fully polynomial-time approximation scheme (FPTAS) for the parametric problem. We also give an exact polynomial-time algorithm under the assumption that we have access to an oracle that returns an intersection point of two functions.
    \item We present engineering ideas, based on $\alpha$-nestedness, to improve the runtime in practice.
    
    \item We evaluate our algorithms in the map generalization task where real-world building footprints from towns and cities are aggregated. Here, we show that, despite the $\bigtO{n^4}$ worst-case runtime, our approach scales linearly with the instance size and~$\alpha$ in practice.
\end{itemize}

\section{Related Work}

\subparagraph*{Fencing.}
Several fencing problems for point sets in the plane were first introduced by Arkin, Khuller, and Mitchell~\cite{DBLP:conf/wads/ArkinKM91}, as well as Capoyleas, Rote, and Woeginger~\cite{DBLP:journals/jal/CapoyleasRW91}. Capoyleas et al.\ mainly study the $k$-clustering variant, where the goal is to enclose the input points with~$k$ fences, whereas Arkin et al.\ allow either exactly one or an arbitrary number of fences and consider several possible cost measures, including the fence length and the enclosed area.

Abrahamsen et al.~\cite{AbrahamsenABCMR18} give polynomial-time algorithms for two fencing variants: (1) Find a set of enclosing curves of minimal length where each curve additionally incurs a fixed opening cost. (2) Find a set of~$k$ enclosing curves of minimal length. 
Another fencing variant is \emph{geometric multicut}~\cite{DBLP:journals/dcg/AbrahamsenGLR20}, where input polygons are divided into~$k$ different color classes and the objective is to find a minimum-perimeter fencing that separates polygons with different colors from each other. For~$k=2$, this can be reduced to a min-cut problem and solved in~$\bigtO{n^4}$ time. For~$k \geq 3$, it is NP-hard, but an approximation algorithm is given. 
 
\subparagraph*{Polygon Aggregation.} Polygon aggregation can be viewed as a combination of two subproblems: partitioning the polygons into clusters and computing a representative shape for each cluster.
Several popular approaches specify the type of representative shape in advance, which reduces it to a clustering problem.
For example, minimum-perimeter fencing~\cite{AbrahamsenABCMR18} corresponds to choosing convex hulls as the representative shapes.
A drawback of this approach is that it tends to include large empty areas in the solution.
The same problem occurs when using the $\chi$-hull, a generalization of the convex hull for aggregation purposes~\cite{DUCKHAM20083224}.
Other generalizations of the convex hull, such as the concave hull introduced by Moreira and Santo~\cite{MoreiraS07}, or $\alpha$-shapes~\cite{DBLP:journals/tog/EdelsbrunnerM94}, mitigate this problem.
However, both of these approaches can lead to narrow bridges in the output (see Appendix~\ref{sec:alpha_shapes}). 
Recently, Funke and Storandt~\cite{Funke2024Smoot-70095} proposed an approach that first clusters the polygons based on a distance threshold~$\beta$ and then computes an~$\alpha$-shape for each cluster. Clusters with fewer points than some threshold value~$\mu$ are discarded altogether. 
Their approach runs in~$\bigO{n\log n}$ time. Although this allows for the aggregation of country-sized instances within minutes, a major drawback is that the three parameters ($\alpha, \beta$ and~$\mu$) must be specified by the user, and the narrow bridge problem is only alleviated if they are chosen appropriately for the given instance.

Two approaches deviate from the paradigm of choosing the type of representative shape in advance:
the method of Damen, van~Kreveld and Spaan~\cite{damen2008high}, which applies morphological closure and opening operators to the input polygons, and the previously mentioned approach by Rottmann et~al.~\cite{rottmann2024bicritshapes}. The parametric problem proposed by Rottmann et~al.\ has recently been reduced to the source-sink-monotone parametric min-cut problem~\cite{beines2024}, {which enables the use of faster algorithms in practice.}
Using this reduction, instances representing large cities can be solved optimally within seconds.


\subparagraph*{Map Generalization.} Besides polygon aggregation, map generalization involves a range of well-studied operators, including elimination, displacement, exaggeration, detail elimination, squaring, and typification (see \cite{MCMasterGeneraliazation1992,damen2008high} for further reading). 
Some approaches focus primarily on the simplification (and local aggregation) of polygons in large-scale maps, where individual buildings are still distinguishable~\cite{HaunertWolff2010Simplification,BuchinMeulemansSpeckmann2011,BuchinMWRS2016}.
However, maps of scales smaller than $1:50\,000$ usually do not show individual buildings but rather representatives of entire settlements~\cite{Touya17_UrbanAreaVisualization}. 

\section{Problem Definition}\label{sec:problem_def}
We introduce some basic notation. A curve is a continuous map $f\colon[0,1]\to \mathbb{R}^2$. We will also identify a curve with its image. The length of a piecewise differentiable curve $f$ is denoted by $\LL(f)$. The area of a (Lebesgue-measurable) closed region~$S$ is denoted by $\A(S)$.
The perimeter~$P(S)$ is the length of its boundary~$\partial S$.
If $S$ is a polygon, its vertices are denoted by $V(S)$, its edges by $E(S)$, and the convex hull of $S$ by $\conv(S)$.
Additionally, we assume that outer boundary curves of regions are oriented counter-clockwise and inner boundary curves (which enclose holes) are oriented clockwise. Thus, along each boundary curve, the interior of $S$ lies locally to the left. 
We extend the above definitions to sets $\solution = \{S_1, \ldots, S_k\}$ of (disjoint) closed regions or polygons. Most applications focus on polygonal inputs, but during the preprocessing algorithm discussed at the end of Section~\ref{sec:unrestricted:paper}, we have to consider intermediate sub-instances that are bounded by circular arcs in addition to straight lines.
\begin{definition}[Circular polygon]\label{def:circular-polygon:paper}
In a circular polygon, every edge, i.e., a boundary segment connecting two vertices, is a straight line or a circular arc that bends inwards and has central angle not larger than $\pi$. In an $\alpha$-circular polygon, all circular arcs have radius $\alpha$. 
\end{definition}
\begin{problem_description}[Unrestricted polygon aggregation problem $\freeProblem$]\label{prob:generalproblem:paper}
Let $\polygons$ be a set of interior-disjoint (circular) polygons and~$\alpha \geq 0$. A (feasible) solution to $\freeProblem$ is a set $\solution$ of disjoint closed regions with piecewise differentiable boundary curves that cover all input polygons, i.e., for every $B\in\polygons$ there exists an  $S\in\solution$ with $B\subseteq S$. A~solution is optimal if it minimizes the objective
\begin{align*}
    g_\alpha(\solution)=\A(\solution)+\alpha\PP(\solution)
\end{align*}
among all feasible solutions.
For~$\alpha = \infty$, the objective function becomes~$g_\infty(\solution) = \PP(\solution)$.
\end{problem_description}

This objective function is equivalent to the function $f_\lambda(\solution)=\lambda A(\solution)+(1-\lambda)P(\solution)$ from~\cite{rottmann2024bicritshapes} with the reparameterization~$\alpha=\frac{1-\lambda}{\lambda}$.
Additionally, most of the following results extend to inputs with arbitrary piecewise differentiable boundaries (cf. Appendix~\ref{sec:unrestricted:appendix}). 

\begin{definition}
A curve~$f$ from~$u$ to~$v$ is a \textbf{boundary piece} of a solution~$\solution$ if~$f \subseteq \partial \solution$ \done{and~$\solution$ includes} the area to the left of~$f$.
\done{A boundary piece}~$f$ \done{is} \textbf{constrained} if~$f \subseteq \partial \polygons$ and \textbf{free} if~$f \cap \partial \polygons \subseteq \{u,v\}$.
The set of (inclusion-)maximal free pieces in~$\solution$ is denoted by~$F_\polygons(\solution)$.
\end{definition}
Note that every boundary curve can be described as an alternating sequence of constrained and maximal free boundary pieces. The constrained pieces may consist of a single point; see \Cref{fig:free_piece_description:paper} for an example.
\begin{figure}[tbh]
\centering
\includegraphics[scale=0.9]{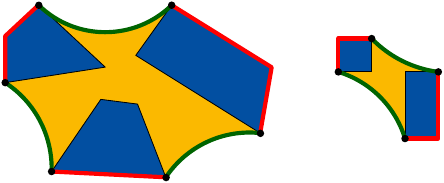}
\caption{An example solution in which every solution region is a circular polygon. Free pieces are shown in dark green, and constrained pieces are shown in red. The vertices are the endpoints of the inclusion-maximal free pieces.}
\label{fig:free_piece_description:paper}
\end{figure}

\section{Solving the Unrestricted Polygon Aggregation Problem}\label{sec:unrestricted:paper}
In this section, we investigate the unrestricted problem~$\freeProblem$. We begin by characterizing the shape of the free boundary pieces and show that only a polynomial number of curves need to be inspected to find an optimal solution. We then use these curves to build a subdivision $\subdivision_C$ of $\conv(\polygons)\setminus\polygons$ that has polynomial size, thereby reducing $\freeProblem$ to the subdivision-restricted problem~$\problem{\subdivision_C}$.
This allows us to apply the minimum cut-based algorithm proposed in~\cite{rottmann2024bicritshapes} to solve the unrestricted problem in polynomial time.
Finally, we note that the optimal solutions to~$\freeProblem$ are nested with respect to~$\alpha$ and with respect to introducing new input polygons.
We sketch some proofs and omit others; a full version can be found in~\Cref{{sec:unrestricted:appendix}}.

{We start by characterizing the optimal solutions of the corner cases~$\alpha=0$ and $\alpha=\infty$.
In the former, the input polygons~$\polygons$ form an optimal solution because they minimize the area.
In the latter, the perimeter must be minimized, so every region~$S$ in an optimal solution must be the convex hull~$\conv(\polygons(S))$ of its contained (circular) polygons~$\polygons(S)$.
It is easy to see that the general case~$\alpha \in (0,\infty)$ can be reduced to the case~$\alpha = 1$ by scaling the input.
\begin{restatable}{observation}{observationScale}
\label{obs:scale:paper}
For a set~$\mathcal{R}$ of regions and~$c > 0$, let~$\sigma_c(\mathcal{R})$ denote the same set of regions scaled uniformly by the factor~$c$.
A solution~$\solution$ for~$\freeProblem$ with input polygons~$\polygons$ and~$\alpha \in (0,\infty)$ is optimal iff~$\sigma_{1/\alpha}(\solution)$ is optimal for~$P^1_F$ with input polygons $\sigma_{1/\alpha}(\polygons)$.
\end{restatable}
Thus, we only consider the cost function $g_1$ for the proofs in this section.
\Cref{obs:scale:paper} describes a crucial property of the problem formulation: the parameter~$\alpha$ corresponds directly to the scale at which the input polygons are viewed.
It follows that the set of solutions that are optimal for at least one value of~$\alpha$ does not change when scaling the input.

We continue by investigating which curves locally optimize the area-perimeter trade-off.
\begin{restatable}{definition}{efficiency}\label{def:efficiency:paper}
Let~$f$ be a curve from $u$ to $v$ that does not intersect the line through~$u$ and~$v$ properly.
Let~$R^{{uv}}_f$ denote the region enclosed by $\overline{uv}$ and $f$.
The \textbf{efficiency} of~$f$ is defined as
\[
e_{uv}(f)=\begin{cases}
-\LL (f)+\LL (\overline{uv})+\A(R^{uv}_f) & \text{if $f$ is to the left of~$\overrightarrow{uv}$,}\\
-\LL (f)+\LL (\overline{uv})-\A(R^{uv}_f) & \text{if $f$ is to the right of~$\overrightarrow{uv}$.}
\end{cases}
\]
\end{restatable}
If we consider a solution that uses~$\overrightarrow{uv}$ as a free boundary piece, then~$e_{uv}(f)$ is the improvement in the objective value if we replace~$\overrightarrow{uv}$ with~$f$.
We now characterize the curves that have maximum efficiency and show that any solution that uses a curve of non-maximum efficiency can be improved locally.
\begin{figure}
    \centering

    \begin{minipage}[t]{0.6\textwidth}
        \centering
        \includegraphics[width=0.45\textwidth]{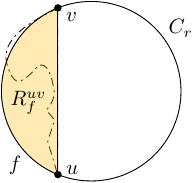}
        \caption{For fixed $\ell(f)$, the enclosed area and in turn the efficiency is maximized if $f$ is circular.}\label{fig:ArcIsBest:paper}
    \end{minipage}
    \hfill
    \begin{minipage}[t]{0.38\textwidth}
        \centering
        \includegraphics{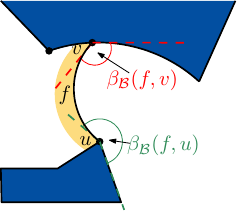}
        \captionsetup{textformat=simple}\caption{Two examples for the attachment angle $\beta_\polygons(f,x)$ .}\label{fig:outer_arc_angle:paper}
    \end{minipage}
\end{figure}
\begin{restatable}{lemma}{efficiencyCircular}
    \label{lemma:arcs_enclose_much:paper}
    Let $u,v$ be two points. For fixed length $l$, the (unique) most efficient curve~$f$ from $u$ to $v$ with $\LL(f)=l$ that does not intersect the line through $u$ and $v$ properly is a circular arc that lies to the left of~$\overrightarrow{uv}$.
\end{restatable}
\begin{proof}
    Let $C_r$ be the circular arc with radius $r$ from $u$ to $v$ in the half-plane to the left of~$\overrightarrow{uv}$ 
    such that $\LL(C_r)+\LL(f)=2\pi r$ (see~\cref{fig:ArcIsBest:paper}). 
    The curve~$f$ lies to the left of~$\overrightarrow{uv}$ since the efficiency is non-positive otherwise. Hence, $C_r$ together with $f$ is a closed curve with perimeter $P=2\pi r$. The isoperimetric inequality~\cite{Steiner1838, hurwitzisoperimetric} states that for every closed curve~$C$ with enclosed region~$R$, it holds that $4\pi \A(R)\leq \LL(C)^2$. Furthermore, equality holds if and only if $C$ is a circle.
    It follows that $C_r$ together with $f$ has to be a circle to maximize the enclosed area and in turn maximize the efficiency. Consequently, $f$ is a circular arc.
\end{proof}
In the remainder of this paper, we denote by $\sarc{r}{uv}$ the circular arc of radius $r$ and central angle $\theta \leq \pi$ that connects the points $u$ and $v$ and lies to the left of~$\overrightarrow{uv}$.
The next lemma follows by combining \Cref{lemma:arcs_enclose_much:paper} with an analytical argument and a local improvement step. 


\begin{restatable}{lemma}{localImprovements}
\label{lemma:local_structure:paper}
    Let $\solution$ be a solution for \done{$P_F^1$} and $u,v$ two points on the boundary of~$\solution$.
    If~$u$ and~$v$ are connected by a free piece other than~\done{$C^{uv}_{1}$},
    then $\solution$ is not optimal.
\end{restatable}

We briefly describe how free boundary pieces attach to the input boundaries. 
At any point where a curve meets an input boundary, we compare their directions via their tangents; at corner points, we use the one-sided tangents of the adjacent boundary segments. 
This allows us to measure the counterclockwise angle between the free piece and the input boundary at the point of attachment. We denote this angle by $\beta_\polygons(f,x)$ and interpret it as the exterior angle at $x$ of a solution region that uses $f$; see Figure~\ref{fig:outer_arc_angle:paper}. 
A formal definition is given in Appendix~\ref{sec:unrestricted:appendix}.
Using this, we characterize the free pieces that appear in an optimal solution.

\begin{proposition}\label{prop:arc_properties:paper}
    Let $\polygons$ be an instance of~$\freeProblem$ \done{for~$\alpha\in(0,\infty)$}. There exists a solution $\solution$ that minimizes $g_\alpha(\solution)$, such that the boundary $\partial \solution$ is a collection of piecewise differentiable curves and every free boundary piece $f$ with endpoints $u$ and $v$ has the following properties:
    \begin{description}
    \item[P1:] The distance $d=\LL(\overline{uv})$ of the endpoints is at most $2\alpha$.  
    \done{\item[P2:] The piece $f$ is circular with radius $\alpha$.} 
    \item[P3:] The central angle $\theta$ of $f$ is at most $\pi$.
    \item[P4:] If \(x \in \{u,v\}\) lies on the input boundary~$\partial \polygons$, then \(\beta_\polygons(f,x) \ge \pi\). 
    \done{\item[P5:] The piece $f$ bends to the left (i.e., it is to the left of $\overrightarrow{uv}$).}
    \end{description}
\end{proposition}

\begin{proof}[Proof sketch]
    The existence of an optimal solution follows from results in geometric measure theory (cf.~\Cref{prop:there_exists_a_solution:full} in~\Cref{{sec:unrestricted:appendix}}).
    Properties P1, P2, and P5 follow directly from~\Cref{lemma:local_structure:paper}.
    If~$f$ has a central angle~$\theta > \pi$, we can replace it by the corresponding arc~$f'$ with central angle~$2\pi - \theta$ that bends in the same direction. A comparison shows that the area gain is always outweighed by the perimeter loss, which implies P3.
    For P4, one can show (cf.\ Appendix~\ref{sec:unrestricted:appendix}) that if \(\beta_\polygons(f,x) < \pi\), then a generalized version of~\Cref{lemma:local_structure:paper} applies to the partially constrained piece formed by $f$ concatenated with the next segment on~\(\partial B\).
\end{proof}
With this we can define a set~$\allowedArcs$ of free pieces such that there exists an optimal solution that uses only maximal free pieces from~$\allowedArcs$. 
\begin{definition}\label{def:arcs:paper}
    Let~$\polygons$ be an instance of~$\freeProblem$. For~$\alpha\in(0,\infty)$, let~$\allowedArcs$ denote the set of circular arcs~$f=\sarc{\alpha}{uv}$ such that 
    \begin{enumerate}
        \item the endpoints~$u$ and~$v$ lie on the input boundary~$\partial \polygons$, 
        \item {$u$ and $v$ are not in the interior of two parallel straight-line edges,
        \item if $u \in V(\polygons)$ and $v$ lies in the interior of a circular boundary arc $h$ (or vice versa), then $u$(or $v$) is not the center point of $h$,}
        \item the arc~$f$ fulfills properties P1--P5 of~\Cref{prop:arc_properties:paper}, and
        \item the arc~$f$ does not intersect~$\polygons$ properly.
    \end{enumerate}
    We define~$F_0(\polygons)=\emptyset$ and~$F_\infty(\polygons)$ as the set of straight-line segments~$\overline{uv}$ with~$u,v \in V(\polygons)$ that do not intersect~$\polygons$ properly.
    Let~$\subdivision_C$ denote the subdivision of~$\conv(\polygons)\setminus\polygons$ induced by~$\allowedArcs$ and~$E(\polygons)$, and let~$|\subdivision_C|$ denote the total number of vertices, edges, and cells in~$\subdivision_C$.
\end{definition}
Properties (2) and (3) handle ambiguous cases in which an infinite number of arcs connecting two boundary objects with the same objective value may exist (cf.\ Lemma~\ref{lemma:distance2}).
We show that~$\subdivision_C$ has polynomial size and is sufficient to construct an optimal solution.
\begin{restatable}{lemma}{freeSubdivision}
For an instance $\polygons$ of problem $\freeProblem$ with~$|V(\polygons)|=n$, the subdivision~$\subdivision_C$ has size~\done{$|\subdivision_C|\in\bigO{n^4}$} and every optimal solution for~$\problem{\subdivision_C}$ is an optimal solution for~$\freeProblem$.
\label{lemma:freeSubdivision:paper}
\end{restatable}

\begin{proof}[Proof sketch]

Given a pair $b_1,b_2$ of boundary objects, we consider the possible center locations for an arc 
$f = \sarc{\alpha}{uv} \in {F_\alpha}(\polygons)$ with $u\in b_1$ and $v\in b_2$. Simple geometric arguments show that these center locations are intersections of line segments, circles, and circular arcs. 
After resolving ambiguous center locations (such as coinciding line segments and circles), each pair of boundary objects induces at most two arcs in ${F_\alpha}(\polygons)$. 
Because the number of boundary objects is in~$\bigO{n}$, it follows that $\lvert {F_\alpha}(\polygons) \rvert \in \bigO{n^2}$.
Each pair of arcs in~$\allowedArcs \cup E(\polygons)$ intersects properly at most a constant number of times, so the number of vertices, \done{edges} and cells in~$\subdivision_C$ is in $\bigO{n^4}$.
By construction, every boundary piece in~$\solution$ is a path in~$\subdivision_C$.
Hence, $\solution$ is a solution for~$\problem{\subdivision_C}$.
Every optimal solution for~$\problem{\subdivision_C}$ is also a solution for~$\freeProblem$.
To be optimal for~$\problem{\subdivision_C}$, it must have the same objective value as~$\solution$, so it is also optimal for~$\freeProblem$.
\end{proof}
Note that although there may be two arcs connecting the same pair~$b_1,b_2$ of boundary objects, they are uniquely defined by their direction because one of the arcs starts at $b_1$ and the other one at $b_2$. 
Thus, for input edges \( e, h \in E(\polygons) \), we can write \( \sarc{r}{eh} \) to denote the circular arc~$\sarc{r}{uv}$ that satisfies P4 in \cref{prop:arc_properties:paper}, starts at~$u$ on~$e$ and ends at~$v$ on~$h$.

By leveraging planar multi-source-multi-sink min-cut algorithms \cite{DBLP:journals/siamcomp/BorradaileKMNW17,DBLP:conf/icalp/GawrychowskiK18} and a graph transformation adapted from~\cite{DBLP:journals/dcg/AbrahamsenGLR20}, we show in \cref{sec:Rottmann_transformation} that for any subdivision $\mathcal{D}$, the problem $\problem{\subdivision}$ where we only consider free pieces that are paths in $\mathcal{D}$ can be solved in quasilinear-time. Applying these results to $\problem{\subdivision_C}$ yields a polynomial-time algorithm for~$\freeProblem$.
\begin{restatable}{theorem}{algorithm}
\done{\label{theorem:algorithm:paper}
  For an instance $\polygons$ of problem $\freeProblem$ with~$|V(\polygons)|=n$, an optimal solution can be computed in~$  \tilde{\mathcal{O}}\left(|\subdivision_C|\right)=\mathcal{O}\left(|\subdivision_C| \frac{\log^3|\subdivision_C|}{\log^2 \log |\subdivision_C|}\right)\subseteq \mathcal{O}\left(n^4 \frac{\log^3n}{\log^2 \log n}\right)$ time.}
\end{restatable}

Finally, we observe that optimal solutions are nested with respect to changing~$\alpha$ as well as adding polygons, properties we will crucially exploit to obtain a practical algorithm.
\begin{restatable}[$\alpha$-nestedness]{proposition}{alphaNestedness}\label{prop:alpha-nested:paper}
    For an instance~$\polygons$ of~$\freeProblem$ and two values~$0 \leq \alpha < \alpha'$, let~$\solution$ and~$\solution'$ be the respective optimal solutions.
    Then~$\solution\subseteq \solution'$.
\end{restatable}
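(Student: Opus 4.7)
The plan is to adapt the uncrossing argument of Lemma~2 in~\cite{rottmann2024bicritshapes}. The argument needs only two structural facts about $\freeProblem$: (i)~the pointwise intersection and union of two feasible solutions are themselves feasible, and (ii)~the area functional is modular and the perimeter functional is submodular under these operations. The conclusion then follows from a short exchange argument that exploits the optimality of~$\solution$ at~$\alpha$ and of~$\solution'$ at~$\alpha'$.

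To verify~(i), I would view $\solution$ and $\solution'$ as closed point sets (the unions of their constituent regions). For every $B\in\polygons$, each of these sets contains a region covering~$B$, and since~$B$ is connected, it lies in a single connected component of both $\solution_\cap:=\solution\cap\solution'$ and $\solution_\cup:=\solution\cup\solution'$. Decomposing these point sets into connected components therefore yields feasible solutions. Property~(ii) is a classical fact for sets of finite perimeter, and its applicability in our setting is guaranteed by~\Cref{prop:arc_properties}, which ensures the existence of an optimal solution with rectifiable boundary.

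Combining these with the optimality inequalities $g_\alpha(\solution)\leq g_\alpha(\solution_\cap)$ and $g_{\alpha'}(\solution')\leq g_{\alpha'}(\solution_\cup)$, and cancelling area terms via modularity, one obtains
\begin{equation*}
\alpha\PP(\solution)+\alpha'\PP(\solution')\leq\alpha\PP(\solution_\cap)+\alpha'\PP(\solution_\cup).
\end{equation*}
Adding $\alpha'$ times the submodularity inequality $\PP(\solution_\cap)+\PP(\solution_\cup)\leq\PP(\solution)+\PP(\solution')$ and simplifying yields $(\alpha'-\alpha)\PP(\solution_\cap)\leq(\alpha'-\alpha)\PP(\solution)$, so $\PP(\solution_\cap)\leq\PP(\solution)$ since $\alpha<\alpha'$. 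Together with $\A(\solution_\cap)\leq\A(\solution)$, which follows from $\solution_\cap\subseteq\solution$, this gives $g_\alpha(\solution_\cap)\leq g_\alpha(\solution)$; optimality of $\solution$ forces equality and, in particular, $\A(\solution_\cap)=\A(\solution)$. Since $\solution_\cap\subseteq\solution$ are closed regions whose difference has measure zero, they coincide, yielding $\solution=\solution_\cap\subseteq\solution'$.

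The main obstacle will be the measure-theoretic step at the end: verifying that a zero-measure difference between closed solution regions implies set-theoretic equality. This relies on the boundary regularity provided by~\Cref{prop:arc_properties}. The corner case $\alpha'=\infty$ requires a small modification in which $\PP(\solution')\leq\PP(\solution_\cup)$ replaces the $\alpha'$-summand above, but the remainder of the exchange argument is unchanged.
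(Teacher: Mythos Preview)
Your proposal is correct and follows essentially the same approach as the paper: the paper merely states that the proof of Lemma~2 in~\cite{rottmann2024bicritshapes} carries over because intersections and unions of feasible solutions are again feasible, and you have spelled out exactly that uncrossing/exchange argument (modularity of area, submodularity of perimeter, followed by the optimality inequalities). You also correctly identify the one step the paper leaves implicit---upgrading $A(\solution_\cap)=A(\solution)$ to $\solution_\cap=\solution$ via the boundary regularity supplied by~\Cref{prop:arc_properties}.
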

\begin{restatable}[subset-nestedness]{proposition}{subsetNestedness}
\label{lemma:subset_nestedness:paper}
    Let $\polygons$ and $\polygons'$ be two sets of input polygons, such that $\polygons' \subseteq \polygons$. For a given parameter $\alpha$, let $\solution_{\polygons'}$ be a solution for $\polygons'$ minimizing $g_\alpha$. Then there exists a solution $\solution_{\polygons}$ minimizing $g_\alpha$ for $\polygons$, such that $\solution_{\polygons'} \subseteq \solution_{\polygons}$. 
\end{restatable}

\subparagraph{A Local Preprocessing Algorithm.}\label{sec:preprocessing:paper}
The main contributor to the runtime of the algorithm is the size of the subdivision $\subdivision_C$, which is quadratic in the size of the arc set~$F_\alpha(\polygons)$. While $|F_\alpha(\polygons)|$ is manageable for small $\alpha$, it grows significantly with increasing $\alpha$, leading to subdivisions that become intractable for large $\alpha$. This is because more pairs of potential endpoints satisfy the distance bound of~$2\alpha$ imposed by property~P1.
This effect is partially offset by $\alpha$-nestedness: as $\alpha$ increases, regions in an optimal solution grow, rendering more arcs irrelevant since they intersect an optimal region. Although the exact optimal regions are unknown a priori, subset-nestedness allows us to approximate them from below.

Our preprocessing follows a simple idea: iteratively merge nearby polygons that belong to the same region in an optimal solution. Let $\optimizer_\alpha(\polygons)$ denote the solution of our algorithm applied to $\polygons$. First,  we replace each polygon $B\in \polygons$ by $\optimizer_\alpha(\{B\})$. Then we compute a Delaunay triangulation $\text{DT}$ on the centroids of $\polygons$ and process edges $e \in \text{DT}$ in increasing order of length. For each edge connecting $B_i$ and $B_j$, we compute $\optimizer_\alpha(\{B_i,B_j\})$. If this yields a single region, we replace $B_i$ and $B_j$ by the merged region $\optimizer_\alpha(\{B_i,B_j\})$. Due to subset-nestedness (\Cref{lemma:subset_nestedness:paper}), this merged representative is a subset of the globally optimal solution. Note that, during the preprocessing it may happen that some merged representatives of different polygon clusters overlap properly. We show in \Cref{sec:unrestricted:appendix}  (cf. \Cref{theorem:circ_polygons}) that our algorithm can also handle this case without increasing the runtime.

In theory, the preprocessing may increase the runtime by a factor of $\bigO{n}$, but in practice the subproblems are solved quickly. Further details are given in \Cref{sec:preprocessing:appendix}.

\section{Solving the Parametric Problem}\label{sec:all_solutions:paper}
We turn to the parametric variant of the problem~$\freeProblem$, in which the objective is to find an optimal solution for every~$\alpha\in[0,\infty]$.
For the subdivision-restricted variant, $\alpha$-nestedness implies that there are~$\mathcal O(n)$ optimal solutions~\cite{rottmann2024bicritshapes}, but this implication does not hold for~$\freeProblem$.
Because our definition of solutions is geometric, even an infinitesimal change in $\alpha$ alters the radii of the circular arcs and, in turn, changes the optimal solution.
Hence, the number of optimal geometric solutions is infinite.
To remedy this, we derive a solution description that is geometry-independent and is strictly given combinatorially with respect to the polygon edges and vertices. We show that an optimal parametric solution contains~$\mathcal{O}(n^2)$ combinatorial solutions, and we present a bisection scheme to compute such a solution. We only give intuitive definitions and omit technical details and proofs; these are given in Appendix~\ref{sec:parametric:appendix}.
\subparagraph*{The Parametric Problem.}
\begin{figure}
\centering
\includegraphics{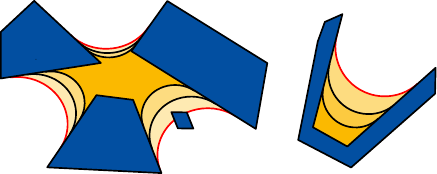}
\caption{Three realizations of the same combinatorial solution for different~$\alpha$. The arcs move with~$\alpha$ but connect the same edge pairs. The red realization is invalid because it intersects a polygon.}
\label{fig:combinatorial_paper:paper}
\end{figure}
We already established that for fixed $\alpha$, every free arc $f\in\allowedArcs$ is defined by the ordered pair $x,y$ of boundary objects that are connected by it. Hence,
for two objects (vertices or edges)~$x,y$ and two values~$\alpha,\alpha'$, the arcs~$\sarc{\alpha}{xy}$ and~$\sarc{\alpha'}{xy}$ are different geometric realizations of the same combinatorial arc~$\sarc{}{xy}$. We define a \emph{combinatorial boundary} as a cyclic sequence of combinatorial arcs and (combinatorial) pieces of the input boundary. Its realization for~$\alpha$ is the closed piecewise curve formed by the respective realizations of the combinatorial arcs and the pieces of the input boundary connecting them.
A \emph{combinatorial region}~$R$ is described by a set of combinatorial boundaries, one of which is designated as the outer boundary and the rest as inner boundaries.
We call~$R$ \emph{valid} for~$\alpha$ if its realization~$R[\alpha]$ forms a region.
A \emph{combinatorial solution}~$\combinatorialsol$ is a set of combinatorial regions; see~\Cref{fig:combinatorial_paper:paper}.
It is \emph{valid} for~$\alpha$ if its realization~$\combinatorialsol[\alpha]$ is a valid geometric solution for~$\freeProblem$.
Notably, $\alpha$-nestedness also holds for realizations of the same combinatorial region.%
\begin{corollary}\label{cor:combinatorial-nested_paper:paper}
    For a combinatorial region~$R$ and two values~$\alpha_1<\alpha_2$ such that~$R$ is valid, we have~$R[\alpha_1] \subseteq R[\alpha_2]$.
\end{corollary}
With this, we can show the following, which is crucial for solving the parametric problem.

\begin{restatable}{proposition}{solutionValidityRange}\label{prop:solution-validity-range_paper:paper}
    For a combinatorial solution~$\combinatorialsol$, the set~$\mathcal{I}(\combinatorialsol)\subseteq(0,\infty)$ of values for which~$\combinatorialsol$ is valid is a single open interval.
\end{restatable}

Using combinatorial solutions, we define the parametric problem.%
\begin{problem_description}[The Parametric Problem]\label{problem:problem_2:paper}
    Let~$\polygons$ be a set of interior-disjoint polygons in the plane. A solution to the parametric unrestricted polygon aggregation problem $\biProblem$ is a set $\mathfrak{K}$ of combinatorial solutions such that for every $\alpha\in[0,\infty]$, there exists a combinatorial solution $\combinatorialsol\in\mathfrak{K}$ such that $\combinatorialsolevaluated{\alpha}$ is an optimal solution for $\freeProblem$. 
\end{problem_description}
By exploiting  $\alpha$-nestedness (see~\cref{prop:alpha-nested:paper}), we show that the set $\mathfrak{K}$ has polynomial size.%
\begin{restatable}{proposition}{parametricSolutions}\label{prop:parametric-solutions:paper}
    Given a set $\polygons$ of interior-disjoint polygons in the plane with $V(\polygons)=n$, every solution~$\mathfrak{K}$ for $\biProblem$ has size $\bigO{n^2}$.
\end{restatable}
\begin{proof} [Proof sketch]
    Iterate over the solutions in $\mathfrak{K}$ in increasing order of the~$\alpha$ values for which they are optimal.
    We can show that each new solution either introduces a free arc for the first time or merges two regions. Taken together, both events happen at most $\bigO{n^2}$ times.
\end{proof}

To solve the parametric problem, we consider the objective values of geometric and combinatorial solutions as functions parameterized in~$\alpha$.
For a geometric solution~$\solution$, the parametric objective function given by~$g(\solution)(\alpha):=g_\alpha(\solution)$ is linear and its slope is the perimeter~$\PP(\solution)$.
For a combinatorial solution~$\combinatorialsol$, we define
\[
g(\combinatorialsol)(\alpha):=\begin{cases}
g_\alpha(\combinatorialsol[\alpha]) & \text{if } \alpha\in\mathcal{I}(\combinatorialsol),\\
\infty & \text{otherwise.}
\end{cases}
\]
It is easy to see that for all $\alpha \in \mathcal{I}(\combinatorialsol)$, the slope of $\combinatorialsol$ is monotonically decreasing. In particular, $g(\combinatorialsol[\alpha])$ is the tangent of $g(\combinatorialsol)$ at $\alpha$.
The objective of~$\biProblem$ can be stated as finding the lower envelope of the functions~$g(\combinatorialsol)$ for all possible combinatorial solutions~$\combinatorialsol$.

For the subdivision-restricted problem variant, the number of geometric solutions is finite and it can be solved using a \emph{dichotomic scheme}~\cite{rottmann2024bicritshapes}.
Given an interval~$[\amin,\amax]$ and optimal solutions~$\sollow$ at~$\amin$ and~$\solup$ at~$\amax$, it explores the interval by recursive bisection.
At the crossing point~$\anew$ of~$g(\sollow)$ and~$g(\solup)$, it computes an optimal solution~$\solnew$.
If~$\solnew$ is better than~$\sollow$ and~$\solup$ for~$\anew$, the algorithm adds~$\solnew$ to the solution set and recurses on the intervals~$[\amin,\anew]$ and~$[\anew,\amax]$.
Otherwise, it terminates with the solution set~$\{ \sollow, \solup \}$.

The correctness of the termination condition requires that a function cannot appear on the lower envelope more than once.
In our scenario, this is not immediately clear because the objective functions of the combinatorial solutions are not linear and they jump to~$\infty$ outside of the validity interval.
However, we can show that this is indeed the case by using another nesting property (cf.~\Cref{lemma:valid_solution_containment}) that relates realizations of combinatorial solutions.
\begin{restatable}{proposition}{lowerEnvelope}\label{lem:lower-envelope:paper}
    For every combinatorial solution~$\combinatorialsol$, the set of values~$\alpha\in[0,\infty]$ such that~$\combinatorialsol[\alpha]$ is optimal is a single interval.
\end{restatable}
{
Under the assumption that  we have an oracle which, given solutions $\combinatorialsol_1$,~$\combinatorialsol_2$, returns the exact intersection point of $g(\combinatorialsol_1)$ and~$g(\combinatorialsol_2)$, we present a polynomial-time algorithm for~$\biProblem$ in Appendix~\ref{sec:dichotomic:exact}.
In practice, we are satisfied with approximating the intersection point, as numerical precision is limited anyway.}

\subparagraph{An Approximation Scheme.}
\begin{algorithm2e}
    \caption{Approximative dichotomic scheme for~$\biProblem$.}\label{alg:chord:heuristic:new:paper}
    $\comblow \gets$ Combinatorial solution optimal for~$\lambda = 0$\;
    Report~$\comblow$\;
    \BlankLine
    $\combup \gets$ Combinatorial solution optimal for~$\lambda = 1$\;
    $\Recurse(\comblow, 0, \combup, 1)$\;
    Report~$\combup$\;
    \BlankLine
    \myproc{$\Recurse(\comblow, \lmin, \combup, \lmax)$}{
        \lIf{$\comblow =\combup$ \KwOr $f(\comblow[\lmin]) = f(\combup[\lmax])$\label{alg:chord:heuristic:new:identical:paper}}{\Return}
    $\lnew \gets (\lmax - \lmin)/2$\;
    $\combnew \gets$ Combinatorial solution optimal for $\lnew$\label{alg:chord:heuristic:new:optimize:paper}\;
    \tcp{Note $f(\combinatorialsol)(\lambda)=\infty$ if not valid}
    $\text{dom}_L\gets f(\combnew)(\lnew) < \min(f(\comblow)(\lnew), f(\comblow[\lmin])(\lnew)/(1+\varepsilon))$\;
    $\text{dom}_U\gets f(\combnew)(\lnew) < \min(f(\combup)(\lnew), f(\combup[\lmax])(\lnew)/(1+\varepsilon))$\;
    \lIf{$\text{dom}_L$}{
        $\Recurse(\comblow, \lmin, \combnew, \lnew)$
    }
    \lIf{$\text{dom}_L$ $\KwAnd$ $\text{dom}_U$}{Report~$\combnew$}
     \lIf{$\text{dom}_U$}{
        $\Recurse(\combnew, \lnew, \combup, \lmax)$
    }
    }
\end{algorithm2e}
\subparagraph*{Approximation Scheme.}
In this section, we define the notion of an approximate solution to the parametric problem, and give an approximation scheme for computing such a solution.%
\begin{definition}
Let $\varepsilon>0$. A $(1+\varepsilon)$-approximate solution is a set $\mathfrak{K}$ of combinatorial solutions such that for every $\alpha\in[0,\infty]$, there exists $\combinatorialsol\in\mathfrak{K}$ and a value $\alpha'$ such that $g_\alpha(\combinatorialsol[\alpha']) \leq (1+\varepsilon) \cdot \text{OPT}(\alpha)$, where~$\text{OPT}(\alpha)$ is the value of an optimal solution to~$\freeProblem$.
\end{definition}
To avoid dealing with the unbounded $\alpha$-domain, we switch to the equivalent objective function $f_\lambda(\solution) = (1-\lambda)\cdot \A(\solution) + \lambda \cdot \PP(\solution)$ with $\lambda \in [0,1]$. In contrast to the standard dichotomic scheme, our approach (given in~\cref{alg:chord:heuristic:new:paper}) does not compute an intersection point but instead performs a binary bisection with $\lnew = (\lmax + \lmin)/2$. Consequently, $\comblow$ and~$\combup$ generally do not have the same objective value at~$\lnew$, since it is not the crossing point. Therefore, the algorithm performs two independent comparisons and may recurse on only one side. Since binary bisection does not guarantee reaching the exact crossing point, we introduce an approximation test: After computing the optimal solution $\combnew$ at~$\lnew$, we compare it to the evaluated geometric solutions~$\comblow[\lmin]$ and~$\combup[\lmax]$ at~$\lnew$. If one of them is within a factor of $(1+\varepsilon)$ of~$\combnew$, we do not recurse on that side.

{We can show that the proposed algorithm is a fully polynomial-time approximation scheme~(FPTAS).
In addition to~$n$ and~$1/\varepsilon$, the runtime also depends on the smoothness of the function~$f(\combinatorialsol)$, which in turn depends on geometric properties of the input.
To bound this, let~$\delta_\text{seg}(\polygons)$ be the minimum distance between any vertex~$v$ and any polygon edge not incident to~$v$.
Intuitively, this is a lower bound on the thickness of each polygon and the minimum distance between any pair of polygons.
The \emph{segment spread} $\Phi := \frac{\text{diam}(V(\polygons))}{\delta_\text{seg}(\polygons)}$ represents the ratio between the maximum and minimum distance/thickness in the input.}

\begin{theorem}
\Cref{alg:chord:heuristic:new:paper} computes a $(1+\varepsilon)$-approximate solution to~$\biProblem$ in time
\begin{align*}
\mathcal O\!\left(n^6 \frac{\log^3 n}{\log^2 \log n} (\log n + \log \Phi + \log \tfrac{1}{\varepsilon})\right).
\end{align*}
\end{theorem}

\subparagraph*{Exploiting $\alpha$-Nestedness.}
Let~$\sollow:=\comblow[\amin]$, $\solnew:=\combnew[\anew]$ and~$\solup:=\combup[\amax]$.
Due to the~$\alpha$-nestedness of optimal solutions, we have
$
\sollow
\subseteq 
\solnew
\subseteq 
\solup.
$
For the subdivision-restricted case, this can be exploited by contracting the problem instance that is solved for~$\anew$~\cite{beines2024}.
This optimization does not carry over to~$\freeProblem$, but we can still exploit~$\alpha$-nestedness:
\begin{enumerate}
    \item Since~$\sollow \subseteq \solnew$ and $\optimizer$ can handle circular polygons as inputs, we can run the optimization at~$\anew$ with~$\sollow$ as the input instance instead of $\polygons$.
    Particularly for larger~$\alpha$ values, this can significantly reduce the complexity of the input.
    \item  Let $\polygons(S)$ denote the polygons contained in a region $S$.
    It follows from~$\solnew \subseteq \solup$ that for every region~$S \in \solnew$, there exists a region $S' \in \solup$ with $\polygons(S) \subseteq \polygons(S')$.
    Therefore, $\solnew$ can be computed by solving the subproblem with input~$\polygons(S')$ for each region~$S' \in \solup$ independently. This allows us to discard arcs between different regions of~$\solup$. 
    \item Now, consider an arbitrary but fixed region $S_L \in \sollow$, and let $S_U \in \solup$ be the region such that $S_L \subseteq S_U$. If $S_L$ and $S_U$ contain the same polygons, then we can find the corresponding region~$S_N \in \solnew$ by solving a simplified sub-instance for every free arc~$f=\sarc{}{xy}$ in~$S_U$ (see~\Cref{fig:meshes:paper}).
    Let~$s=\langle x=v_0,\dots,v_k=y\rangle$ be the sequence of combinatorial vertices between~$u$ and~$v$ on~$\partial S_L$.
    The boundary segment between~$x$ and~$y$ in~$S_N$ must lie fully within the region enclosed by~$f$ and~$s$, and the only vertices within this region are those in~$s$. 
    Thus, it suffices to consider arcs between vertices in~$s$ for this sub-instance.
\begin{figure}
    \centering 
    \includegraphics[width = 0.45\textwidth]{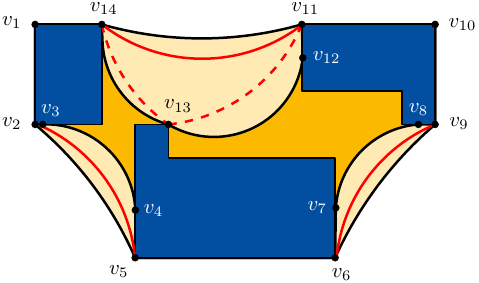}
    \caption{Input polygons are shown in blue, $\sollow$ in dark yellow and $\solup$ in light yellow. By exploiting $\alpha$-nestedness, we can isolate the vertex sequences $\langle v_2,v_3,v_4,v_5 \rangle$, $\langle v_6,v_7,v_8,v_9 \rangle$, and $\langle v_{11},v_{12},\dots,v_{14} \rangle$ of~$\sollow$ as sub-instances, where arcs in $\solnew$ may only connect vertices from the respective sequence. }
    \label{fig:meshes:paper}
\end{figure}
\end{enumerate}

\section{Experiments}\label{sec:experiments}
We implemented our algorithms in \CC~and compiled the code using GCC 13.3.0 with the \texttt{-O3} optimization flag. For geometric computations, we used the CGAL library~\cite{CGAL}. Minimum cuts were computed with the Boykov-Kolmogorov max-flow algorithm~\cite{BK04}, using the implementation from the Boost library~\cite{boost}. All experiments were run sequentially on a single core of an AMD Threadripper 3970X machine with 128\,GB of DDR4-RAM.
As input instances, we used building footprints of German towns and cities extracted from OpenStreetMap~\cite{OSM} (cf.~\Cref{tab:single_parameter_runs:paper}) with input sizes ranging from $1\,800$ to $400\, 000$ input vertices.
All distances are measured in decimeters ($10$ centimeters), which means that the arcs in an optimal solution for~$\freeProblem$ have a radius of~$\alpha$ decimeters.

\subsection{Fixed Parameter Problem}
\begin{figure}[bt]
    \centering
    \begin{minipage}[t]{0.48\textwidth}
        \centering
        \includegraphics[width=0.9\textwidth]{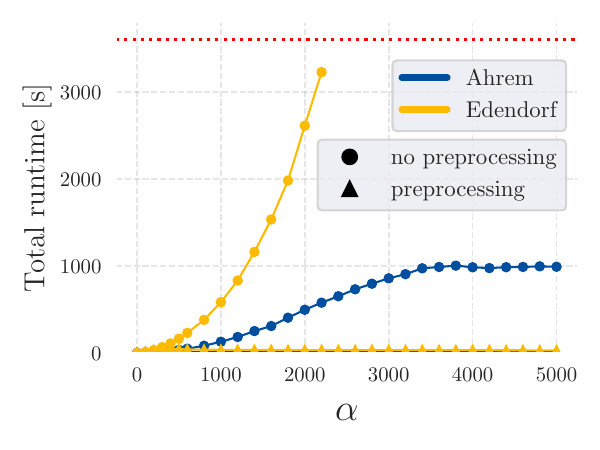}
        \caption{Runtime comparison with and without preprocessing. Runs that exceeded the time limit of~1 hour are omitted.}
        \label{fig:runtimes_speedups:paper}
    \end{minipage}\hfill
    \begin{minipage}[t]{0.48\textwidth}
        \centering
        \includegraphics[width=0.9\textwidth]{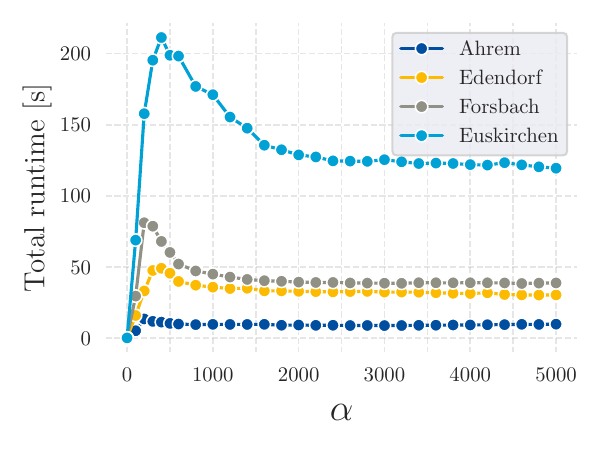}
        \captionsetup{textformat=simple} \caption{Runtime by~$\alpha$ value, with preprocessing enabled.}
        \label{fig:runtimes_all:paper}
    \end{minipage}
\end{figure}%
\begin{figure}[th]
    \centering
    \includegraphics[width=0.95\textwidth]{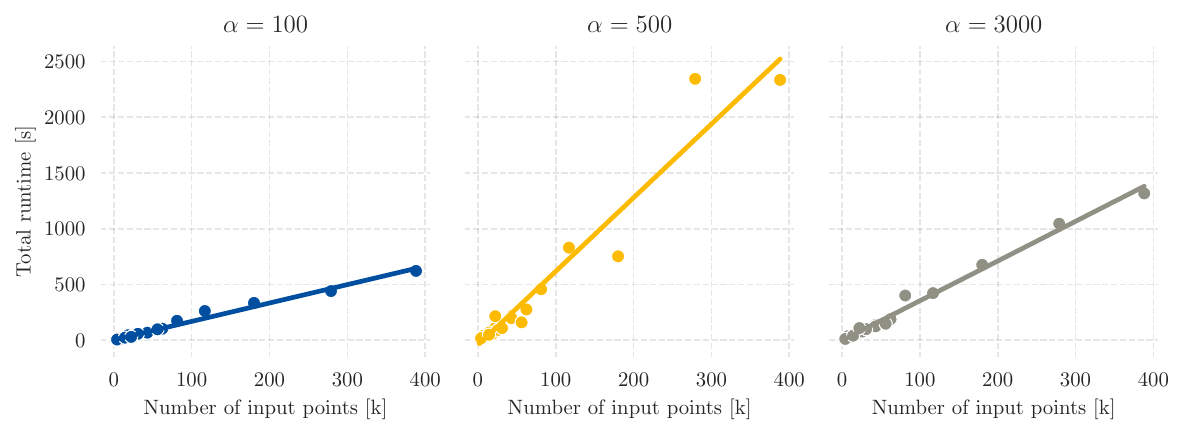}
    \caption{Runtime with preprocessing on all instances for selected $\alpha$-values.}
    \label{fig:total_time_all_alpha:paper}
\end{figure}%
\Cref{fig:runtimes_speedups:paper} shows the impact of the local preprocessing on the instances Ahrem ($|V(\polygons)|=3\,683$) and Edendorf ($|V(\polygons)|=10\,536$). Without preprocessing, the runtime grows quickly with~$\alpha$. For Ahrem, it plateaus around $\alpha=4000$. This is because the set $\allowedArcs$ of free arcs stops growing, since larger endpoint distances make circular arcs likely to intersect other input polygons. However, already for Edendorf, the one-hour time limit is reached before this plateau. With preprocessing, the runtime remains in the range of seconds for all~$\alpha$ values.%

\Cref{fig:runtimes_all:paper} shows only the runtime with preprocessing enabled, now including two additional medium-sized instances.
Furthermore, \Cref{fig:total_time_all_alpha:paper} plots the runtime on all instances for selected~$\alpha$ values. The runtime scales linearly with the instance size and, for a fixed instance, varies by less than a factor of~$4$ depending on $\alpha$. For all instances, the runtime peaks at some $\alpha\in[200,500]$ and then declines, reaching a plateau around $\alpha=3000$; see \cref{fig:edendorf_example:paper} for example aggregations.
\begin{figure}[bt]
    \centering
    \begin{minipage}{0.32\textwidth}
        \centering
        \includegraphics[width=\linewidth]{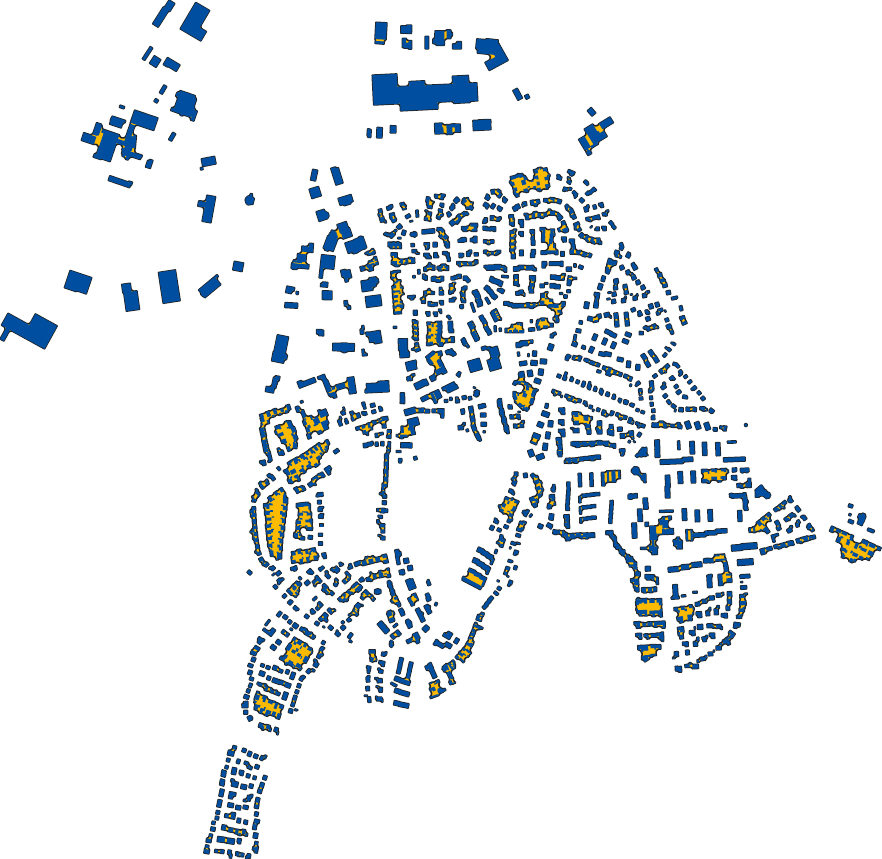}
    \end{minipage}
    \hfill
    \begin{minipage}{0.32\textwidth}
        \centering
        \includegraphics[width=\linewidth]{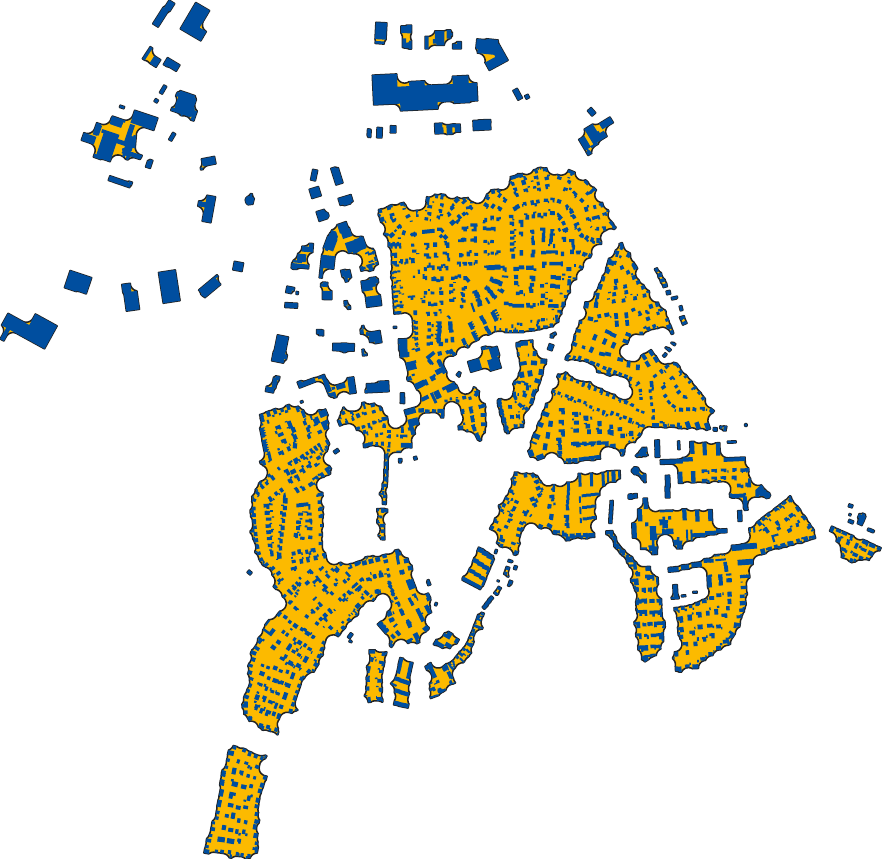}
    \end{minipage}
    \hfill
    \begin{minipage}{0.32\textwidth}
        \centering
        \includegraphics[width=\linewidth]{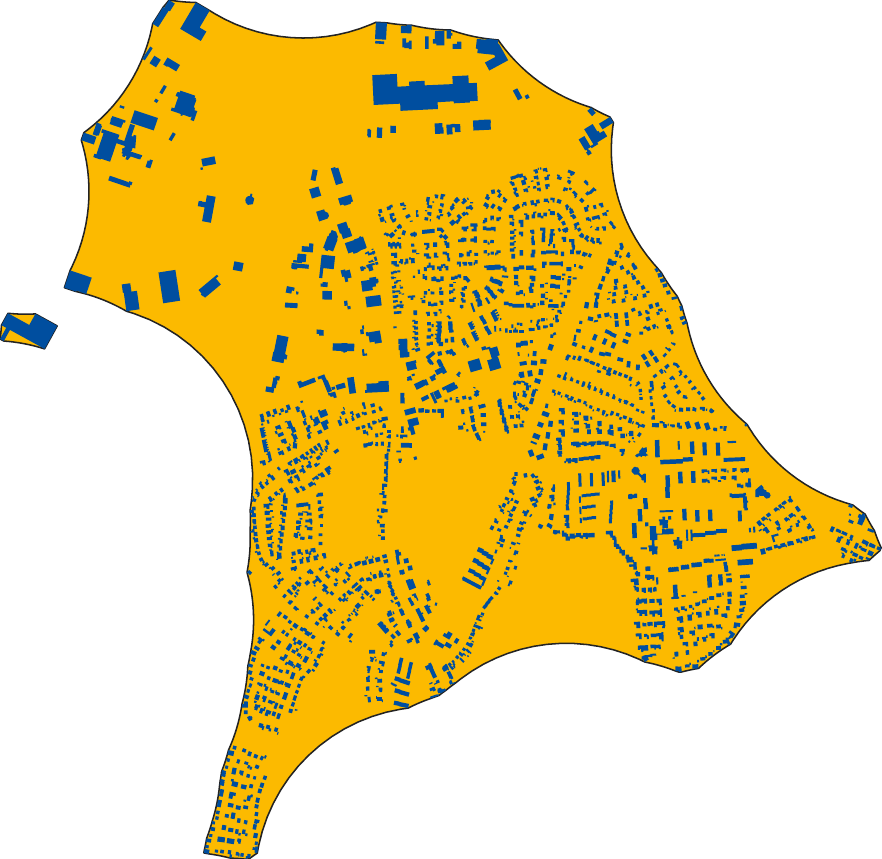}
    \end{minipage}
    \caption{Instance Edendorf. Left: For $\alpha=100$, few polygons are aggregated. Middle: For $\alpha=260$, there are multiple clusters. Right: For $\alpha=3000$, we have one large cluster and one outlier.}
    \label{fig:edendorf_example:paper}
\end{figure}
Next, we highlight some noteworthy observations; for detailed measurements and an in-depth discussion, see \cref{sec:appendix:Fixed_parameter_experiments}.
Overall, the preprocessing drastically reduces the number of generated arcs, especially for large $\alpha$, where the reduction reaches orders of magnitude. On all instances, the size of the subdivision for an $\optimizer$ call during preprocessing lies between $4$ and $7$, so most calls are nearly trivial to solve.
Finally, we investigate why the runtime peaks for medium~$\alpha$ values. For $\alpha \ge 3000$, most merges succeed, and the preprocessing yields a solution that is close to or identical to the final optimal solution. For $\alpha \leq 200$, only few arcs are generated due to the distance threshold; however, the effectiveness of the preprocessing also decreases, as only few merges succeed.
In the range $\alpha\in[200,500]$, solutions are the most volatile: some polygons already merge into large representatives, but still many merges can fail, resulting in larger subdivisions during the preprocessing, and in particular for the final $\optimizer$ call after the preprocessing.

\subsection{Parametric Problem}
\begin{figure}[tbp]
    \centering
    \begin{minipage}{0.48\textwidth}
        \includegraphics[width=\linewidth]{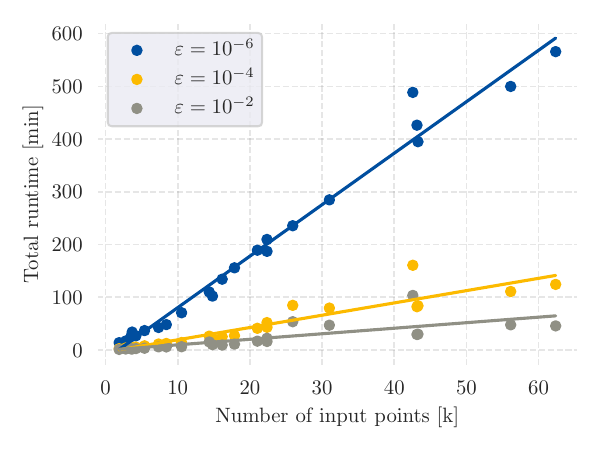}
    \end{minipage}
    \hfill
    \begin{minipage}{0.48\textwidth}
        \includegraphics[width=\linewidth]{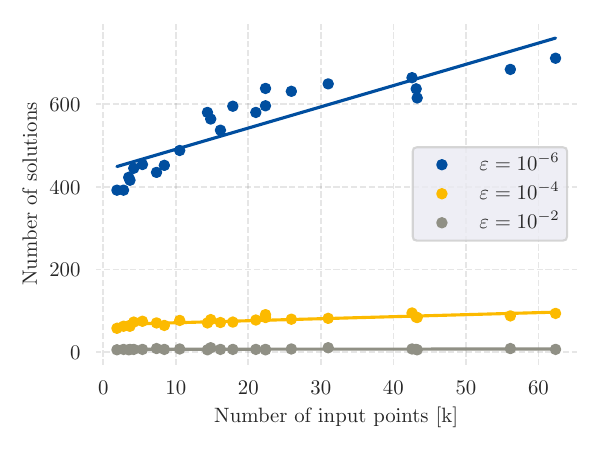}
    \end{minipage}    
    \caption{ Runtime and $\#\,$solutions of the approximation scheme on medium-sized instances.}
    \label{fig:solutions_and_runtime:paper}
\end{figure}%

\Cref{fig:solutions_and_runtime:paper} shows the runtime and number of solutions of the engineered approximation scheme on all instances with fewer than $100\,000$ vertices for $\varepsilon\in\{10^{-2},10^{-4},10^{-6}\}$. Despite the theoretical $\bigO{n^2}$ upper bound on the number of solutions, the algorithm produces significantly fewer than $n$ solutions, and the solution size scales linearly for fixed $\varepsilon$. Independent of $\varepsilon$, the runtime also scales linearly with input size, albeit with large constants.
\Cref{fig:recursive_calls:paper} show that the number of~\Recurse~calls per reported solution is at most~$4$ on all instances, indicating that the approximation of the crossing point converges quickly in practice and that the runtime is essentially proportional to the number of solutions.

\Cref{tab:parametric_results:paper} reports the performance of the approximation scheme with partially disabled $\alpha$-nestedness exploits on the smaller Edendorf and Andernach datasets with $\varepsilon = 10^{-6}$.  Exploiting $\alpha$-nestedness from above and below individually reduces the runtime by a factor of at most~$2$, while combining both and adding the decomposition-based optimization yields speedups of $5.9$ for Edendorf and $7.1$ on Andernach.
\begin{table*}[bt]
\centering
\begin{minipage}{0.4\textwidth}

\includegraphics[width=\textwidth]{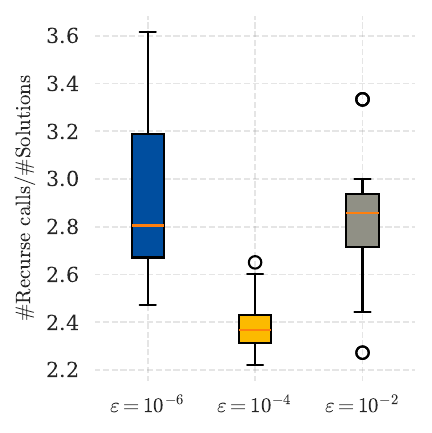}
 \captionsetup{textformat=simple}\captionof{figure}{
Number of recursive calls per reported solution across all instances, grouped by $\varepsilon$ values.
}
\label{fig:recursive_calls:paper}

\end{minipage}
\hfill
\begin{minipage}{0.55\textwidth}
\centering

\caption{
Performance of the approximation scheme for~$\varepsilon = 10^{-6}$.
Subset and Superset indicate whether we exploit $\alpha$-nestedness from below and above, respectively.
}
\label{tab:parametric_results:paper}
\begin{tabular}{lccr}
\toprule
Dataset & Subset & Superset & {Runtime $[\text{min}]$} \\
\midrule

\multirow{4}{*}{Edendorf} 
 & $\circ$ & $\circ$ & 398.48 \\
 & $\circ$ & $\bullet$ & 323.02 \\
 & $\bullet$ & $\circ$ & 217.12 \\
 & $\bullet$ & $\bullet$ & 67.14 \\ 
 [5pt]

\multirow{4}{*}{Andernach} 
 & $\circ$ & $\circ$ & 1246.47 \\
 & $\circ$ & $\bullet$ & 1005.51 \\
 & $\bullet$ & $\circ$ & 890.64 \\
 & $\bullet$ & $\bullet$ & 176.49 \\ 

\bottomrule
\end{tabular}

\end{minipage}

\end{table*}
\section{Conclusion}
We showed that the parametric polygon aggregation problem can still be solved in polynomial time if the subdivision-based restriction~\cite{rottmann2024bicritshapes} is lifted.
With careful engineering that exploits structural properties, our approach also achieves linear runtime on real-world data. 
While the constant factors are higher, it can still process city-sized instances (see Figure~\ref{fig:bonn_aggregated}) within a timeframe of a few hours.
Investing the extra computation time can be beneficial for at least two reasons.
Firstly, situations arise in practice where the subdivision-restricted approach fails to find a sufficiently representative solution because the subdivision is ``in the way'', especially at small scales (cf.~\Cref{appendix:subdivisioncomparison}).
More importantly, removing the restriction guarantees subset-nestedness, which is a structural advantage that ensures consistency across solutions, not only across parameter values but also under the addition or filtering of input polygons. 
This makes the approach particularly well-suited for applications in map generalization and urban analytics where consistency is crucial, for example, when filtering by temporal attributes (e.g., construction dates) or by building types. Here, it ensures that resulting clusterings remain consistent with the entire input, in the sense that more specific subsets refine, rather than contradict, the overall clustering. Beyond this, our preprocessing scheme may be of independent interest for other fencing problems that exhibit subset-nestedness.

\begin{figure}[!ht]
    \centering 
    \includegraphics[width=0.8\textwidth]{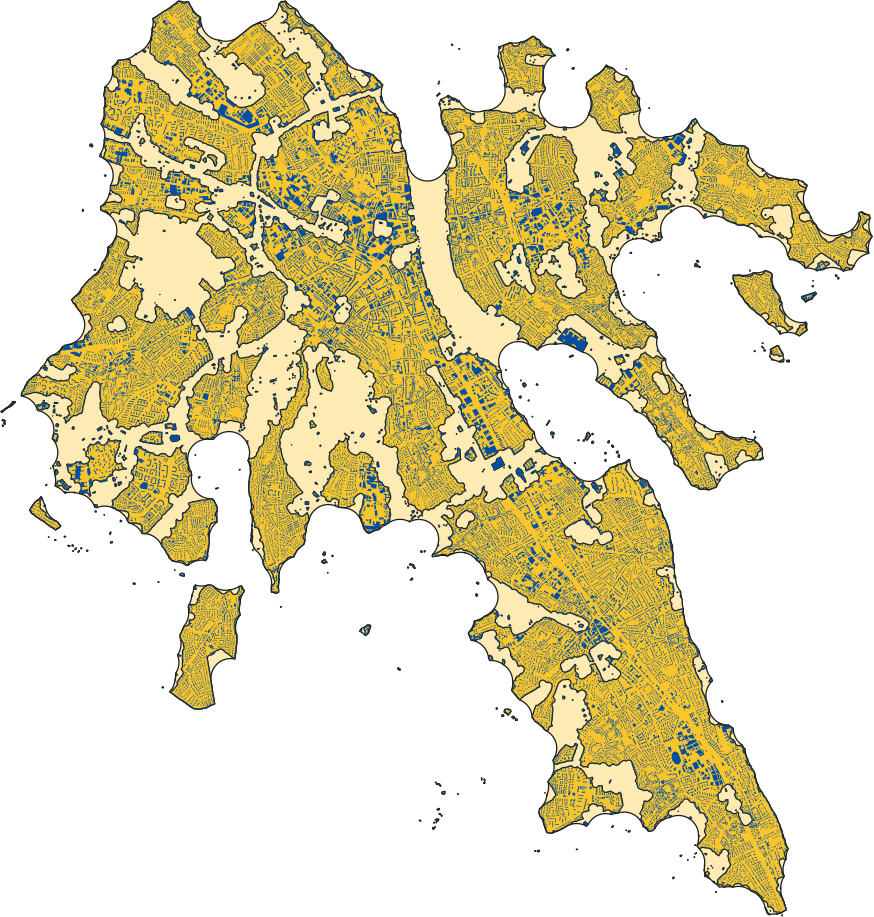}
    \caption{Aggregations of the city of Bonn, Germany, for $\alpha = 500$ and $\alpha = 3000$.}\label{fig:bonn_aggregated}
\end{figure}%

\bibliography{literature}
\appendix
\newpage
\section{A Remark on $\alpha$-Shapes}
Unlike our approach, $\alpha$-shapes tend to connect nearby point sets with long, narrow bridges, which are undesirable in the context of map generalization~\cite{bone2019,rottmann2024bicritshapes}.
Consider the example in~\Cref{fig:alphashapes}.
Here, a narrow bridge appears as~$\alpha$ increases, causing both the area and the perimeter of the solution to increase.
Thus, in terms of our objective function $g_\alpha$, the solution becomes worse in both relevant criteria.

\label{sec:alpha_shapes}

\begin{figure}[htbp]
  \centering
  \begin{subfigure}[t]{0.48\textwidth}
    \centering
    \fbox{\includegraphics[width=0.95\textwidth]{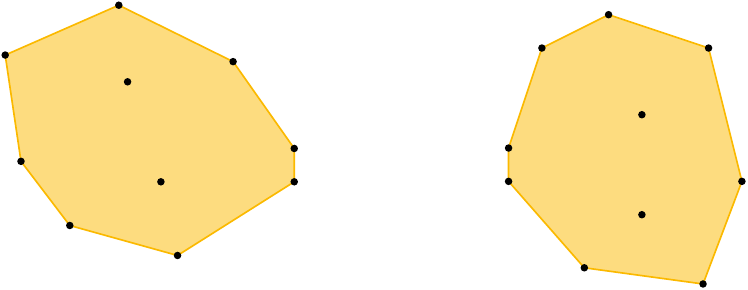}}
    \label{fig:alpha1}
  \end{subfigure}
  \hfill
  \begin{subfigure}[t]{0.48\textwidth}
    \centering
    \fbox{\includegraphics[width=0.95\textwidth]{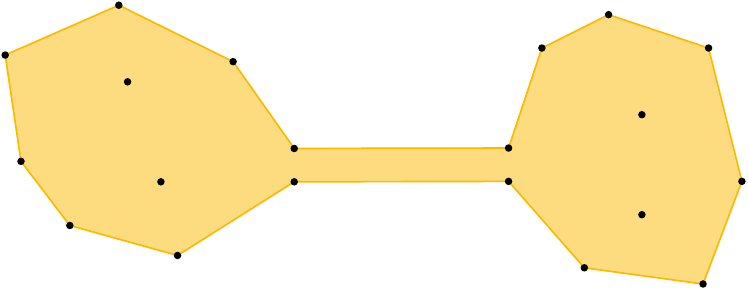}}
    \label{fig:alpha2}
  \end{subfigure}
\caption{Given the depicted point set, the $\alpha$-shape for some value of $\alpha$ is shown on the left. As $\alpha$ increases, a bridge forms (right), increasing the perimeter.}
  \label{fig:alphashapes}
\end{figure}

\section{The Minimum-Cut Approach for Polygon Aggregation}
\label{sec:Rottmann_transformation}
\begin{figure}[htbp]
    \centering
    \begin{minipage}{0.48\textwidth}
        \centering
        \includegraphics[width=\linewidth,page=1]{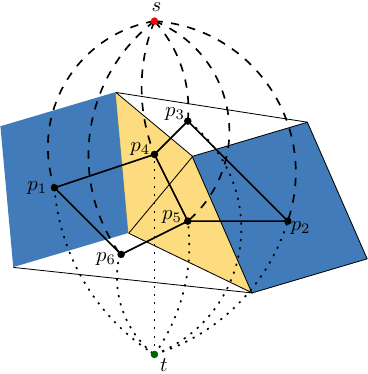}
        \subcaption{Single-source-single-sink graph~$G_{st}$}
        \label{fig:rottman_a}
    \end{minipage}
    \hfill
    \begin{minipage}{0.48\textwidth}
        \centering
        \includegraphics[width=\linewidth,page=2]{images/Rottmann.pdf}
        \subcaption{Planar multi-sink-multi-source graph~$G_\text{M}$}
        \label{fig:rottman_b}
    \end{minipage}
    \caption{Illustrations of the two transformations of~$\problem{\subdivision}$ into minimum cut problems on graphs given in this section.}
    \label{fig:rottman}
\end{figure}
Rottmann et al.~\cite{rottmann2024bicritshapes} give a transformation of the subdivision-restricted polygon aggregation problem~$\problem{\subdivision}$ into a minimum~$(s,t)$-cut problem on an almost-planar graph.
In this section, we give the details of this transformation.
Additionally, we give an alternative transformation, analogous to~\cite{DBLP:journals/dcg/AbrahamsenGLR20}, that yields a planar graph with multiple source and sink vertices. On this graph, faster algorithms for planar multi-source-multi-sink minimum cut can be used.

\subparagraph*{Original Transformation.} Let $\subdivision$ be a planar subdivision of $\conv(\polygons)\setminus\polygons$. 
Then $\mathcal{A}=\subdivision \cup \polygons$ is a planar subdivision of $\conv(\polygons)$.
Let~$G$ denote the geometric dual graph of~$\mathcal{A}$. Here, two vertices $p_i,p_j\in V(G)$ are connected by an edge if the corresponding faces $a_i,a_j\in\mathcal{A}$ share at least one boundary edge.
In the augmented graph~$G_{st}$, we introduce an additional source vertex $s$ and sink vertex~$t$, as well as edges $\{s, p_i\}$ and $\{p_i, t\}$ for every vertex $p_i\in V(G)$. The graph construction is depicted in \Cref{fig:rottman_a}. Let $b(a_i)$ denote the length of the boundary shared by a face $a_i$ and the outer face and let $b(a_i,a_j)$ denote the length of the boundary shared by $a_i$ and $a_j$. We define edge weights $w\colon E(G_{st})\rightarrow\mathbb{R}_{\geq 0}$ as follows:

\begin{itemize}
    \item For $e=\{p_i,p_j\}$ with~$p_i,p_j \in V(G)$, we set 
    \(\begin{aligned}
        w(e) &= \alpha \cdot b(a_i,a_j).
    \end{aligned}\)
 \item For $e=\{p_i,t\}$ with~$p_i \in V(G)$, we set $\;\;w(e) = \A(p_i) + \alpha \cdot b(a_i)$
    \item For $e=\{s,p_i\}$ with~$p_i \in V(G)$, we set 
    \(\begin{aligned}
        \;\;w(e) &= 
        \begin{cases}
            \infty & \text{if } a_i \in \polygons, \\[6pt]
             0 & \text{otherwise}.
        \end{cases}
    \end{aligned}\)    
\end{itemize}
Note that Rottmann et al.~\cite{rottmann2024bicritshapes} use the objective function~$f_\lambda(\solution)=\lambda \A(\solution) + (1-\lambda)\PP(\solution)$, whereas we use the equivalent function~$g_\alpha(\solution) = \A(\solution) + \alpha\PP(\solution)$.
Accordingly, the edge weights in our construction differ slightly.
A solution to the subdivision-restricted polygon aggregation problem~$\problem{\subdivision}$ is a selection of faces~$\solution \subseteq \mathcal A$ with~$\polygons \subseteq \solution$.
Let~$V(\solution)$ denote the corresponding set of vertices in~$G$.
Rottmann et al.\ show that~$g_\alpha(\solution)$ is exactly the value of the~$(s,t)$-cut with source component~$(\{ s \} \cup V(\solution)$ and sink component~$V(G) \setminus V(\solution) \cup \{ t\}$.
Using the fastest known algorithm for minimum $(s,t)$-cut in general graphs, they obtain the following result.
\begin{theorem}[Rottmann et al.~\cite{rottmann2024bicritshapes}]
For any minimum $(s,t)$-cut in $G_{st}$, the set of cells in~$\mathcal{A}$ corresponding to the source component is an optimal solution to $P^\alpha_\mathcal{D}$. 
Hence, $P^\alpha_\mathcal{D}$ can be solved in $O\!\left(\tfrac{|\mathcal{A}|^2}{\log |\mathcal{A}|}\right)$ time.
\end{theorem}

\subparagraph{Planar Transformation.}
The addition of~$s$ and~$t$ does not preserve the planarity of~$G$, preventing us from using faster min-cut algorithms for planar graphs.
Therefore, we construct an alternate augmented graph~$G_\text{M}$ in which~$s$ and~$t$ are replaced with sets~$S = \{ s_i \mid p_i \in V(G) \}$ and~$T = \{ t_i \mid p_i \in V(G) \}$, i.e., every vertex~$p_i \in V(G)$ is associated with a separate source~$s_i$ and sink~$t_i$.
The corresponding edges~$\{s_i,p_i\}$ and~$\{p_i,t_i\}$ are given the same weights as~$\{s,p_i\}$ and~$\{p_i,t\}$ in~$G_{st}$, respectively.
We call a cut in~$G_\text{M}$ an~$(S,T)$-cut if all vertices in~$S$ lie in the source component and all vertices in~$T$ in the sink component.
By construction, every~$(s,t)$-cut in~$G_{st}$ has a corresponding~$(S,T)$-cut in~$G_\text{M}$ and vice versa.
Furthermore, a planar embedding of~$G$ can be extended to a planar embedding of~$G_\text{M}$: for each vertex~$p_i$, we place the vertices~$s_i$ and~$t_i$ inside the face~$a_i$ (see \cref{fig:rottman_b}).
Thus, we can apply algorithms for planar multi-source-multi-sink minimum cut.
The fastest known algorithm for this problem is due to Borradaile et al.~\cite{DBLP:journals/siamcomp/BorradaileKMNW17} and runs in time $ \mathcal{O}\left(|\mathcal{A}| \frac{\log^3|\mathcal{A}|}{\log^2 \log |\mathcal{A}|}\right)$ when combined with the data structure by Gawrychowski and Karczmarz~\cite{DBLP:conf/icalp/GawrychowskiK18} for shortest paths in dense distance graphs.
\begin{theorem}
\label{theorem:multisorucesink}
For any minimum $(S,T)$-cut in $G_\text{M}$, the set of cells in~$\mathcal{A}$ corresponding to the source component is an optimal solution to $P^\alpha_\mathcal{D}$. 
Hence, $P^\alpha_\mathcal{D}$ can be solved in  $ \mathcal{O}\left(|\mathcal{A}| \frac{\log^3|\mathcal{A}|}{\log^2 \log |\mathcal{A}|}\right)$ time.
\end{theorem}

\section{Solving the Unrestricted Aggregation Problem (Including Proofs)}
\label{sec:unrestricted:appendix}
In this section, we investigate the unrestricted aggregation problem~$\freeProblem$. 

\begin{remark}
    Most results for (circular) polygons extend to more general input regions with piecewise differentiable boundary curves. Therefore, in this extended section, we present the characterization of free boundary pieces for this broader class of regions.
\end{remark} 

After giving the characterization of the free boundary pieces, we show that for the special case where the input regions are polygons or circular polygons formed by straight lines and circular arcs, only a polynomial number of curves need to be inspected to find an optimal solution. We then use these curves to build a subdivision $\subdivision_C$ of $\conv(\polygons)\setminus\polygons$ that has polynomial size, thereby reducing $\freeProblem$ to the subdivision-restricted problem~$\problem{\subdivision_C}$ already discussed in~\cite{rottmann2024bicritshapes}.
This allows us to apply the minimum cut-based algorithm proposed in~\cite{rottmann2024bicritshapes} to solve the unrestricted problem in polynomial time.
Finally, we note that the optimal solutions to~$\freeProblem$ are nested with respect to~$\alpha$ and with respect to introducing new input polygons.

{We start by characterizing the optimal solutions for the (simpler) corner cases~$\alpha=0$ and $\alpha=\infty$. For~$\alpha=0$, the solution that consists only of the input regions is optimal since it minimizes the area. For~$\alpha=\infty$, where the objective is to minimize the perimeter, we show that all solution regions are convex hulls.}
\begin{lemma}\label{lem:infinity}
    {For an instance~$\polygons$ of~$P_F^\infty$, every  region $S$ of an optimal solution~$\solution$ is given by the convex hull of the input regions $\polygons(S)$ contained in $S$.}
\end{lemma}
\begin{proof}
Assume for the sake of contradiction that~$S \neq \conv(\polygons(S))$.
Then the perimeter satisfies $P(\conv(\polygons(S))) < P(S)$.
Let~$\mathcal S' \subset \solution$ be the set of regions that intersect~$\conv(\polygons(S))$ properly, and let~$X:= \bigcup_{S' \in \mathcal S'} S'$ be the union of those regions.
For the solution that replaces~$S$ and~$\mathcal S'$ with~$\conv(\polygons(S)) \cup X$, we observe that
\[
P(\conv(\polygons(S)) \cup X) < P(\conv(\polygons(S))) + P(X) < P(S) + P(X),
\]
since overlapping boundary parts between $X$ and $\conv(S)$ are removed, and no new boundary is introduced. Therefore, $\solution$ cannot be optimal.
\end{proof}

{Next, we discuss the general case~$\alpha \in (0,\infty)$.
We start by noting that this can be reduced to the case~$\alpha = 1$ by scaling the input appropriately.}

\observationScale*
\begin{proof}
Scaling both the input and the solution yields a valid problem instance and a valid solution for the scaled instance. The lengths of all curves increase by a factor of $\frac{1}{\alpha}$, and all areas increase by a factor of $\frac{1}{\alpha^2}$. Moreover, the objective value increases by $\alpha^2$ due to 
\begin{align*}
    g_\alpha(\solution)=\A(\solution)+\alpha\PP(\solution)
     &=\alpha^2\A(\sigma_{1/\alpha}(\solution))+\alpha^2\PP(\sigma_{1/\alpha}(\solution))=\alpha^2g_1(\sigma_{1/\alpha}(\solution)).\qedhere
\end{align*}
\end{proof}
Consequently, we only consider the cost function $g_1$ for the proofs in this section.
\Cref{obs:scale:paper} describes a crucial property of the problem formulation: the parameter~$\alpha$ corresponds directly to the scale at which the input map is viewed, in the sense that multiplying~$\alpha$ by some factor~$c$ is equivalent to scaling the map by a factor of~$c$.
It follows that the set of solutions that are optimal for at least one value of~$\alpha$ does not change when scaling the input.

We continue by investigating which curves locally optimize the area-perimeter trade-off.
\efficiency*
If we consider a solution that uses~$\overrightarrow{uv}$ as a free boundary piece such that the enclosed area is to the left of it, then~$e_{uv}(f)$ is the improvement in the objective value if we replace~$\overrightarrow{uv}$ with~$f$ (assuming that this yields a feasible solution).
We now characterize the curves that have maximum efficiency and show that any solution that uses a curve of non-maximum efficiency can be improved locally.

\begin{figure}
    \centering
    \includegraphics{images/new/ArcIsBest_nonsmooth.pdf}
    \caption{Visualization of~\Cref{lemma:arcs_enclose_much:paper}. For fixed $\LL(f)$, the enclosed area is maximized if~$f$ is circular.}
    \label{fig:ArcIsBest:full}
\end{figure}
\efficiencyCircular*
\begin{proof}
    Let $C_r$ be the circular arc with radius $r$ from $u$ to $v$ in the half-plane to the left of~$\overrightarrow{uv}$ 
    such that $\LL(C_r)+\LL(f)=2\pi r$ (see~\cref{fig:ArcIsBest:full}). 
    The curve~$f$ lies to the left of~$\overrightarrow{uv}$ since the efficiency is non-positive otherwise. Hence, $C_r$ together with $f$ is a closed curve with perimeter $P=2\pi r$. The isoperimetric inequality~\cite{Steiner1838, hurwitzisoperimetric} states that for every closed curve~$C$ with enclosed region~$R$, it holds that $4\pi \A(R)\leq \LL(C)^2$. Furthermore, equality holds if and only if $C$ is a circle.
    It follows that $C_r$ together with $f$ has to be a circle to maximize the enclosed area and in turn maximize the efficiency. Consequently, $f$ is a circular arc.
\end{proof}

\Cref{lemma:arcs_enclose_much:paper} restricts the curve~$f$ \done{such that it does not} intersect the line through~$u$ and~$v$ properly. We will see later (from~\Cref{lemma:local_structure:paper}) that the claim holds without this restriction.

In the remainder of this paper, we denote by $\sarc{r}{uv}$ the circular arc of radius $r$ and central angle $\theta \leq \pi$ that connects the points $u$ and $v$ and lies to the left of~$\overrightarrow{uv}$.
An analytical argument shows that the most efficient circular arcs have radius~$1$.

\begin{restatable}{lemma}{efficiencyRadius}
\label{lemma:arcs_have_radius_one}
    Let $u,v$ be two points with~$\LL(\overline{uv})<2$. Among all circular arcs from~$u$ to~$v$ with central angle $\theta<\pi$ that lie to the left of~$\overrightarrow{uv}$, the arc $\sarc{1}{uv}$ has the largest efficiency.
\end{restatable}
\begin{proof} 
\begin{figure}
    \centering
    \includegraphics{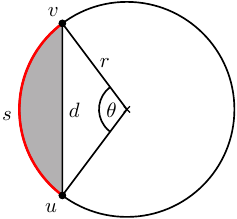}
    \caption{Visualization of the proof of~\Cref{lemma:arcs_have_radius_one}.}
    \label{fig:P2:full}
\end{figure}
Let $s_r$ be the length of the circular arc $\sarc{r}{uv}$ with radius $r$, let $\A_r$ be the area enclosed by $\sarc{r}{uv}$ and $\overline{uv}$ (gray area in \Cref{fig:P2:full}), and let $d=\LL(\overline{uv})$. Further,  $\theta=2\arcsin(d/2r)$ is the central angle of $\sarc{r}{uv}$. Since $d$ is fixed, the efficiency of the circular arc is a function~$h(r):=e_{{uv}}(\sarc{r}{uv})$ depending only on $r$. We have 
\begin{align*}
   h(r)&= -s_r+d+\A_r\\
        &=   -\theta r +d+ \left(\frac{r^2}{2} \theta- \frac{d}{2}\sqrt{r^2-(d/2)^2}\right)\\
        &= -2\arcsin\left(\frac{d}{2r}\right)r+ d+\left( \frac{r^2}{2} 2\arcsin\left(\frac{d}{2r}\right)- \frac{d}{2}\sqrt{r^2-(d/2)^2}\right)\\
        &=-\arcsin\left(\frac{d}{2r}\right)\cdot (2 r-r^2)-\frac{rd}{2}  \sqrt{1- (d/2r)^2 }    +  d.
\end{align*}
To find the maximum of this function, we compute its derivative
\begin{align*}
    \frac{\text{d}h}{\text{d}r}(r) =\frac{2(1-r)\left(d-r\arcsin(\frac{d}{2r}) \sqrt{4-(\frac{d}{r})^2} \right) }{r\sqrt{4-(\frac{d}{r})^2}}.
\end{align*}
A root of this function is at $r=1$. Note that for $r<d/2$, we cannot connect $\overline{uv}$ by a circular arc with radius $r$ and that for $r=d/2$ we have a singularity.
Next, we argue that $r=1$ is the only maximum in $(d/2,\infty)$.  We show that $\frac{\text{d}h}{\text{d}r}(r)>0$ for all $r<1$ and 
$\frac{\text{d}h}{\text{d}r}(r)<0$ for all $r>1$. It is sufficient to show that for all $r\in (d/2,\infty)$, it holds that
   \begin{align}\label{al:eq_min}
       d-r\arcsin(\frac{d}{2r}) \sqrt{4-\left(\frac{d}{r}\right)^2} >0.
   \end{align}
For all $x\in (0,1)$, it holds that $\arcsin(x)\sqrt{\frac{1}{x^2}-1}<1$. Therefore, since $\frac{d}{2r}\in (0,1)$, it follows that
\begin{align*}
    0<1-\arcsin\left(\frac{d}{2r}\right)\sqrt{\left(\frac{2r}{d}\right)^2-1}=
    1-\arcsin\left(\frac{d}{2r}\right)\frac{r}{d}\sqrt{4-\left(\frac{d}{r}\right)^2}.
\end{align*}
Multiplying by $d$ yields (\ref{al:eq_min}). Hence, $r=1$ is the only maximum of $h(r)$ for $r\in(d/2, \infty)$. Thus, $\sarc{1}{uv}$ is the most efficient connecting arc.
\end{proof}
The following corollary is a summary of the preceding lemmas and describes their implications for aggregation solutions.
\begin{corollary}
\label{cor:local_improvements_stronger}
    Let $\solution$ be a solution for $P_F^1$ and $h$ a non-closed curve on the boundary of~$\solution$ with starting point~$u$ and ending point~$v$ such that
    \begin{enumerate}
        \item $\LL(\overline{uv})<2$,
        \item $h$ does not intersect the line through $u$ and $v$ properly.
        \item the circular arc~$\sarc{1}{uv}$ does not intersect $\partial\solution\setminus h$ or $\polygons$ properly.
        \item the circular arc $C_\text{iso}(h)$ that is left of the line induced by $u$ and $v$ and has the same length as~$h$ has central angle $\theta < \pi$, 
    \end{enumerate}
    If~$h \neq \sarc{1}{uv}$, then there exists a solution $\solution'$ with a smaller objective value than~$\solution$.
\end{corollary}
Next we want to show that any free piece that is not a circular arc can be improved. For two distinct points $u,v \in \mathbb{R}^2$, we denote with $L(u,v)$  the directed line induced by $u$ and~$v$, oriented from $u$ to $v$. 
For this we need one more property that is satisfied by free pieces used in an optimal solution, but first, we show that local sub-curves of free boundary pieces are isolated.

\begin{lemma}[isolation lemma]\label{lemma:isolation}
    Let $h$ be a free piece of an optimal solution $\solution$. Then, for every interior point $p$ of $h$, there exists an $\varepsilon>0$ such that $h_\varepsilon=h\cap B_\varepsilon(p)$ is contiguous and $B_\varepsilon(p)\cap (\polygons\cup\partial\solution \setminus h_\varepsilon )=\emptyset$.
\end{lemma}
\begin{proof}
    Let $h$ be a free piece of an optimal solution~$\solution$ and $p\in h$ be an interior point. The lemma purports the existence of an $\varepsilon>0$ such that $h_l=h\cap B_\varepsilon(p)$ is a connected curve and $B_\varepsilon(p)\cap (\polygons\cup\partial\solution \setminus h_l )=\emptyset$. We construct the ball~$B_\varepsilon(p)$ in several steps. First, consider a ball $B_{\varepsilon'}(p)$, sufficiently small so that $B_{\varepsilon'}(p)\cap \polygons=\emptyset$ and the curve of $h\cap B_{\varepsilon'}$ through $p$
    connects two points $a$ and $b$ in $\partial B_{\varepsilon'}(p)$, that is, $h$ first encounters $a$, then $p$, then $b$. We may assume that $h_{ab}\cap\partial B_{\varepsilon'}(p)=\{a,b\}$, and preserve this property by updating the definition of $a$ and $b$ every time we shrink the ball.
    We note that free pieces that do not connect boundaries of input polygons are only beneficial in terms of the objective value if the perimeter of such an excluded hole
    is less than or equal to the area it encloses. For a given perimeter, the area enclosed is maximized by the isoperimetric inequality, and the area bounded by closed holes in an optimal solution is bounded from below by $2\pi$, meaning that each such component contributes a value large than some constant to the total perimeter.
    Because the total perimeter value is finite, there cannot exist more than finitely many of these components, so by shrinking the ball further, we can assume that $B_{\varepsilon'}(p)$ has empty intersection with such components of $\solution$.
    
    Therefore, if $B_{\varepsilon'}(p)$ intersects different components of free pieces of $\solution$, then these free pieces need to connect to points on $\polygons$. As a consequence, the ball $B_{\varepsilon'/2}(p)$ only intersects finitely many free pieces of $\solution$, since any such intersection means that the free piece has to travel from the boundary $\partial B_{\varepsilon'}(p)$  to $\partial B_{\varepsilon'/2}(p)$, which contributes at least $\varepsilon'/2$ to the length. Therefore, since each free piece is compact, we can shrink $B_{\varepsilon'}(p)$ so that it only intersects the free piece $h$. We denote by $h_{\varepsilon'}$ the subcurve of $h$ in $\partial B_{\varepsilon'}(p)$ through~$p$. Again because the solution set consists of compact curves, $(h\setminus h_{ab})\cup\{a,b\}\cap B_{\varepsilon'}(p)$ has a point of minimal distance to $p$. Since the free pieces are $C^1$ curves
    , every sufficiently small neighborhood of $p$ intersects $h$ only in a single curve (this is a well-known consequence of the implicit function theorem). From these two preceding facts, we see that we can shrink the ball to obtain $B_{\varepsilon}(p)$ as in the statement of the lemma.
\end{proof}

\begin{definition}[local left-sidedness]
A curve $h$ is locally left-sided if for every interior point $p \in h$, there exists a sub-curve $h_l \subseteq h$ with $p \in h_l$ in its interior and $\LL(h_l) > 0$ such that for any pair of distinct points $u,v \in h_l$ such that $v$ comes after $u$ on $h_l$, we have: 
\begin{itemize}
    \item the sub-curve $h_c \subseteq h_l$ from $u$ to $v$ lies entirely on the left side of $L(u,v)$, and
    \item the remaining part $h_l \setminus h_c$ lies entirely on the right side of $L(u,v)$,
\end{itemize} 
where points are allowed to lie on $L(u,v)$ in both cases.
\end{definition}


\begin{lemma}\label{lemma:local_onsidedness}
Every free piece of an optimal solution is locally left-sided.
\end{lemma}
\begin{proof}

    For a free piece $h$, take a small ball $B$ from \Cref{lemma:isolation} around $p$, and set $h_l=h\cap B$.
    Now let $u, v\in h_l$. Assume there exists a point $q\neq v$ on~$L(u,v)$ such that~$h_l$ goes to the right of $L(u,v)$ after $q$. 
    Let~$q'$ be the next point on~$L(u,v)$ after~$q$ that lies on~$h_l$. Then~$\overline{q q'}$ lies in the interior of~$\solution$ (except for the points $q$ and $q'$). We can therefore replace the subcurve of~$h_l$ between~$q$ and~$q'$ with~$\overline{qq'}$ to obtain a valid solution that decreases both the perimeter and the area.
    Hence, the first sub-property of local left-sidedness is fulfilled.
   Now, assume that the second sub-property does not hold for $u, v\in h_l$. Then, there exists a point $z\in h$ before $u$ or after $v$ that is to the left of $L(u,v)$. Without loss of generality, assume that $z$ comes after $v$ on $h_l$. Then $v$ is to the right of $L(u,z)$ and it follows that the first sub-property is violated for the subcurve of $h$ starting at $u$ and ending at $z$. This is a contradiction to the first sub-property. Hence, the second sub-property follows.
  \end{proof}

\localImprovements*

\begin{proof}
     Let $h$ be a free piece starting at $u$ and ending at $v$ that is locally left-sided and not the unit arc $\mathcal{C}_1^{uv}$. Let $p$ be an interior point of $h$ where $h$ is locally not a unit arc, i.e., every sub-curve of $h$ that contains $p$ in its interior is not a unit arc. Due to \Cref{lemma:isolation} and local left-sidedness, we can find a small ball $B_\varepsilon(p)$ such that:
\begin{enumerate}[i]
    \item $B_\varepsilon(p)\cap\left(\polygons \cup\partial\solution\right)$ is a contiguous piece $h_\varepsilon\subseteq  h$.
    \item It holds that $\varepsilon<1$.
    \item Let $u'$ and $v'$ be the endpoints of $h_\varepsilon$. The line $L(u',v')$ has $h_\varepsilon$ to its left or possibly touching or coinciding with $L(u',v')$.
\end{enumerate}
If $h_\varepsilon$ is a circular arc, we exchange an arbitrary subarc from $u''$ to $v''$ of $h_\varepsilon$ with central angle~$\leq \pi$ with $C_1^{u''v''}$. Otherwise, $h_\varepsilon$ is not a circular arc. If $C_{iso}(h_\varepsilon)\subset B_\varepsilon(p)$, then exchange $h_\varepsilon$ with $C_{iso}(h_\varepsilon)$ and improve the objective value by \Cref{lemma:arcs_enclose_much:paper}. Otherwise, exchange $h_\varepsilon$ with $C_\varepsilon^{u'v'}$. This has shorter length then $C_{iso}(h_\varepsilon)$ and in turn shorter length than $h_\varepsilon$. Additionally, it includes less area than $h_\varepsilon$, since $h_\varepsilon$ is  contained in $B_\varepsilon(p)$. Hence, we can improve the objective value in all cases and the lemma follows.
 \end{proof}
Using this insight, we can describe the free pieces as circular arcs with certain additional properties. Next we investigate how optimal free pieces connect to the boundaries of the input. To do so, we relate the tangents of the free pieces to the tangents of the boundaries to which they attach. Let $\gamma:[0,1]\rightarrow\mathbb{R}^2$ be a piece wise differentiable curve that is oriented counterclockwise. Then for every $p=\gamma(t_0)$ with $t_0\in [0,1]$ {(except for $t_0=0$ and $t_0=1$ if $\gamma$ is not closed)}, we have a left derivative $\frac{d\gamma}{dt}(t_0)^-$ and a right derivative $\frac{d\gamma}{dt}(t_0)^+$. Note that $\frac{d\gamma}{dt}(t_0)^-$ and $\frac{d\gamma}{dt}(t_0)^+$ may not agree at the corners of the differentiable pieces of $\gamma$.

With this we can define the forward and backward ray of $p=\gamma(t_0)$ on $\gamma$ as
\[
\forwardRay{p}{\gamma}= \left\{ \gamma(t_0) + \lambda \, \frac{d\gamma}{dt}(t_0)^+ \;\middle|\; \lambda \ge 0 \right\}\\
\backwardRay{p}{\gamma}= \left\{ \gamma(t_0) - \lambda \, \frac{d\gamma}{dt}(t_0)^- \;\middle|\; \lambda \ge 0 \right\}.
\]
 We denote the counterclockwise angle between rays $r_1,r_2$ with $\tangentangle{r_1}{r_2}$. With this we define the \emph{exterior region angle}  of $\polygons$ at $x\in B\in \polygons$ as $\beta_\polygons(x)=\tangentangle{\backwardRay{x}{\partial B}}{\forwardRay{x}{\partial B}}$.
Now let $f$ be a non-closed arc starting at $u\in B_u$ and ending at $v\in B_v$. 
We define the \emph{exterior arc angle} of $f$ at $u$ on $\polygons$ by $\beta_\polygons(f,u)=\tangentangle{\backwardRay{u}{f}}{\forwardRay{u}{\partial B_u}}$ and for the endpoint $v$ of~$f$ by $\beta_\polygons(f,v)=\tangentangle{\backwardRay{v}{\partial B_v}}{\forwardRay{v}{f}}$.  
See \Cref{fig:outer_arc_angle_all} for a visualization.

Using this, we can characterize the free boundary pieces that can appear in an optimal solution (provided that an optimal solution exists).

\begin{proposition}\label{prop:arc_properties:full}
    Let $\solution$ be an optimal solution of $\freeProblem$ \done{for~$\alpha \in (0,\infty)$}. Every free boundary piece $f$ with endpoints $u$ and $v$ has the following properties: 
    \begin{description}
    \item[P1:] The distance $d=\LL(\overline{uv})$ of the endpoints is at most $2\alpha$.  
    \done{\item[P2:] The piece $f$ is circular with radius $\alpha$.} 
    \item[P3:] The central angle $\theta$ of $f$ is at most $\pi$.
        \item[P4:] If \(x \in \{u,v\}\) lies on a boundary curve \(\partial B\) of an input region, then \(\beta_\polygons(f,x) \ge \pi\).
    \done{\item[P5:] The piece $f$ bends to the left (i.e., it is to the left of $\overrightarrow{uv}$).}
    \end{description}
\end{proposition}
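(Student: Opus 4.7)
My plan is to reduce to the case $\alpha = 1$ via \Cref{obs:scale} and then derive P1, P2, P3, and P5 as direct corollaries of \Cref{lemma:local_structure}, leaving P4 as the one step requiring additional work. Throughout, let $f$ be an arbitrary maximal free boundary piece of the optimal solution $\solution$, with endpoints $u,v$ oriented so that the enclosed region of $\solution$ lies to the right of $\overrightarrow{uv}$.

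Applying \Cref{lemma:local_structure} at $u$ and $v$: if $f$ differed from $\sarc{1}{uv}$, then $\solution$ would not be optimal, so $f = \sarc{1}{uv}$. This single identification yields four properties at once. P2 (circular of radius $\alpha = 1$) is immediate. P5 (bending into the region) holds because $\sarc{1}{uv}$ lies to the right of $\overrightarrow{uv}$, which is the interior side of $\solution$. P1 ($\LL(\overline{uv}) \leq 2$) follows because the chord of a unit-radius arc has length at most the diameter. P3 ($\theta \leq \pi$) holds by the very definition of $\sarc{r}{uv}$.

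For P4, I would argue by contradiction: suppose $v$ lies in the interior of a polygon edge $e \in E(\polygons)$ and $f = \sarc{1}{uv}$ is not tangent to $e$ at $v$. If no other maximal free piece is incident to $v$, pick a point $z \in e$ sufficiently close to $v$ on the side where the constrained boundary continues. The concatenation $f \cdot \overline{vz}$ is a non-smooth curve from $u$ to $z$ (it has a kink at $v$), so it is not itself a circular arc. By \Cref{lemma:arcs_enclose_much}, a circular arc of the same length from $u$ to $z$ is strictly more efficient, and by \Cref{lemma:arcs_have_radius_one} the most efficient such arc is $\sarc{1}{uz}$. Replacing $f \cdot \overline{vz}$ with $\sarc{1}{uz}$ therefore improves the objective (Figure~\ref{fig:p4:onearc}), contradicting optimality of $\solution$. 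The case where another unit-radius arc $h$ meets $f$ at $v$ is handled by the same reasoning applied to suitable subcurves of $h$ and $f$ on either side of $v$.

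The main obstacle is ensuring that each local replacement yields a globally feasible solution: the new arc $\sarc{1}{uz}$ must not properly intersect the remainder of $\partial\solution$ or $\partial\polygons$, and it must also not cross the line through $u$ and $z$ properly (a hypothesis required by \Cref{lemma:arcs_enclose_much}). These conditions are all secured by choosing $z$ close enough to $v$: since $\solution$ has finitely many boundary pieces and $\polygons$ consists of finitely many polygons, continuity of the arc $\sarc{1}{uz}$ as a function of $z$ ensures that for sufficiently small choices no problematic intersection can occur.
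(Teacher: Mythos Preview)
Your approach for P1, P2, P5, and P4 matches the paper's closely and is correct. The one place your argument diverges from the paper is P3, and there is a genuine subtlety hiding there.

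You derive P3 by saying that \Cref{lemma:local_structure} forces $f=\sarc{1}{uv}$, and $\sarc{1}{uv}$ has central angle $\le\pi$ by definition. As a black-box appeal to the \emph{statement} of \Cref{lemma:local_structure} this is fine. The problem is that the \emph{proof} of \Cref{lemma:local_structure} works by locating a short subcurve $f'\subset f$ with $f'\neq \sarc{1}{u'v'}$ and improving it locally. If $f$ is a circular arc of radius~$1$ with central angle $\theta>\pi$ and the solution lies on its convex side, then every short subcurve $f'$ of $f$ is itself a minor radius-$1$ arc on the interior side of its own chord, i.e.\ $f'=\sarc{1}{u'v'}$. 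So the local mechanism behind \Cref{lemma:local_structure} does not, by itself, rule out major arcs---and this is exactly why the paper supplies an independent, global argument for P3 (replace $f$ by the complementary arc $f'$ of central angle $2\pi-\theta$ and check by direct computation that $\theta-\pi+\sin\theta < \theta-(2\pi-\theta)$ never holds). Your write-up should either include that computation or at least flag that invoking \Cref{lemma:local_structure} for P3 presupposes a case the lemma's proof does not transparently cover.

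Everything else is in line with the paper. Your P4 sketch is the paper's argument; note that the paper is careful to pass to an interior point $\hat u$ on $f$ so that the relevant arc has central angle strictly below $\pi$ (needed for \Cref{lemma:arcs_have_radius_one}), a detail you may want to make explicit.
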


By~\Cref{obs:scale:paper}, we assume $\alpha=1$ for all proofs.
\begin{proof}[Proof of P1]
    Let $f$ be a free piece connecting $u$ and $v$ with $\LL(\overline{uv})>2$. Then it cannot be a circular arc of radius~$1$ and~\Cref{lemma:local_structure:paper} implies that $f$ cannot be part of an optimal solution.
\end{proof}
\begin{proof}[Proof of \done{P2 and P5}]
This is an immediate consequence of~\Cref{lemma:local_structure:paper} because every free piece that does not have this property can be improved locally. 
\end{proof}
\begin{proof}[Proof of P3]
By \done{P2 and P5}, we know that~$f$ is a circular arc with radius $1$ that bends inwards to the region included in~$\solution$.
Assume for the sake of contradiction that the central angle~$\theta$ of~$f$ is greater than~$\pi$.
Let~$f'$ be the circular arc with central angle~$2\pi-\theta$ connecting $u$ and $v$ that bends in the same direction as~$f$ (see Figure~\ref{fig:P3:full}), and let~$\enclosed{f'}{f}$ denote the region enclosed by~$f$ and~$f'$.
If we replace~$f$ with the parts of~$f'$ that do not intersect the remaining boundary~$\partial\solution\setminus f$, we obtain a new solution~$\solution'$ whose objective value is upper bounded by
\[g_1(\solution') \leq g_1(\solution)+\A(\enclosed{f'}{f})-\LL(f)+\LL(f').\]
Note that equality may not hold because~$\enclosed{f'}{f}$ may intersect~$\solution$.
The region~$\enclosed{\overline{uv}}{f}$ enclosed by~$f'$ and $\overline{uv}$ has area~$\A(\enclosed{\overline{uv}}{f})=\frac{1}{2}((2\pi-\theta)-\sin(2\pi-\theta))$.
Because~$f'$ reflected along~$\overline{uv}$ completes $f$ to a circle, we have~$\A(\enclosed{f'}{f})=\pi-2\cdot \A(\enclosed{\overline{uv}}{f})= \theta-\pi+\sin(\theta)$.
Overall, we have
\[g_1(\solution') \leq g_1(\solution) + \theta-\pi+\sin(\theta) - \theta + (2\pi-\theta) = g_1(\solution) - \theta + \pi+\sin(\theta) < g_1(\solution).\]
In the special case~$\theta=2\pi$, the arc~$f'$ degenerates and~$\enclosed{f'}{f}$ becomes the circle enclosed by~$f$.
The objective value of the solution~$\solution'$ that removes~$f$ can be bounded by
\[g_1(\solution') \leq g_1(\solution) + \pi - 2\pi < g_1(\solution).\qedhere\]
\end{proof}

\begin{figure}
    \centering

    \begin{minipage}{0.48\textwidth}
        \centering
        \includegraphics[page=2]{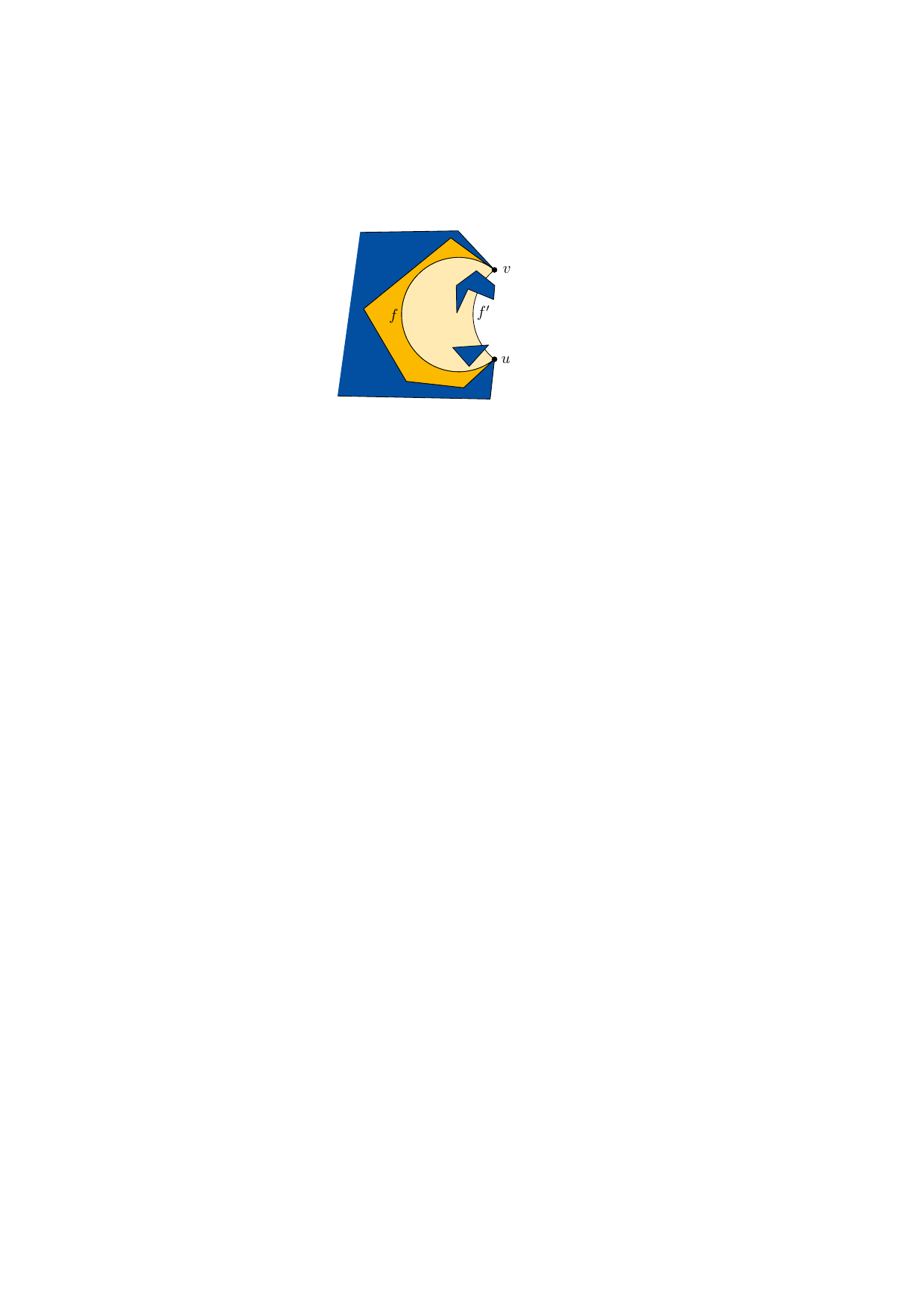}
        \caption{Property P3: replacing arc $f$ with $f'$.}\label{fig:P3:full}
    \end{minipage}
    \hfill
    \begin{minipage}{0.48\textwidth}
        \centering
        \includegraphics{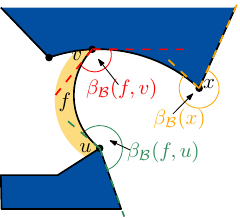}
        \caption{Two examples for the exterior arc angle and one for the exterior polygon angle.}\label{fig:outer_arc_angle_all}
    \end{minipage}

\end{figure}

\begin{proof}[Proof of P4]
\begin{figure*}[t]
    \centering

    \begin{minipage}[t]{0.45\linewidth}
        \centering
        \includegraphics{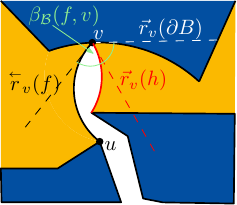}
        \caption{The tangent rays appear in counter-clockwise order $\backwardRay{v}{f}$, $\forwardRay{v}{h}$, $\forwardRay{v}{\partial B}$.}
        \label{fig:proof_local_improvement_arc}
    \end{minipage}
    \hfill
    \begin{minipage}[t]{0.45\linewidth}
        \centering
        \includegraphics{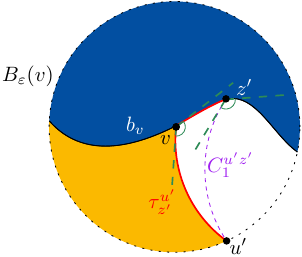}
        \caption{There exists a point $z'$ on $h$ such that the objective value decreases when using $\sarc{\alpha}{u'z'}$.}
        \label{fig:proof_local_improvement}
    \end{minipage}

\end{figure*}
Let $\solution$ be a solution, $S\in\solution$ a region, and let $f=\sarc{1}{uv}$ be a free piece in $\partial S$ starting at $u$ and ending at $v$. Let $B_u$, respectively $B_v$, denote the input region with $u$, respectively $v$, on its boundary. For a contradiction, assume without loss of generality that $\beta_\polygons(f, v) < \pi$; the case $\beta_\polygons(f, u) < \pi$ can be handled symmetrically. Let $h$ denote the next boundary piece after $f$ on $\partial S$.

We first show that $\tangentangle{\backwardRay{v}{f}}{\forwardRay{u}{h}} \leq \beta_\polygons(f, v) < \pi$. If $h$ is a constrained piece, this follows because $\tangentangle{\backwardRay{v}{f}}{\forwardRay{v}{h}} = \beta_\polygons(f, v)$. If $h$ is free, the ray $\forwardRay{h}{v}$ must lie between $\backwardRay{f}{v}$ and $\forwardRay{B_v}{v}$ in counter-clockwise order (see \Cref{fig:proof_local_improvement_arc}), since otherwise $h$ would intersect $B_v$ properly or $S$ would not lie to the left of both $f$ and $h$. Consequently, $\tangentangle{\backwardRay{f}{v}}{\forwardRay{h}{v}} < \beta_\polygons(f, v) < \pi$.

For a point $x \in f$ and a point $y \in h$, we denote by $\tau^{x}_y$ the boundary piece consisting of the sub-arc of $f$ from ${x}$ to $v$, concatenated with the sub-piece of $h$ from $v$ to $y$. We show that there is a choice of ${x}$ and $y$ such that $\tau^{x}_y$ can be improved locally.

Since $f$ and all input boundaries are piecewise differentiable curves, a straightforward adaptation of \cref{lemma:isolation} to points $u$ on the input boundary $\partial\polygons$ implies that there exists a ball $B_\varepsilon(v)$ centered at $v$ such that:
\begin{enumerate}[i]
    \item $\varepsilon < 1$, \label{prop:label:i}
    \item $B_\varepsilon(v) \cap \partial\polygons$ is a contiguous piece $b_v$ of $\partial B_v$,
    \item $B_\varepsilon(v) \cap \partial\solution = \tau^{u'}_z$, where $\tau^{u'}_z$ is a contiguous subcurve with $u' \in f$ and $z \in h$, and
    \item $\tau^{u'}_z$ is locally left-sided (here we also use $\tangentangle{\backwardRay{v}{f}}{\forwardRay{u}{h}} < \pi$).
\end{enumerate}

For some $z' \in h \cap \tau^{u'}_z$ we want to apply \cref{cor:local_improvements_stronger} to $\tau^{u'}_{z'}$ . Preconditions~1 and~2 follow immediately from (i) and~(iv) for all sub-curves of $\tau^{u'}_z$. To satisfy precondition~3, we only need to ensure that the arc $\sarc{1}{u'z'}$ does not intersect $b_v$; properties (ii), (iii), and (iv) ensure that the solution boundary and the rest of the input boundary is not intersected.

For fixed $u'$, as $z'$ moves from $v$ along $h$, the arc $\sarc{1}{u'z'}$ rotates about the anchor point~$u'$ (see \Cref{fig:proof_local_improvement}), so $\backwardRay{\sarc{1}{u'z'}}{z'}$ varies continuously and the arc can only properly intersect $b_v$ to the right of $f$, i.e., on $h$. Since $h$ is differentiable, the tangent $\forwardRay{h}{z'}$ also changes continuously. The arc $\sarc{1}{u'z'}$ can properly intersect $h$ only if $\tangentangle{\backwardRay{\sarc{1}{u'z'}}{z'}}{\forwardRay{h}{z'}} > \pi$, but at $z' = v$ this angle equals $\tangentangle{\backwardRay{f}{v}}{\forwardRay{h}{v}} < \pi$ and varies continuously, so precondition~3 is satisfied for $z'$ sufficiently close to $v$.

For precondition~4, note that $\sarc{1}{u'v}$ has central angle smaller than $\pi$. Since the central angle of $C_\text{iso}(\tau^{u'}_{z'})$ varies continuously in $z'$, it is also smaller than $\pi$ for $z'$ sufficiently close to~$v$. Hence, there exists $\tau^{u'}_{z'}$ for which \cref{cor:local_improvements_stronger} applies, contradicting the optimality of~$\solution$.

\end{proof}
From now on we say that $f$ is \emph{tangential} on $\partial B$ at $x$, if \(\beta_\polygons(f,x) = \pi\). Note that for any differentiable point $x$ this coincides with the natural definition of tangentiality, i.e., the tangents of $f$ and $\partial B$ at $x$ coincide. Then P4 of \Cref{prop:arc_properties:full} implies that a free piece $f$ of an optimal solution has to be tangential in the interior of the differentiable pieces of the boundary curves.

\begin{corollary}\label{cor:disjointness:full}
    \done{Let $\solution$ be an optimal solution of $\freeProblem$ for~$\alpha \in (0,\infty)$. The inclusion-maximal free pieces of~$\solution$} are disjoint except possibly at endpoints. 
\end{corollary}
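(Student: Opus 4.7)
The plan is to argue by contradiction using \Cref{lemma:local_structure}. Suppose $\solution$ is optimal for $\freeProblem$ with $\alpha \in (0,\infty)$ and that two distinct inclusion-maximal free pieces $f_1, f_2 \in F_\polygons(\solution)$ share a point $p$ that is not an endpoint of both; without loss of generality $p$ lies in the interior of $f_1$. By \Cref{obs:scale} we may assume $\alpha = 1$. Since the interior of a free piece is disjoint from $\partial \polygons$, we have $p \notin \partial \polygons$, so some open disk about $p$ avoids $\partial \polygons$ entirely. By properties P2 and P5 of \Cref{prop:arc_properties:full}, $f_1$ and $f_2$ are unit circular arcs that both bend inward toward $\solution$.

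We split into two cases based on the oriented tangent directions of $f_1$ and $f_2$ at $p$. If the tangents coincide, then $f_1$ and $f_2$ are unit arcs curving in the same direction through $p$ with a common tangent line, so they lie on the same unit circle and agree in a neighborhood of $p$. Since both are inclusion-maximal connected subarcs of $\partial \solution \setminus \partial \polygons$ sharing a nontrivial subarc, they must actually coincide, contradicting their distinctness. Otherwise, the tangents at $p$ differ. In this case, pick a disk $D$ about $p$ small enough to avoid $\partial \polygons$ and to contain no other singular boundary point, and select points $u \in f_1 \cap D$ and $v \in f_2 \cap D$ close to $p$. Consider the concatenated path $\pi$ that follows $f_1$ from $u$ to $p$ and then $f_2$ from $p$ to $v$. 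By choosing the traversal direction compatible with the local placement of $\solution$, $\pi$ becomes a boundary piece of $\solution$ with $\solution$ on its right; and by taking $u, v$ sufficiently close to $p$, the straight segment $\overrightarrow{uv}$ also has $\solution$ on its right. Because $f_1$ and $f_2$ have different tangents at $p$, $\pi$ has a corner there and is therefore not the unit arc $C_1^{uv}$. Consequently, \Cref{lemma:local_structure} produces a strictly better solution, contradicting the optimality of $\solution$.

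The hardest step is the orientation bookkeeping in the second case: we need $u$ and $v$ so that $\pi$ lies in $\partial \solution$ with $\solution$ consistently on its right and so that the straight segment $\overrightarrow{uv}$ has the same orientation relative to $\solution$. The first requirement reduces to a local analysis at $p$: since $D$ is disjoint from $\partial \polygons$ and $p$ is the only singular boundary point inside $D$, the structure of $\solution \cap D$ is fully determined by $f_1 \cup f_2$, and exactly one of the possible traversal directions through $p$ keeps $\solution$ on the right throughout; this choice dictates how to orient $u$ and $v$. The second requirement then follows by continuity as $u, v$ tend to $p$.
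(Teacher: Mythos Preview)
Your proposal is correct and follows essentially the same idea as the paper's proof: exhibit a non-circular free boundary piece at the shared point and invoke the circularity property to contradict optimality. The paper's version is more compact—rather than splitting into tangent/non-tangent cases and invoking the underlying local-improvement \Cref{lemma:local_structure}, it simply swaps the crossing at $x$ to obtain the two kinked curves $h_1,h_2$, observes that this does not change the objective value, and cites P2 of \Cref{prop:arc_properties:full} directly; your explicit treatment of the coincident-tangent case and the orientation bookkeeping are details the paper leaves implicit.
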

\begin{proof}
Assume otherwise, so that there exists an inclusion-maximal free boundary piece $f$ with endpoints $u$ and $v$ and another inclusion-maximal free boundary piece $f'$ with endpoints $u'$ and $v'$ such that $x\in f\cap f'$. By \done{P2, P3 and P5} of \Cref{prop:arc_properties:full}, $f$ and~$f'$ are circular arcs and the region between them is part of the solution. Let $h_1$ be the concatenation of the subcurve of $f'$ from $u'$ to $x$ and then $f$ from $x$ to $v$. Similarly, let $h_2$ be composed of the part of $f$ from $u$ to $x$, followed by the subcurve of $f'$ from $x$ to $v'$. Replacing $f$ and $f'$ with $h_1$ and $h_2$ in $\solution$ does not change the objective value. However, since $h_1$ and $h_2$ are not circular arcs, $\solution$ cannot be an optimal solution by P2 of~\Cref{prop:arc_properties:full}.
\end{proof}
\subsection{Existence of an Optimal Solution}
We next show that our problem always has an optimal solution in which all boundary curves are piecewise differentiable.
\begin{theorem}\label{prop:there_exists_a_solution:full}
    Let $\polygons$ be an instance of the aggregation problem $P_F^1$. There exists a solution $\solution$ that minimizes $g_1$, such that the free pieces comprising the boundary $\partial \solution$ form a collection of differentiable ($C^1$) curves. 
\end{theorem}

We will show that the existence and piecewise $C^1$ regularity of the optimal solution of the unrestricted problem follows from deep results in geometric measure theory concerning approximations of perimeters that locally converge in measure. There are other approaches that could be used for the problem, similar to the classical isoperimetric inequality, including the Arzel\` a-Ascoli theorem. The overall strategy will be to first show the existence (\Cref{prop:existenceminimizer}) of a minimizer as a bounded set of finite perimeter, and to subsequently deduce (\Cref{prop:regularsolutions}) the $C^1$ regularity of the free pieces, so the theorem then follow from combining these two results. 

First, we replace the perimeter inside an open set by a more versatile variant requiring less regularity.
\begin{definition}[Relative perimeter]
Let $U\subset\mathbb{R}^2$ be open and let $E\subset\mathbb{R}^2$ be measurable. The perimeter of $E$
relative to $U$ is
\[
\Per(E;U)
:=
\sup\left\{
\int_E \operatorname{div}\varphi\,dx
:
\varphi\in C_c^1(U,\mathbb{R}^2),\ \|\varphi\|_{L^\infty}\le 1
\right\}.
\]
If $\Per(E;\mathbb{R}^2)<\infty$, we say that $E$ is a set of finite perimeter.
If $\partial E$ is piecewise $C^1$, then 
\[
\Per(E;U)=\LL(\partial E\cap U).
\]
\end{definition}
We only give the definition for completeness' sake and will ultimately only need some abstract properties of the relative perimeter, and not use the exact definition explicitly. 
The following is a standard tool of modern geometric measure theory used to deduce regularity results for $E$. The idea is to capture properties similar to lower-semicontinuity in a measure theoretic sense. We write $M \Subset N$ for subsets of $\mathbb{R}^2$ whenever both $\overline{M}\subset N$ and $\overline{M}$ is compact (bounded).

\begin{definition}[$(\Lambda,r_0)$-perimeter minimizer]
Let $U\subset\mathbb{R}^2$ be open, let $\Lambda\ge 0$, and let $r_0>0$. A measurable set
$E\subset\mathbb{R}^2$ is called a \emph{$(\Lambda,r_0)$-perimeter minimizer in $U$} if for every
ball $B_r(x)\Subset U$ with $0<r<r_0$, and for every measurable set $F\subset\mathbb{R}^2$
such that $E\Delta F \Subset B_r(x)$, one has
\[
\Per(E;B_r(x))
\le
\Per(F;B_r(x))+\Lambda\,A(E\Delta F).
\]
\end{definition}
The following two theorems represent the main ingredients for our regularity results, each collecting a series of established results in the literature. 
\begin{theorem}[{\cite[Proposition~3.38, Theorem~3.39]{AFP}}]\label{thm:BVcompactness}
Let $(E_h)$ be a sequence of measurable subsets of a compact set $K\subset\mathbb{R}^2$ with $\sup_h \bigl( A(E_h)+\Per(E_h;\mathbb{R}^2) \bigr)<\infty$.
Then there exist both a subsequence $(E_{h_k})$ and a finite-perimeter measurable set $E\subset K$ such that
\[
A(E_{h_k}\Delta E)\to 0 \text{ and } \Per(E;\mathbb{R}^2)\le \liminf_{k\to\infty}\Per(E_{h_k};\mathbb{R}^2).
\]
\end{theorem}

\begin{theorem}
\label{thm:quasimin-regularity}
Let $U\subset\mathbb{R}^2$ be open, such that there exists a $(\Lambda,r_0)$-perimeter minimizer $E$ in $U$. Then the free pieces of the boundary of $E$ are $C^1$. 
\end{theorem}
The exact statement of \Cref{thm:quasimin-regularity} comes from two places. First, there is a statement for only part of the boundary, the so-called reduced boundary away from singular points~\cite[Theorems~26.3,~26.5]{Maggi2012}. The fact that there are no singular points in the planar case follows from the fact that the singular set is empty in dimensions $<8$~{\cite[Theorem 28.1]{Maggi2012}}. 

\begin{proposition}\label{prop:existenceminimizer}
There exists a feasible minimizer $\solution_\ast$ of $g_1$.
\end{proposition}

\begin{proof}
Consider a sequence of feasible solutions $\solution_h$, which we treat as regions, such that $g_1(\solution_{h})$ converges to the infimum of $g_1$. 
By passing to a subsequence we can assume $\sup_h \bigl( A(\solution_h)+\Per(\solution_h;\mathbb{R}^2)\bigr)<\infty$. Moreover, we can assume that every set in $\solution_h$ is contained in the convex hull $\conv(\polygons)$ of the input polygons. Indeed, it suffices to note that replacing $\solution_h$ with the solution obtained from intersecting the sets in $\solution_h$ with $\conv(\polygons)$ decreases both the area and the perimeter. In particular, there exists a fixed compact set $K\subset\mathbb{R}^2$ such that $\solution_h\subset K$ for all $h$.

\Cref{thm:BVcompactness} implies that there exists a subsequence together with a finite-perimeter set
$\solution_\ast\subset K$ such that $A(\solution_h\Delta \solution_\ast)\to 0$ as $h\to\infty$, and $\Per(\solution_\ast;\mathbb{R}^2)\le \liminf_{h\to\infty}\Per(\solution_h;\mathbb{R}^2)$.

Since $\polygons\subset \solution_h$ for every $h$, the characteristic function $\chi_{\solution_h}$ of the set $\solution_h$ satisfies $\chi_{\solution_h}=1$ almost everywhere (a.e.) on $S$. Along with $A(\solution_h\Delta \solution_\ast)\to 0$, this implies that $\chi_{\solution_\ast}=1$ a.e. on $S$. After changing $\solution_\ast$ on a null set if necessary, we have $\polygons\subset \solution_\ast$, so $\solution_\ast$ is a feasible solution.

Observe that $A(\solution_h)\to A(\solution_\ast)$ because $A(\solution_h\Delta \solution_\ast)\to 0$, so we obtain (again using \Cref{thm:BVcompactness})
\[
g_1(\solution_\ast)
=
A(\solution_\ast)+\Per(\solution_\ast;\mathbb{R}^2)
\le
\liminf_{h\to\infty}\bigl(A(\solution_h)+\Per(\solution_h;\mathbb{R}^2)\bigr)
=
\inf_{\solution} g_1(\solution),
\]
implying equality and concluding the proof.
\end{proof}

\begin{proposition}\label{prop:regularsolutions}
Let $\solution_\ast$ be a minimizer of $g_1$. Then every connected component of $\partial \solution_\ast\cap (\mathbb{R}^2\setminus \polygons)$ is of class $C^1$.
\end{proposition}

\begin{proof}
Let $U\Subset \Omega=(\mathbb{R}^2\setminus \polygons)$, so $\overline U\cap S=\emptyset$ and $d:=\operatorname{dist}(\overline U,S)>0$. Setting $r_0:=\frac d2$, we have $B_r(x)\Subset \Omega$ for $0<r<r_0$ and $x\in U$. We will show that $\solution_\ast\cap U$ is a perimeter minimizer for $U$, so that \Cref{thm:quasimin-regularity} concludes the proof as $U$ is arbitrary. To this end, let $F$ be a measurable set with $\solution_\ast\Delta F \Subset B_r(x)$. We have $F=\solution_\ast$ except in $B_r(x)$, and $B_r(x)\cap \polygons=\emptyset$, so $\polygons\subset F$, meaning $F$ is a feasible solution for the minimization problem. By minimality of $\solution_\ast$, $A(\solution_\ast)+\Per(\solution_\ast;\mathbb{R}^2)\le A(F)+\Per(F;\mathbb{R}^2)$, and thus 
\[
A(\solution_\ast\cap B_r(x))+\Per(\solution_\ast;B_r(x))
\le
A(F\cap B_r(x))+\Per(F;B_r(x)).
\]
This in turn yields
\[
\Per(\solution_\ast;B_r(x))
\le
\Per(F;B_r(x)) + \big(A(F\cap B_r(x))-A(\solution_\ast\cap B_r(x))\big)\le \Per(F;B_r(x)) + A(\solution_\ast\Delta F),
\]
since $\bigl|A(F\cap B_r(x))-A(\solution_\ast\cap B_r(x))\bigr|
\le
A(\solution_\ast\Delta F)$.

Thus, $\solution_\ast$ is a $(1,r_0)$-perimeter minimizer in $U$. Since $U$ was arbitrary, \Cref{thm:quasimin-regularity} applies at
every point of $\partial \solution_\ast\cap \Omega$, so every connected component of
$\partial \solution_\ast\cap \Omega$ is of class $C^1$.
\end{proof}



\subsection{A Polynomial Time Algorithm for Circular Input Polygons}
    After establishing the characterization of the free pieces and the existence of a solution in the most general setting, we now restrict ourselves to the input class of circular polygons (cf. \cref{sec:problem_def}), which is also the setting considered in the main body of this work. This restriction is needed since for general differentiable curves, the number of arcs that satisfy Proposition~\ref{prop:arc_properties:full} can be infinite. However, to develop a polynomial-time algorithm for~$\freeProblem$, we require a polynomial bound (in the input size) on the number of candidate arcs.
    For practical purposes, this is not really a restriction since cartography applications and fencing problems usually only consider polygonal inputs. The extension to circular polygons is only needed for the preprocessing algorithm given in \Cref{sec:preprocessing:appendix},
    {which successively replaces parts of the input~$\polygons$ with optimal solutions for subsets $\polygons'\subseteq\polygons$.
    Because these solutions contain circular arcs by \Cref{prop:arc_properties:full}, the input becomes a set of circular polygons.}

For the class of circular polygons, the solutions for $\alpha=\infty$ have a simple structure.
\begin{lemma}\label{lem:infinity_edges}
    {For an instance~$\polygons$ of~$P_F^\infty$ given by circular polygons, every maximal free piece~$f$ in an optimal solution~$\solution$ is a straight-line segment between two polygon vertices~$u,v \in V(\polygons)$ that does not intersect~$\polygons$ properly.}
\end{lemma}
\begin{proof}
By \cref{lem:infinity}, every optimal solution region $S$ is the convex hull of the polygons assigned to it.
For any optimal solution region $S$ with the set of contained polygons $\polygons(S)$ we have
\(
\conv\!\Big(\bigcup_{B\in\polygons(S)} B\Big)
=
\conv\!\Big(\bigcup_{B\in\polygons(S)} \conv(B)\Big).
\)
Since all arcs of the circular polygons bend inwards, $\conv(B)$ is a convex polygon whose vertices are contained in $V(B)$.
Hence, every extreme point of 
\(
\conv\!\Big(\bigcup_{B\in\polygons(S)} \conv(B)\Big)
\)
lies in $V(\polygons(S))$. Consequently, all edges of $S$ and in particular all maximal free pieces of $S$, are straight-line segments between vertices of $V(\polygons(S)) \subseteq V(\polygons)$.
\end{proof}
Next, we limit the number of inclusion-maximal free pieces that have to be inspected to find an optimal solution. For this, we need one more lemma that handles ambiguous cases where an infinite number of different arcs have the same objective value and connect the same boundary objects.
\begin{restatable}{lemma}{distanceTwo}
    \label{lemma:distance2}
    \done{For an instance~$\polygons$ of~$\freeProblem$ with~$\alpha \in (0,\infty)$}, there exists an optimal solution in which no free boundary piece
    \begin{enumerate}
        \item connects the interiors of two parallel straight-line edges, or
        \item connects a vertex $u\in V(\polygons)$ with the interior of a circular boundary edge that is a sub-curve of a circle centered at $u$.
    \end{enumerate}

\end{restatable}%
\begin{figure}
    \centering
    \begin{subfigure}[b]{0.32\textwidth}
        \centering
        \includegraphics[page=2]{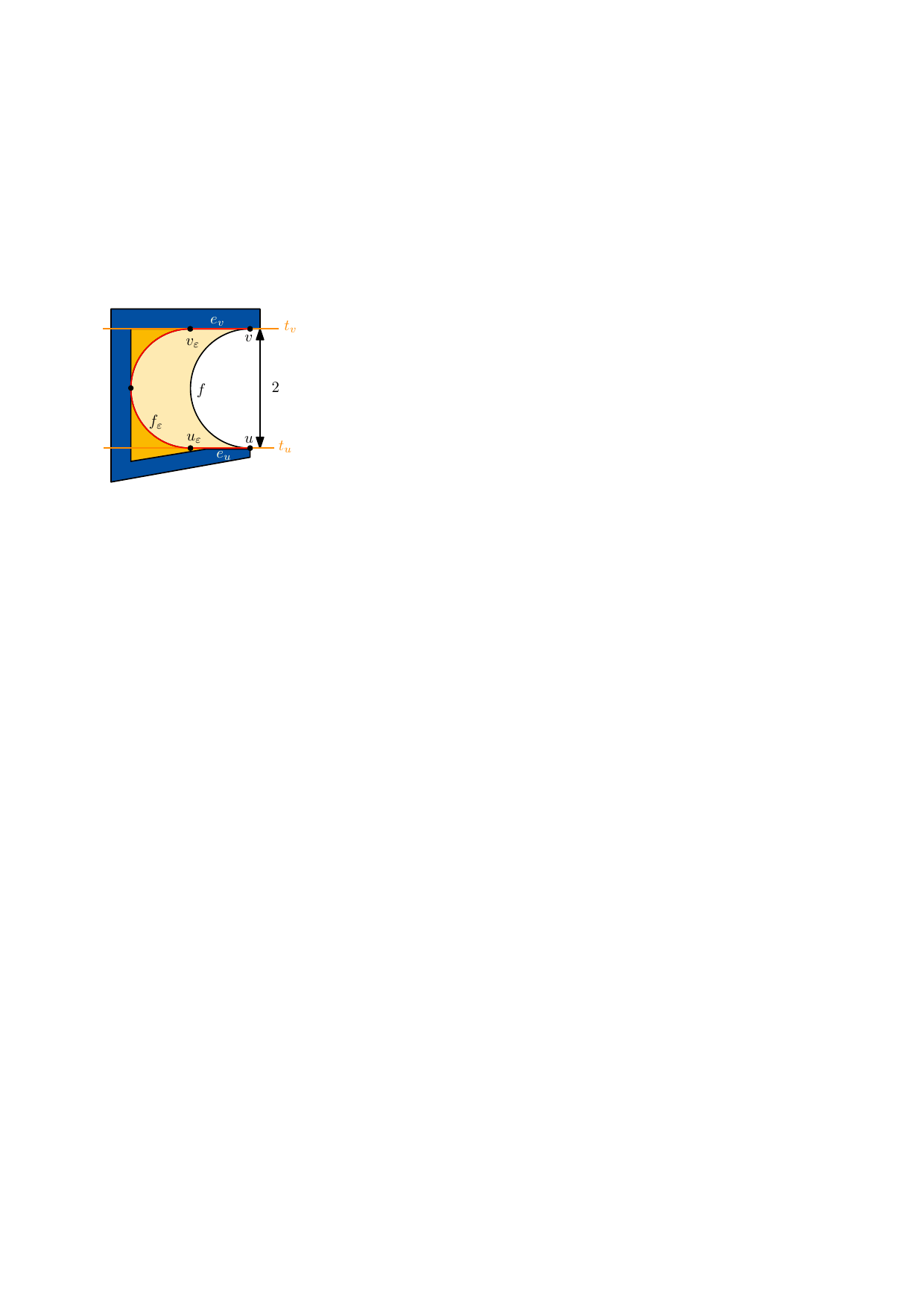}
    \end{subfigure}
    \begin{subfigure}[b]{0.32\textwidth}
        \centering
        \includegraphics[page=4]{images/new/Distance_2_parallel.pdf}
    \end{subfigure}
    \begin{subfigure}[b]{0.32\textwidth}
        \centering
        \includegraphics[page=6]{images/new/Distance_2_parallel.pdf}
    \end{subfigure}
    \caption{Visualization of the proof of~\Cref{lemma:distance2}.
    We move~$f$ inwards until it touches another free piece (middle) or an input polygon (left, right).
    The left figure corresponds to case a), the middle to case b), and the right to case c).
    In the first two cases, we show that~$f$ cannot be part of an optimal solution. 
    On the right, both~$f$ and~$f_\varepsilon$ are optimal for~$\alpha=1$. In particular, for both arcs the objective value is $6-\frac{3}{2}\pi(\thickapprox1.3)$ less than if we add nothing at all.}
    \label{fig:Distance2Possible:full}
\end{figure}

\begin{proof}[Proof of 1] 
    By~\Cref{obs:scale:paper}, we assume that $\alpha=1$.
    Let~$\solution$ be an optimal solution, which exists by~\Cref{prop:there_exists_a_solution:full}, and let~$f=\sarc{1}{uv}$ a free piece with endpoints $u$ and $v$ in the interior of two parallel input straight line edges $e_u,e_v$. Due to P4 of~\cref{prop:arc_properties:full} the only way for $f$ to be optimal is if the central angle of $f$ is exactly $\pi$ and $\ell(\overline{uv})=2$.
    The tangents $t_u$ and $t_v$ of the circle of radius $1$ through $u$ and $v$ contain $e_u$ and $e_v$, respectively.
    
    For every $\varepsilon>0$, we define $u_\varepsilon$ (resp.~$v_\varepsilon$) as the point on $t_u$ (resp.~$t_v$) with distance $\varepsilon$ to $u$ (resp.~$v$)  
    in the direction of the arc, at $u$ (resp.~$v$).
    Further, let $f_\varepsilon$ be the curve consisting exactly of $\overline{u u_\varepsilon}$, the circular arc $C_1^{u_\varepsilon v_\varepsilon}$, and $\overline{v_\varepsilon v}$. See Figure~\ref{fig:Distance2Possible:full} for an example. 
    Now, let $\varepsilon>0$ be the maximum value such that $f_\varepsilon$ touches $\partial\solution\setminus f$ and does not intersect it properly. Note that $\varepsilon$ is properly larger than zero, since $u$ and $v$ are in the interior of straight line edges.
    Let~$\solution'$ denote the solution that replaces $f$ with $f_{\varepsilon}$.
    This increases the perimeter by $2\varepsilon$ and decreases the area by the same amount, so the objective value does not change.
    Hence, $\solution$ is optimal if and only if~$\solution'$ is optimal.

    \begin{enumerate}[a)]
        \item Consider the case that $\overline{u u_\varepsilon}\not\subset \partial\polygons$ or $\overline{v v_\varepsilon}\not\subset \partial\polygons$. Then $\solution'$ contains a free boundary piece that is not a circular arc. Hence, it follows by P2 of~\Cref{prop:arc_properties:full} that $\solution$ cannot be optimal.
        \item Consider the case that $f_\varepsilon$ touches another free boundary piece.
        By~\Cref{cor:disjointness:full}, the free boundary pieces are disjoint except possibly at endpoints. Hence, $\solution$ cannot be optimal.
        \item The remaining case is that the circular arc between $u_\varepsilon$ and $v_\varepsilon$ touches $\partial\polygons$.
        Then the circular arc has been split into two free pieces (separated by a constrained piece consisting of a single point), where the corresponding new free pieces do not connect the interior of parallel edges.
    \end{enumerate}
    We can handle each free boundary piece with endpoints on parallel edges in this way.
    This way, we either show that $\solution$ is not optimal or we obtain an optimal solution of $P_F^\alpha$ where the endpoints of every free boundary piece are not on parallel edges.
\end{proof}
\begin{proof}[Proof of 2]
By~\Cref{obs:scale:paper}, we assume that $\alpha=1$.
Let $u$ be a boundary vertex and $h$ a circular boundary edge such that the inducing circle has $u$ as its center point. Let $f$ be a circular arc that is part of an optimal solution and connects $u$ and the interior of $h$. Consequently, it satisfies \cref{prop:arc_properties:full}. Thus, the center $c$ of $f$ has distance $1$ to $u$ as well as $h$. However, since the center point of $h$ is also $u$, this implies that $f$ is a semicircle with central angle $\pi$ and radius $1$. Since $f$ is part of an optimal solution, its interior does not touch any other boundary piece. Hence, $f$ can be rotated around $u$ by a small angle $\beta$ in either direction, producing a new arc $f'$ that still connects $u$ and $h$.
Rotating $f$ changes the objective value by $\frac{\beta}{2\pi}(4\pi -4\pi\cdot1^2)=0$ in one direction and by $\frac{\beta}{2\pi}(-4\pi  + 4\pi\cdot1^2)=0$ in the other direction. 
Hence, the objective value remains unchanged and we can rotate $f$ until it reaches an endpoint of $h$ or touches another boundary piece.
Thus, there exists an optimal solution that does not connect the interior of $h$ with the vertex $u$.
\end{proof}

With this we can define a set of free pieces such that there \emph{exists} an optimal solution, that uses only maximal free pieces from this set and, using ~\Cref{prop:arc_properties:full}, \Cref{lemma:distance2} and~\Cref{lem:infinity_edges}, we can show that it has polynomial size.
\begin{definition}\label{def:arcs}
    \done{Let~$\polygons$ be an instance of~$\freeProblem$. For~$\alpha\in(0,\infty)$}, let~$\allowedArcs$ denote the set of circular arcs~$f=\sarc{\alpha}{uv}$ such that
    \begin{itemize}
        \item the endpoints~$u$ and~$v$ lie on the input boundary~$\partial \polygons$, 
        \item {$u$ and $v$ are not in the interior of two parallel straight line edges,
        \item if $u \in V(\polygons)$ and $v$ lies in the interior of a circular arc $h$ (or vice versa), then $u$(or $v$) is not the center point of $h$,}
        \item the arc~$f$ fulfills properties P1--P5 of~\Cref{prop:arc_properties:full},
    
        \item and the arc~$f$ does not intersect~$\polygons$ properly.
    \end{itemize}
    \done{We define~$F_0(\polygons)=\emptyset$ and~$F_\infty(\polygons)$ as the set of straight-line segments~$\overline{uv}$ with~$u,v \in V(\polygons)$ that do not intersect~$\polygons$ properly.}
    Let~$\subdivision_C$ denote the subdivision of~$\conv(\polygons)\setminus\polygons$ induced by~$\allowedArcs$ and~$E(\polygons)$, \done{and let~$|\subdivision_C|$ denote the total number of vertices, edges and cells in~$\subdivision_C$.}
\end{definition}

{Note that~$F_0(\polygons)$ and~$F_\infty(\polygons)$ are the limits of~$\allowedArcs$ for~$\alpha\to0$ and~$\alpha\to\infty$, respectively.
For~$\alpha=0$, circular arcs become impossible to form because they would have to have radius~$0$.
Conversely, the limit of a circular arc~$\sarc{\alpha}{uv}$ for~$\alpha\to\infty$ is the straight-line segment~$\overline{uv}$.
If~$u$ or~$v$ lies in the interior of an edge~$e$, then~$\overline{uv}$ can only be tangential on~$e$ if $\overline{uv}$ intersects the corresponding input polygon properly.
Hence, $u$ and~$v$ must be vertices.}
We {now} show that~$\subdivision_C$ has polynomial size and is sufficient to construct an optimal solution.
\begin{lemma}
    For an instance $\polygons$ of $\freeProblem$ that consists of circular polygons with~$|V(\polygons)|=n$, the set~$\allowedArcs$ has size~$\bigO{n^2}$.
    \label{lemma:relevantEdges:full}
\end{lemma}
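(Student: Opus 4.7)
The plan is to partition $\allowedArcs$ according to the type of each endpoint (namely, whether it is a polygon vertex in $V(\polygons)$ or lies in the relative interior of some polygon edge $e \in E(\polygons)$), and then to bound the number of arcs in each class using the tangency constraint P4 together with the fact that the radius $\alpha$ is fixed. Since there are $O(n)$ vertices and $O(n)$ edges, each of the resulting classes will contribute $O(n^2)$ arcs, and summing gives the bound.

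\textbf{Case 1 (vertex--vertex).} Consider arcs $f=\sarc{\alpha}{uv}$ with $u,v \in V(\polygons)$. There are $O(n^2)$ ordered vertex pairs, and by the definition of $\sarc{\alpha}{uv}$ (fixed radius $\alpha$, central angle $<\pi$, lying to the right of $\overrightarrow{uv}$), each ordered pair gives rise to at most one such arc. So this class has size $O(n^2)$.

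\textbf{Case 2 (vertex--edge interior).} Consider arcs with one endpoint $u \in V(\polygons)$ and the other endpoint $v$ in the relative interior of an edge $e \in E(\polygons)$. By P4, $f$ is tangent to $e$ at $v$, so the center of the supporting circle lies on the line $L_e$ parallel to $e$ at distance $\alpha$ (and on a specified side of $e$). The center must also lie on the circle of radius $\alpha$ around $u$. A line and a circle intersect in at most two points, so there are at most a constant number of candidate centers; each determines a unique arc in $\allowedArcs$. Since there are $O(n)$ vertex choices and $O(n)$ edge choices, this class has size $O(n^2)$.

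\textbf{Case 3 (edge interior--edge interior).} Consider arcs with both endpoints in the relative interiors of edges $e_1, e_2 \in E(\polygons)$. By P4 applied at both endpoints, the center of the supporting circle lies simultaneously on $L_{e_1}$ and $L_{e_2}$ (on prescribed sides). If $e_1$ and $e_2$ are not parallel, these two lines intersect in a single point, yielding one candidate arc per choice of sides. If they are parallel, there can only be a common point at distance $\alpha$ from both if the perpendicular distance between them equals $2\alpha$; but then any tangent chord has length at least $2\alpha$, which is excluded by the strict inequality $\LL(\overline{uv}) < 2\alpha$ in the definition of $\allowedArcs$. Hence each of the $O(n^2)$ edge pairs yields a constant number of arcs.

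\textbf{Conclusion and obstacle.} Summing the three cases gives $|\allowedArcs| \in \bigO{n^2}$. The main technical subtlety is the degenerate edge--edge case with parallel edges, but this is precisely the ambiguity that motivated the strict inequality in the definition of $\allowedArcs$ (cf.\ \Cref{lemma:distance2}), so no additional arcs arise there. All other arithmetic is routine enumeration of at most two intersections of a line with a circle, or of two lines.
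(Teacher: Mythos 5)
Your proof is correct and follows essentially the same route as the paper's: classify arcs in \allowedArcs by the types of their endpoints, use the tangency property P4 to bound the number of arcs per vertex--edge and edge--edge pair by a constant, and rule out parallel edge pairs via the strict inequality $\LL(\overline{uv}) < 2\alpha$ enforced in the definition (cf.\ \Cref{lemma:distance2}). The differences are cosmetic: your explicit counting of circle centers on the offset line $L_e$ even makes the vertex--edge case slightly more careful than the paper's ``at most one choice of $v$'' (a line meets a circle in up to two points), and you omit only the trivial corner cases $\alpha \in \{0, \infty\}$, which the paper dispatches in one sentence.
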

\begin{proof}
Consider an arc~$f=\sarc{\alpha}{uv} \in {F_\alpha}(\polygons)$.
Let~$C_f$ be the circle inducing~$f$, and let~$c$ be its center.
If~$v$ is an input boundary vertex, then~$c$ must lie on the circle with radius~$\alpha$ centered at~$v$.
Otherwise, $v$ lies in the interior of an edge~$e$, which is either a line segment or a circular arc.
Then the radius~$\overrightarrow{vc}$ must be orthogonal to~$e$ (because~$f$ is tangential on~$e$) and point to the right of~$e$ (because the interior of the circular polygon is to the left of~$e$).
Furthermore, $c$ must have distance~$\alpha$ to~$v$.
Thus, the possible locations of~$c$ form an open line segment or an open circular arc to the right of~$e$ at distance~$\alpha$.
The same argument applies to the starting point~$u$.
Consequently, the center~$c$ must lie at the intersection of two curves, each of which is either a circle, a circular arc, or a line segment.
If the two primitives are different, that is, if one is a line segment and the other is a circle or circular arc, then there are at most two possible locations for~$c$.
Thus, we only need to consider the following special cases in which the two center-defining curves may overlap:
(1) straight-line with straight-line,
(2) boundary vertex with circular arc, and
(3) circular arc with circular arc.

In the first two cases, there can be an infinite amount of valid positions, but by \cref{lemma:distance2} all of them can (as already done in the definition of $\allowedArcs$) be safely ignored. Next, we show that the third case is not possible at all.
Let $h$ and $h'$ be the circular arcs and $t$ and $t'$ their corresponding center-position arcs.
If $t$ and $t'$ overlap, then they both lie on the same circle $C$ with center $z$ and share a common angular interval.
By construction of $t$ and $t'$, the arcs $h$ and $h'$ also have the same center. Consequently, they also both lie on some circle centered in $z$ and overlap over the same angular interval as $t$ and $t'$. This implies that the corresponding input regions overlap, which is impossible.

Overall it follows that the number of free arcs in $\allowedArcs$  connecting any pair of boundary objects (i.e., arcs, edges, and vertices) is constant.
Because the number of boundary objects is in~$\bigO{n}$, it follows that~${F_\alpha}(\polygons) \in \bigO{n^2}$.
\end{proof}
Note that there may be two arcs connecting the same pair of boundary objects $b_1,b_2$, but they are uniquely defined by their direction, since one of the arcs starts at $b_1$ and the other one at $b_2$. Thus, For input edges \( e, h \in E(\polygons) \), we can write \( \sarc{r}{eh} \) to denote the circular arc~$\sarc{r}{uv}$ such that~$u$ is tangential on and lies in the interior of~$e$, and~$v$ is tangential on and lies in the interior of~$h$.

For $\alpha$-circular polygons the arcs that need to be considered for $F_\alpha(\polygons)$ can be further reduced. Specifically, if we optimize an $\alpha$-circular polygon for some $\alpha'\geq\alpha$ we do not need to consider any arcs ending in the interior of  circular boundary arcs.
\begin{observation}\label{obs:no_arc_in_interior_of_arc}
    Let $\mathcal{B}$ be a set of $\alpha$-circular polygons. Then in an optimal solution for~$\alpha'\geq\alpha$, no arc ends in the interior of a circular boundary arc.
\end{observation}
\begin{proof}
Let $e$ be a boundary arc of a polygon $B\in\polygons$. Then every e arc $f\in {F_\alpha}(\polygons)$ that ends in the interior of $e$ has to have \(\beta_\polygons(f,e) = \pi\). Since it has to be to the left of $e$ and has radius $\alpha'\geq\alpha$, the circle induced by $f$ completely contains $e$ (possibly on the boundary for $\alpha=\alpha'$). Thus, either the arcs $f$ and $e$ coincide or $e$ intersects $B$ properly and $e$ cannot be part of~${F_\alpha}(\polygons)$.
\end{proof}

\freeSubdivision*
\begin{proof}
    By~\Cref{lemma:relevantEdges:full}, the set~$\allowedArcs$ has size~$\bigO{n^2}$.
    Each pair of arcs in~$\allowedArcs \cup E(\polygons)$ intersects properly at most a constant number of times.
    With~$|E(\polygons)|=n$, it follows that the number of vertices and cells in~$\subdivision_C$ is in $\bigO{n^4}$.

    By~\Cref{prop:arc_properties:full}\done{, \Cref{lemma:distance2} and~\Cref{lem:infinity}}, as well as the definition of inclusion-maximal free pieces, there is an optimal solution~$\solution$ for~$\freeProblem$ whose inclusion-maximal free pieces are contained in~$\allowedArcs$.
    By construction, every boundary piece in~$\solution$ is a path in~$\subdivision_C$.
    Hence, $\solution$ is a solution for~$\problem{\subdivision_C}$.
    Every optimal solution for~$\problem{\subdivision_C}$ is also a solution for~$\freeProblem$.
    To be optimal for~$\problem{\subdivision_C}$, it must have the same objective value as~$\solution$, so it is also optimal for~$\freeProblem$.
\end{proof}

{Leveraging planar multi-source-multi-sink minimum cut algorithms \cite{DBLP:journals/siamcomp/BorradaileKMNW17,DBLP:conf/icalp/GawrychowskiK18} and a graph transformation adapted from~\cite{DBLP:journals/dcg/AbrahamsenGLR20}, we show in \cref{sec:Rottmann_transformation} that for any subdivision $\mathcal{D}$, the problem $\problem{\subdivision}$ can be solved in quasilinear-time in the size of $|\mathcal{D}|$. Using these results we get a polynomial-time algorithm for \freeProblem we call \emph{unconstrained polygon aggregation optimizer} algorithm~$\optimizer(\polygons)$ which is summarized in the following:

\begin{enumerate}
    \item Compute the arcs \( F_\alpha(\polygons) \).
    \item Build the subdivision \( \subdivision_C \) from \( F_\alpha(\polygons) \) and \( E(\polygons) \).
    \item Build the auxiliary graph from \( \subdivision_C \) (see \cref{sec:Rottmann_transformation}).
    \item Compute the minimum cut MC.
    \item Derive the regions of \solution from MC and $\subdivision_C$.
\end{enumerate}

In our preprocessing algorithms in Section~\ref{sec:preprocessing:appendix}, the relevant arcs ~$F_\alpha(\polygons)$ sometimes are already precomputed. Thus, we denote with~$\optimizer(\polygons,F_\alpha(\polygons))$ the algorithm that skips the first step and uses the pre-computed arcs.

\algorithm*
\begin{proof}
\done{Because every polygon vertex appears in~$\subdivision_C$, we have~$|\subdivision_C|\in\Omega(n)$.
Hence, the subdivision can be constructed in~$\bigO{|\subdivision_C| \log n} \subseteq \mathcal{O}\left(|\subdivision_C| \frac{\log^3|\subdivision_C|}{\log^2 \log |\subdivision_C|}\right)$ time by computing all the intersection points of~$\allowedArcs$ and~$E(\polygons)$.
Then, using \Cref{theorem:multisorucesink} in \Cref{sec:Rottmann_transformation}, an optimal solution~$\solution$ for~$\problem{\subdivision_C}$ can be computed in~$\mathcal{O}\left(|\subdivision_C| \frac{\log^3|\subdivision_C|}{\log^2 \log |\subdivision_C|}\right)$ time.
By~\Cref{lemma:freeSubdivision:paper}, this is an optimal solution for~$\freeProblem$.}
\end{proof}
The preprocessing algorithm presented in \cref{sec:preprocessing:appendix} sometimes computes intermediate instances in which some subproblems contain properly intersecting circular polygons. We therefore show that optimal solutions can still be computed in this setting without increasing the running time. 
To see that this holds even though the union of \(\polygons\) may contain a quadratic number of vertices and edges, we rely on one more insight derived from P4~in Proposition~\ref{prop:arc_properties:full}. 
\begin{observation}
    \label{cor:reflex_vertex_is_boring}
    Let $B \in \polygons$ be a circular input polygon and let $v$ be a vertex of $B$. If the exterior region angle satisfies $\beta_\polygons(v) < \pi$, then no arc $f \in F_\alpha(\polygons)$ ends at $v$.
\end{observation}
\begin{proof}
For every arc $f$ in $F_\alpha(\polygons)$ that ends  at a point $v$ on the input region~$B$, it holds that $\beta_\polygons(v)\geq\beta_\polygons(f, v)$. Hence, $\beta_\polygons(v) < \pi$ implies $\beta_\polygons(f,v) < \pi$. The claim then follows directly from Proposition~\ref{prop:arc_properties:full}.
\end{proof}
\begin{corollary}[Properly Intersecting Circular Polygons]\label{theorem:circ_polygons}
    Let $\polygons$ be a set of properly intersecting circular polygons  with~$|V(\polygons)|=n$.
    The subdivision~$\subdivision_{C}$ induced by the arcs $F_\alpha(\polygons)$ has size~$|\subdivision_{C'}|\in\bigO{n^4}$ and an optimal solution can be computed in~$\bigO{|\subdivision_{C'}| \frac{\log^3|\subdivision_{C'}|}{\log^2 \log |\subdivision_{C'}|}}\subseteq \bigO{n^4 \frac{\log^3n}{\log^2 \log n}}$ time.
\end{corollary}
\begin{proof}
Note that computing the union of the polygons takes time  $\bigO{n^2\log n}$. 
Let $\hat{\polygons}$ denote the union of the polygons in $\polygons$. It is sufficient to show $|F_\alpha(\hat{\polygons})|\in \bigO{n^2}$.  It may happen that $|V(\hat{\polygons})| = \bigO{n^2}$ and $|E(\hat{\polygons})| = \bigO{n^2}$, which could lead to $\bigO{n^4}$ possible arcs, but we show that the number of arcs in $F_\alpha(\hat{\polygons})$ is still in $\bigO{n^2}$.
First, observe that every newly introduced vertex $v \in V(\hat{\polygons})$ on some polygon $\hat{B}$ that does not coincide with an original input vertex must result from the intersection of the interiors of two edges $e_1\in E( B_1)$ and $e_2\in E( B_2)$ of some polygons $B_1$ and $B_2$. Since $B_1$ and $B_2$ overlap $\beta_{{\polygons}}(v)$ is not necessarily well defined for all points on $B_1$ and $B_2$. Thus, we abuse notation slightly and denote with $\beta_{B_i}(v)$ the exterior region angle of $v$ on $B_i$.
Then, we have $\beta_{\hat{\polygons}}(v)\leq\beta_{B_1}(v)$ as well as  $\beta_{\hat{\polygons}}(v)\leq \beta_{B_2}(v)$.
Since we have $\beta_{B_2}(v)=\beta_{B_1}(v)=\pi$, we get $\beta_{\hat{\polygons}}(v)\leq\pi$ and by Corollary~\ref{cor:reflex_vertex_is_boring} the only arcs in $F_\alpha(\hat{\polygons})$ that can end in $v$ have $\beta_{\hat{\polygons}}(v)=\pi$, i.e., they are tangential on $e_1$ or $e_2$.

A segment~$e \in E(\polygons)$ may be split into a set~$\hat{E}$ of $\bigO{n}$ edges in $E(\hat{\polygons})$.
For an arc $f$ to end in the interior of an edge~$\hat{e} \in \hat{E}$, it must be tangential to~$\hat{e}$.
Tangentiality to $\hat{e}$ is equivalent to being tangent to the inducing object of $e$, i.e., a circle or line.

Hence, it suffices to perform the tangentiality test for this object, which covers all edges in~$\hat{E}$ at once.
For a fixed other endpoint of~$f$ (either a vertex or an edge), there can only be a constant number of such arcs.
Thus, overall, we only obtain $\bigO{n^2}$ arcs of the types vertex-vertex, vertex-edge, and edge-edge.
\end{proof}
\subsection{Additional Structural Properties}

Finally, we discuss two nesting properties, which enable many engineering techniques. The first nesting properties considers how solutions change if we increase $\alpha$. Here we only present the result, since the proof follows directly from the proof of Lemma~2 in~\cite{rottmann2024bicritshapes}, which is applicable as long as intersections and unions of solutions also are solutions.

\alphaNestedness*
Moreover, our problem exhibits an additional nesting property. To prove it, we use the following lemma.
\begin{lemma}\label{lem:boundary-nestedness}
    {For pairwise interior-disjoint regions~$X,Y,Z$, it holds that~$\partial X \cap \partial Y \subseteq \partial X \cap \partial (Y \cup Z)$.}
\end{lemma}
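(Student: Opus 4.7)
My plan is to prove the nontrivial inclusion by taking an arbitrary $p \in \partial X \cap \partial Y$ and showing $p \in \partial(Y \cup Z)$; membership in $\partial X$ is immediate. Since regions are closed, $p \in \partial Y \subseteq Y \subseteq Y \cup Z$, so the only thing to verify is that $p$ is approached by points outside $Y \cup Z$.

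The key intermediate step I would prove is a strengthening of interior-disjointness: for (nicely behaved) regions that coincide with the closure of their interiors, $\mathrm{int}(X) \cap \mathrm{int}(Y) = \emptyset$ actually implies $\mathrm{int}(X) \cap Y = \emptyset$. I would argue this by contradiction: any $q \in \mathrm{int}(X) \cap Y$ lies in $\overline{\mathrm{int}(Y)}$, so an open neighborhood of $q$ contained in $\mathrm{int}(X)$ would meet $\mathrm{int}(Y)$, contradicting the hypothesis. The same reasoning gives $\mathrm{int}(X) \cap Z = \emptyset$, so $\mathrm{int}(X) \subseteq \mathbb{R}^2 \setminus (Y \cup Z)$.

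With this in hand, the conclusion is routine: using $X = \overline{\mathrm{int}(X)}$, the boundary point $p \in \partial X$ admits a sequence $p_n \in \mathrm{int}(X)$ with $p_n \to p$, and by the strengthening each $p_n$ lies outside $Y \cup Z$. Together with $p \in Y \cup Z$, this shows $p \in \partial(Y \cup Z)$.

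The main obstacle is really the strengthening $\mathrm{int}(X) \cap Y = \emptyset$: the rest is routine boundary arithmetic. It relies on the implicit regularity convention used throughout the paper that regions are closed sets equal to the closure of their interiors; I would either state this explicitly as a hypothesis of the lemma or note that it follows from the structural characterization of solutions already established for $\freeProblem$.
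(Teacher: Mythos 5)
Your proof is correct, and it takes a genuinely different route from the paper's. The paper argues from the $Y$ side: it observes that a point of $\partial Y$ can only fail to lie on $\partial(Y \cup Z)$ if it is an interior point of the shared boundary $\partial Y \cap \partial Z$, and then rules out that such a point lies on $\partial X$, since near it $Y \cup Z$ covers a full neighborhood while interior-disjointness forbids $X$ from reaching into $\mathrm{int}(Y)$ or $\mathrm{int}(Z)$. You argue from the $X$ side: your strengthening $\mathrm{int}(X) \cap (Y \cup Z) = \emptyset$ lets you exhibit, for any $p \in \partial X \cap \partial Y$, points of $\mathrm{int}(X) \subseteq \mathbb{R}^2 \setminus (Y \cup Z)$ converging to $p$, so $p \in \partial(Y \cup Z)$ directly. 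This bypasses the paper's classification of which boundary points are ``lost'' in the union --- a step that is somewhat informal as written, since it presupposes that $\partial Y \cap \partial Z$ is a curve with a well-defined relative interior --- and it in fact yields the stronger statement $\partial X \cap (Y \cup Z) \subseteq \partial X \cap \partial(Y \cup Z)$, of which the lemma is the special case $\partial Y \subseteq Y$. Your caveat about regularity is well taken and applies equally to the paper's proof: without the convention that each region equals the closure of its interior, the lemma is false. For instance, take $Y$ and $Z$ to be the left and right closed halves of the unit disk and $X$ the segment $\{0\} \times [-1,1]$; then $X$ has empty interior and is vacuously interior-disjoint from both, $\partial X \cap \partial Y$ is the whole segment, but $\partial(Y \cup Z)$ is the unit circle, which meets the segment only at $(0,\pm 1)$. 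In the paper's application (the subset-nestedness corollary, where $X$, $Y$, $Z$ arise as differences and intersections of solutions bounded by rectifiable curves), the needed regularity holds, but stating it explicitly as a hypothesis, as you propose, would make the lemma self-contained.
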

\begin{proof}
    {We observe that~$\partial (Y \cup Z)$ contains all points in~$\partial Y$ except for the interior points of the shared boundary~$\partial Y \cap \partial Z$.
    If an interior point of~$\partial Y \cap \partial Z$ also lies on~$\partial X$, then~$X$ is not interior-disjoint with either~$Y$ or~$Z$.}
\end{proof}

\subsetNestedness*
\begin{proof}
    {Let $\hat{\solution}$ be an arbitrary but fixed solution for $\polygons$ minimizing $g_\alpha$.
    We divide~$\solution_{\polygons'}$ and~$\hat{\solution}$ into the pairwise interior-disjoint regions~$X:=\solution_{\polygons'} \setminus \hat{\solution}$, $Y:= \solution_{\polygons'} \cap \hat{\solution}$, and $Z:=\hat{\solution} \setminus \solution_{\polygons'}$ (see~\Cref{fig:SubsetNestedness:full}).
    Then we have~$\solution_{\polygons'}=X \cup Y$ and~$\hat{\solution}=Y \cup Z$.
    We show that the solution~$\solution_{\polygons} = X \cup Y \cup Z$, which obviously satisfies~$\solution_{\polygons'} \subseteq \solution_{\polygons}$, is optimal for~$\polygons$.}
        
    {For two interior-disjoint regions~$R,R'$, we have~$g_\alpha(R \cup R') = g_\alpha(R) + g_\alpha(R') - 2\alpha \cdot |\partial R \cap \partial R'|$.
    Hence, $g_\alpha(R \cup R') > g_\alpha(R)$ is equivalent to~$g_\alpha(R') > 2\alpha \cdot |\partial R \cap \partial R'|$.
    Assume that~$\solution_{\polygons}$ is not optimal, i.e., $g_\alpha(\solution_\polygons) > g_\alpha(\hat{\solution})$.
    Using~\Cref{lem:boundary-nestedness}, we obtain
    \begin{align*}
    &g_\alpha(X \cup Y \cup Z) = g_\alpha(\solution_\polygons) > g_\alpha(\hat{\solution}) = g_\alpha(Y \cup Z)\\
    \Rightarrow\quad&g_\alpha(X) > 2\alpha \cdot |\partial X \cap \partial (Y \cup Z)| > 2\alpha \cdot |\partial X \cap \partial Y|\\
    \Rightarrow\quad&g_\alpha(\solution_{\polygons'}) = g_\alpha(X \cup Y) > g_\alpha(Y),
    \end{align*}
    which contradicts the assumption that $\solution_{\polygons'}$ minimizes~$g_\alpha$.
    }
\end{proof}
\begin{figure}
    \centering
    \includegraphics{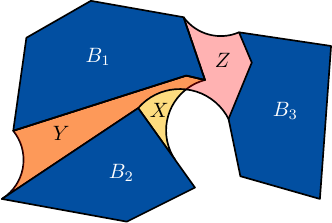}
    \caption{{Visualization of the proof of \Cref{lemma:subset_nestedness:paper} with $\polygons = \{B_1,B_2,B_3\}$ and $\polygons'=\{B_1, B_2\}$.
    It is assumed that~$\solution_{\polygons'} = X \cup Y$ is an optimal solution for~$\polygons'$ and~$\hat{\solution} = Y \cup Z$ is an optimal solution for~$\polygons$.
    If~$\solution_\polygons = X \cup Y \cup Z$ is not an optimal solution for~$\polygons$, i.e., removing~$X$ improves the objective value, then removing~$X$ from~$X \cup Y$ also improves the objective value and thus~$\solution_{\polygons'}$ is not optimal.}}
    \label{fig:SubsetNestedness:full}
\end{figure}
Subset-nestedness also holds in the case where an input polygon grows, because this is equivalent to adding new polygons that touch the old one. This property is also useful when polygons are added sequentially (e.g., in historical data where buildings are added over time).
In this case, subset-nestedness implies nestedness over time.

\section{A Local Preprocessing Algorithm (Detailed Overview)}\label{sec:preprocessing:appendix}
The main contributor to the runtime of \optimizer is the size of the subdivision $\subdivision_C$, which is quadratic in the size of the arc set~$F_\alpha(\polygons)$. Preliminary experiments show that the number of arcs in $F_\alpha(\polygons)$ is manageable for small $\alpha$, but increases significantly with increasing $\alpha$. Leading to subdivisions that cannot be handled in a reasonable time frame for sufficiently large $\alpha$ values. This growth is due to the fact that more and more pairs of potential endpoints fall below the distance limit of~$2\alpha$ imposed by property~P1.
We observe that this is counteracted by another trend: due to $\alpha$-nestedness, the regions of an optimal solution grow with~$\alpha$, and therefore more and more arcs become irrelevant because their endpoints already lie in the interior of a region or because they intersect another region.
Although the exact regions are not known until the instance is solved, we can exploit subset-nestedness to approximate them from below.

In this section, we introduce a preprocessing algorithm that iteratively selects sets of nearby polygons as a sub-instance, solves it, and replaces the polygons with the computed optimal representative regions.
If it succeeds in grouping large parts of the instance, then many arcs that connect distant polygons are never generated.
The following corollary of subset-nestedness forms the cornerstone of our preprocessing algorithm:

\begin{corollary}\label{cor:subset_optimality_paper}
    Let $\polygons=X_1\sqcup \dots\sqcup X_k $ be a set of (intersecting) circular polygons decomposed into a disjoint union of subsets. Let $\solution_1,\dots, \solution_k$ be optimal solutions to $\freeProblem$ for $X_1,\dots,X_k$. Then every optimal solution of $\freeProblem$  for $\solution_1\cup\dots\cup\solution_k$ is also optimal for $\polygons$. 
\end{corollary}
Building on this decomposition property, our preprocessing algorithm aims to identify subsets of the input polygons such that optimizing the induced sub-instance reduces the number of regions that need to be considered. We say that a subset $\polygons' \subseteq \polygons$ can be merged if the solution $\mathcal{S} = \optimizer(\polygons',F_\alpha(\polygons'))$ consists of exactly one representative region.
It is not clear how to find subsets $\mathcal{B}'$ that can be merged.
Our approach derives a family of candidate subsets and optimizes each one individually.
If a subset is successfully merged during optimization, we replace it in the overall instance with the resulting optimal representative.

\subparagraph{Selecting Subsets.} 
For the sake of efficiency, we focus on subsets for which the induced sub-instances have small subdivisions.  
First, we consider the singleton \( \{B\} \) for every polygon \( B \in \polygons \), which may already simplify~$B$.
Then, we consider all pairs \( \{B_i, B_j\} \) of polygons such that the edge \( \{C_i, C_j\} \) appears in the Delaunay triangulation \( DT(\mathcal{C}) \) of the centroids of the polygons \(\polygons\). This is motivated by the assumption that polygons in close proximity are more likely to be merged. To avoid unnecessary overhead for small \(\alpha\) values, we only consider pairs satisfying \( \LL(\overline{C_iC_j}) \leq 2\alpha \).
\subparagraph{Adapting \textsc{UPA-Opt}.}
In addition to the solution~$\solution$, we ensure that \( \optimizer(\polygons', F_\alpha(\polygons')) \) returns a flag indicating whether \( \polygons' \) was merged, and, if true, the set \( A =F_\alpha(\solution)\) of useful arcs for the instance~$\solution$.
The latter can be computed in a single pass over the subdivision. The min-cut algorithm labels each cell of the subdivision as either part of the solution or free. Each arc \( f \in F_\alpha(\polygons') \) has one of three types:
(1) every (half-)edge of \( f \) separates a solution cell from a free cell, (2) all associated edges are between free cells, or (3) at least one edge is between two solution cells.
Type-1 arcs form the boundary of \( \solution \), type-2 arcs are fully free and connect edges of \( \solution \), and type-3 arcs intersect \( \solution \); see Figure~\ref{fig:arc_types_paper}. Thus, \( F_\alpha(\solution) \) consists exactly of the type-2 arcs.
\begin{figure}[t]
    \centering
    \includegraphics{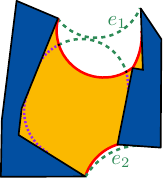}
    \captionsetup{textformat=simple}
    \caption{The different types of arcs. We have $F_\alpha(\solution)=\{e_1,e_2\}$.}
    \label{fig:arc_types_paper}
\end{figure}

\begin{algorithm2e}
\caption{Local preprocessing algorithm}
\label{algo:preprocessing_paper}
\KwIn{Set of polygons $\polygons$, parameter $\alpha$}
\KwOut{Set of (merged) representative circular polygons $\mathcal{R}$}
\textbf{Initialize:} Union-find $\mathfrak{U}$, arc sets $\mathcal{A}$, ignore sets $\Phi$, Delaunay triangulation~$DT(\mathcal{C})$\;
\ForEach{$B \in \polygons$\label{algo:start_singlepp_paper}}{
    $F_\alpha \gets \ComputeUsefulArcs(B,\alpha)$\;
    $\solution,\; {A},\; \text{succeeded} \gets \optimizer(B,F_\alpha)$\;
    $\mathfrak{U}.\makeSet(\{B\}, \solution)$, 
    $\mathcal{A}(\solution) \gets A$, 
    $\Phi(\solution) \gets \emptyset$\;\label{algo:end_singlepp_paper}
}
\ForEach{edge $\{C_i, C_j\} \in DT(\mathcal{C})$ \label{algo:start_allpp_paper}}{
    $B_i \gets \GetPoly(C_i)$, 
    $B_j \gets \GetPoly(C_j)$\;
    $R_1 \gets \mathfrak{U}.\find(
     B_i)$, 
     $R_2 \gets \mathfrak{U}.\find( B_j)$\;
    \lIf{$(R_1 \in \Phi(R_2) \textbf{ and } R_2 \in \Phi(R_1))$ \textbf{or} $R_1 = R_2$} {\Continue}
    ${F_\alpha} \gets \ComputeUsefulArcs(R_1,R_2,\mathcal{A}(R_1),\mathcal{A}(R_2),\alpha)$\;
    $\solution,\;{A},\;\text{succeeded}\gets \optimizer(\{R_1, R_2\},{F_\alpha})$\;
    \If{\textbf{not} succeeded}{
        $\Phi(R_1) \gets \Phi(R_1) \cup \{R_2\}$, 
        $\Phi(R_2) \gets \Phi(R_2) \cup \{R_1\}$\;
    }
    \Else{
        $\mathfrak{U}.\unionFn( R_1, R_2, \solution)$, 
        $\mathcal{A}(\solution) \gets A$, 
        $\Phi(\solution) \gets \emptyset$\;
    \label{algo:end_allpp_paper}}
}
\Return representatives of~$\mathfrak{U}$
\end{algorithm2e}

\subparagraph{Processing Scheme.}
We update the representative regions on the fly by tracking merged polygons with a union-find data structure~$\mathfrak{U}$.
Each set in~$\mathfrak{U}$ is a subset~$\polygons'$ of input polygons for which the optimal solution consists of a single region~$R$, which we use as the representative for~$\polygons'$ in~$\mathfrak{U}$.
Additionally, we maintain the set \( \mathcal{A}(R) = F_\alpha(\{R\}) \) and the \emph{ignore set} \( \ignoreset{R} \) of representatives with which merge attempts have already failed.
The processing scheme is given in~\Cref{algo:preprocessing_paper}. We start with the singletons (lines~\ref{algo:start_singlepp_paper}--\ref{algo:end_singlepp_paper}): For each polygon~\( B \), we compute \( F_\alpha(B) \) and run \( \optimizer \). The result~\( \solution \) becomes the representative of \( B \) in~$\mathfrak{U}$, with \( \mathcal{A}(\solution) \) derived from \optimizer. Initially, the ignore set of~$\solution$ is empty.

Afterwards, we process the edges of the Delaunay triangulation \( DT(\mathcal{C}) \) in increasing order of length (lines~\ref{algo:start_allpp_paper}--\ref{algo:end_allpp_paper}). For each edge \( \{C_i, C_j\} \), we retrieve the representatives $R_1$ and~$R_2$ of~$B_i$ and~$B_j$ in~$\mathfrak{U}$. If \( R_1 = R_2 \) or both are in each other's ignore sets, we skip the pair.
Otherwise, we compute the useful arcs \( F_\alpha \) and call \( \optimizer \).
Note that the call to \( \FuncSty{CompUArcs} \) computes the useful arcs between \( R_1 \) and \( R_2 \) and performs intersection tests, as all other arcs are already precomputed in \( \mathcal{A}(R_1) \) and \( \mathcal{A}(R_2) \). 
If the merge fails, we add~\( R_1 \) to~\( \ignoreset{R_2} \) and vice versa. If it succeeds with result~\( \solution \), we merge the components in~$\mathfrak{U}$, set~\( \solution \) as the new representative, update \( \mathcal{A}(\solution) \), and reset \( \ignoreset{\solution} \). We repeat until all edges have been processed.
The correctness of this scheme is summarized in the following proposition. 
\begin{proposition}
Let \( R_1, \dots, R_k \) be the final representatives and let \( \mathcal{B}_R \) be the set of polygons merged into the representative \( R \). Then \( \polygons = \mathcal{B}_{R_1} \sqcup \dots \sqcup \mathcal{B}_{R_k} \), and each \( R_i \) is an optimal solution of \freeProblem for \( \mathcal{B}_{R_i} \).
\end{proposition}
With \cref{cor:subset_optimality_paper}, it follows that optimizing the representatives still leads to optimal solutions for the initial input $\polygons$.
Despite only performing local merges of neighboring polygons, the algorithm has global effects by continuously updating and comparing representatives instead of original polygons. For large~\( \alpha \), preprocessing alone can already yield (near-)optimal representative sets.
Note that, in the worst case, the preprocessing may involve~\( \bigO{n} \) candidate sets, each with subdivision complexity~\( \bigO{n^4} \), increasing the runtime by a factor of~\( n \). However, in practice, the subproblems are small and solved efficiently, as shown in our experiments.

\section{Solving the Parametric Problem (with Formal Definitions and Proofs)}\label{sec:parametric:appendix}
So far, we have only studied the problem~$\freeProblem$ for fixed~$\alpha$.
For the subdivision-restricted variant, Rottmann et al.~\cite{rottmann2024bicritshapes} also study the parametric problem, where the objective is to find an optimal solution for each~$\alpha\in[0,\infty]$.
They show that due to~$\alpha$-nestedness, there can only be~$\mathcal O(n)$ optimal solutions, which allows the problem to be solved in~$\bigtO{n^2}$ time.
Although $\alpha$-nestedness also holds in the unrestricted variant, this does not imply a bound on the number of optimal solutions.
Because our definition of solutions is geometric, even an infinitesimal change in~$\alpha$ causes a change in the optimal solution because the radius of the circular arcs changes.
Hence, the number of optimal geometric solutions is infinite. 

We address the problem by deriving a description of a solution and its components that does not depend on any geometry and is strictly given combinatorially with respect to the polygon edges (and vertices).
For example, given two edges~$e,h$ and two values~$\alpha,\alpha'$, the circular arcs~$\sarc{\alpha}{eh}$ and~$\sarc{\alpha'}{eh}$ are different geometric realizations of the same combinatorial arc~$\sarc{}{eh}$.
This allows us to define a parametric variant~$\freeProblem$, which asks for a set of combinatorial solutions.
We then show that the number of combinatorial solutions in an optimal solution is in~$\mathcal{O}(n^2)$.
We use this to design a dichotomic scheme for the parametric problem, which runs in polynomial time under the assumption that we have an oracle that computes the (unique) intersection point of two specific functions.
We also present a fully polynomial-time approximation scheme~(FPTAS) that does not require an oracle.
Given a parameter~$\varepsilon>0$, it computes a~$(1+\varepsilon)$-approximation of the optimal solution for every value of~$\alpha$ and runs in polynomial time in the input size and $1/\varepsilon$. 

For the remainder of this paper, we restrict the input to straight-line polygons, which are the most relevant input type for applications. While a few of the more technical results rely on this assumption, we believe that they can be extended to more general settings, e.g., circular polygons.

\subsection{Combinatorial Solutions}
We build our definition of combinatorial solutions incrementally, starting with combinatorial representations of arcs and other boundary primitives.
Next, we define the combinatorial counterpart to a closed boundary curve.
Building on this, we define combinatorial regions and solutions.
For each combinatorial object, we explain how a geometric realization can be obtained for a fixed value of $\alpha$, and we show that the set of~$\alpha$ values for which a valid realization exists is a single open interval.

For an instance~$\polygons$ of~$\freeProblem$, we denote by~$\possibleArcs$ the set of circular arcs~$f=\sarc{\alpha}{uv}$ such that the endpoints~$u$ and~$v$ lie on the input boundary,~$f$ fulfills properties~P1--P5 of~\Cref{prop:arc_properties:full} and does not connect parallel edges.
Note that this definition is identical to~\Cref{def:arcs}, except for the fact that we do not require that~$f$ does not intersect~$\polygons$ properly.
Hence, we have~$F_\alpha(\polygons) \subseteq \possibleArcs$.
For $x,y \in V(\polygons) \cup E(\polygons)$, we define the \emph{combinatorial arc} $\sarc{}{xy}$ between $x$ and $y$ as a representative for the family of arcs~$\sarc{\alpha}{xy} \in \possibleArcs$ for each~$\alpha\in(0,\infty)$. We denote with $\hat{F}(\polygons)$ the set of all combinatorial arcs.
We denote its \emph{combinatorial endpoints} on $x$ and $y$, respectively, with~$u(x,y)$ and~$v(x,y)$.
A \emph{combinatorial vertex} is either a combinatorial endpoint or an input vertex $v\in V(\polygons)$.
A \emph{combinatorial edge segment} between combinatorial vertices~$u$ and~$v$ that lie on the same polygon edge~$e$ is the (directed) subsegment of~$e$ from~$u$ to~$v$.
A \emph{combinatorial boundary piece} is either a combinatorial arc or a combinatorial edge segment.

For fixed $\alpha$, we define the \emph{realizations} of combinatorial objects (which may or may not exist): (1) For a combinatorial arc~$\sarc{}{xy}$ with combinatorial endpoints~$u$ and~$v$, its realization is the unique arc $\sarc{\alpha}{xy} \in \possibleArcs$ connecting $x$ and $y$ (if it exists), and the realizations~$u_\alpha$ and~$v_\alpha$ are the respective endpoints of the realized arc $\sarc{\alpha}{xy}$, (2) the realization~$v_\alpha$ of a polygon vertex~$v$ is~$v$ itself, and (3) the realization of a combinatorial edge segment between combinatorial vertices~$x$ and~$y$ is the line segment~$\overrightarrow{x_\alpha y_\alpha}$ connecting the realizations of the vertices (if they exist).
To determine the values of~$\alpha$ for which a combinatorial arc has a realization, we first characterize how the realized arc and the positions of its endpoints vary depending on $\alpha$.
Given a pair~$X,Y$ of geometric objects (e.g., points, lines, circles), we denote their Euclidean distance by \(d(X, Y) = \min_{x \in X,\, y \in Y} \LL(\overline{xy})\).

\begin{lemma}\label{lem:arc-movement}
    Let~$f=\sarc{}{xy}$ be a combinatorial arc with combinatorial endpoints~$u$ and~$v$, let~$\alpha\in(0,\infty)$ a value for which the realization~$f[\alpha]$ exists, and let~$\theta(f[\alpha])$ denote the central angle of~$f[\alpha]$.
    If~$x$ (resp. $y$) is a polygon edge~$\overrightarrow{ab} \in E(\polygons)$, then~$d(a,u[\alpha])$ (resp. $d(b,v[\alpha])$) is strictly decreasing as~$\alpha$ increases.
    The conditions~$d(u[\alpha],v[\alpha]) < 2\alpha$ and~$\theta(f[\alpha])<\pi$ hold for any value of~$\alpha$ such that~$f$ has a realization and the following condition is satisfied:
    \begin{itemize}
        \item If~$x$ and~$y$ are both vertices: $d(x,y) < 2\alpha$.
        \item If~$x$ is a vertex and~$y$ is an edge: $d(x,L_y) < 2\alpha$, where~$L_y$ is the line through~$y$,
        \item If~$x$ is an edge and~$y$ is a vertex: $d(L_x,y) < 2\alpha$, where~$L_x$ is the line through~$x$,
        \item If~$x$ and~$y$ are both edges: no additional condition.
    \end{itemize}
\end{lemma}
\begin{proof}
    If~$x$ and~$y$ are both vertices, then the claim is trivially true because~$u[\alpha]=x$ and~$y[\alpha]=y$.
    Assume w.l.o.g.\ that~$x$ is an edge~$\overrightarrow{ab}$.
    We rotate, move and mirror our coordinate system such that the line~$L_x$ through~$x$ coincides with the~$x$-axis and~$\overrightarrow{ab}$ goes from right to left.
    This means that~$f$ lies above it and is oriented clockwise.
    Let~$z$ be the value such that~$u[\alpha]=(z,0)$.
    Then, the center point of~$f[\alpha]$ is given by~$C=(z,\alpha)$.
    We derive a formula for~$z$ and show that it is strictly increasing with~$\alpha$.
    This means that~$u[\alpha]$ moves to the right, towards~$a$.

    Consider the case that~$y=v$ is a vertex.
    We move the coordinate system such that~$v=(0,d)$, where~$d=d(L_x,v)$.
    Because~$f[\alpha]$ is oriented clockwise, the central angle is smaller than~$\pi$ if and only if~$z > 0$.
    Furthermore, the arc must satisfy~$d(C,v)=\alpha$.
    With~$d(C,v)=\sqrt{z^2 - (\alpha - d)^2}$, we obtain the positive solution~$z=\sqrt{2 \alpha d - d^2}$, which is strictly increasing with~$\alpha$.
    The distance between the two endpoints is~$d(u[\alpha],v)=\sqrt{2 \alpha d}$ and the central angle of~$f[\alpha]$ is given by~$\theta(f[\alpha])=2\arcsin{\frac{\sqrt{2 \alpha d}}{2\alpha}} =2\arcsin{\sqrt{\frac{d}{2\alpha}}}$.
    Hence, we have~$d(u[\alpha],v[\alpha]) < 2\alpha$ and~$\theta(f[\alpha]) < \pi$ iff~$d(L_x,v) < 2\alpha$.

    Consider now the case that~$y$ is an edge.
    Let~$\gamma \in (0,\pi)$ be the angle between the lines~$L_x$ and~$L_y$ through~$x$ and~$y$, respectively.
    Note that~$L_x$ and~$L_y$ cannot be parallel because this would imply that~$f[\alpha]$ has central angle~$\pi$ and is therefore not contained in~$\possibleArcs$.
    Because~$f[\alpha]$ must lie to the right of both edges, it follows that~$z>0$.
    We move the coordinate system such that the intersection point of the two lines is the origin~$O=(0,0)$.
    Then the normal vector of~$L_y$ is given by~$\vec{n}_y=\vecTwo{-\sin \gamma}{\cos \gamma}$.
    The arc must satisfy~$d(C,L_y)=\alpha$.
    With
    \begin{align*}
        d(C,L_y) &= \frac{|\vec{C} \cdot \vec{n}_v|}{||\vec{n}_v||} = \left|\vecTwo{z}{\alpha} \cdot \vecTwo{-\sin\gamma}{\cos\gamma}\right|\\
        &= |\alpha\cos\gamma - z\sin\gamma|,
    \end{align*}

    we obtain the solutions
    \begin{align*}
    z_1 &= -\alpha \cdot \frac{1-\cos\gamma}{\sin\gamma} = -\alpha \tan(\gamma/2),\\
    z_2 &= \alpha \cdot \frac{1+\cos\gamma}{\sin\gamma} = \alpha \cot(\gamma/2).       
    \end{align*}
    Only~$z_2$ is positive, so we have~$z=\alpha \cot(\gamma/2)$, which is strictly increasing with~$\alpha$.
    Now, consider the quadrilateral formed by~$O$, $C$, $u[\alpha]$ and $v[\alpha]$.
    The inner angles at~$u[\alpha]$ and~$v[\alpha]$ are~$\pi/2$ because~$f[\alpha]$ is tangential on~$x$ and~$y$, whereas the inner angle at~$O$ is~$\gamma$ and the inner angle at~$C$ is~$\theta(f[\alpha])$.
    Hence, we have~$\theta(f[\alpha]) = \pi - \gamma \in (0,\pi)$, which is independent of~$\alpha$.
    The distance of the endpoints is the chord length of~$f[\alpha]$, which is given by~$d(u[\alpha],v[\alpha])=2\alpha\sin(\theta(f[\alpha])/2) < 2\alpha$.
\end{proof}

Using this, we can characterize the~$\alpha$ values for which a combinatorial arc has a realization.

\begin{lemma}\label{prop:arc-validity-range}
    For a combinatorial arc~$\sarc{}{xy}$ with~$x,y \in V(\polygons) \cup E(\polygons)$, the set~$\mathcal{I}(\sarc{}{xy})\subseteq(0,\infty)$ of values for which~$\sarc{}{xy}$ has a realization~$\sarc{\alpha}{xy} \in \possibleArcs$ is a single open interval.
\end{lemma}
\begin{proof}
    Let~$u$ and~$v$ denote the combinatorial endpoints of~$\sarc{}{xy}$.
    The following conditions must be satisfied in order for the realization~$\sarc{\alpha}{xy}$ to exist:
    \begin{itemize}
        \item If~$x$ is a polygon edge, then~$u$ must lie in the interior of~$x$ (and likewise for~$y$ and~$v$). By~\Cref{lem:arc-movement}, the movement of~$u$ on~$x$ is strictly monotonous in~$\alpha$, so there is a single interval in which this condition is satisfied. The interval is open because the interior of~$x$ is open.
        \item P1: The distance between the endpoints must be smaller than~$2\alpha$. By~\Cref{lem:arc-movement}, the set of~$\alpha$ values for which this is the case is an open interval.
        \item P3: The central angle must be smaller than~$\pi$. By~\Cref{lem:arc-movement}, this holds iff~P1 is satisfied.
    \end{itemize}
   Conditions~P2 and~P4 are satisfied by definition.
\end{proof}

Now that we have established the combinatorial counterparts to constrained and free boundary pieces, we can connect them to form combinatorial boundary curves.

\begin{definition}[Combinatorial boundary]
 A combinatorial boundary $B$ is a cyclic sequence $v_0,v_1,\dots,v_{n-1},v_n=v_0$ of combinatorial vertices such that for each~$0 \leq i < n$, there is a combinatorial boundary piece~$(v_i,v_{i+1})$.  The combinatorial boundary~$B$ is realizable for~$\alpha$ if all boundary pieces are realizable. In that case, the realization of~$B$ is given by the concatenation of the realized combinatorial boundary pieces.
\end{definition}

A closed region can be described via an outer boundary curve and a set of inner boundary curves that enclose holes.
We can define a combinatorial region in an analogous manner, using combinatorial boundary curve.
However, this leads to issues because the realized boundaries may form a region for some values of~$\alpha$ but become self-intersecting or intersect each other for others.
For the following proofs, we require a notion analogous to a region that is well-defined even in those cases.

A \emph{pseudo-region} is a set of closed curves, one of which is designated as the \emph{outer boundary} and is oriented counterclockwise, and rest as \emph{inner boundaries}, which are oriented clockwise. The pseudo-region forms a region if none of the boundaries are self-intersecting, no pair of boundaries intersects properly, the inner boundaries are enclosed by the outer boundary, and none of the inner boundaries enclose each other.
\begin{definition}[Winding number and interior]
For a point~$p$ and a closed curve~$f$, the \emph{winding number}~$\wind(p,f)$ is defined as follows. Let~$r$ be a ray that extends from~$p$ into an arbitrary direction. Then~$\wind(p,f)$ is the number of times that~$f$ crosses~$r$ from right to left, minus the number of times that~$f$ crosses~$r$ from left to right.
The definition extends naturally to sets of closed curves and thereby (pseudo-)regions.
The interior of pseudo-region~$R$ is the set of points~$p$ with~$\wind(p,R)>0$.
\end{definition}
It is a direct consequence of the Jordan curve theorem~\cite{jordan} that the interior of a region~$R$ is contiguous and has winding number 1.
The exterior has winding number 0 but is not necessarily contiguous because it also includes the holes enclosed by the inner boundary curves.
In a pseudo-region, there may be points with winding number~$>1$ and~$<0$.
Intuitively, the former are "included multiple times" in the pseudo-region, whereas the latter are "excluded multiple times".
\begin{definition}[Combinatorial region]
A combinatorial region $R$ is a set of combinatorial boundaries, of which one is designated as the outer boundary curve and the rest as the inner boundary curves. The realization of~$R$ for~$\alpha$ is the pseudo-region given by the set of realized boundary curves.
The combinatorial region $R$ is \emph{valid} for~$\alpha$ if
    \begin{enumerate}[(i)]
        \item all combinatorial boundaries of~$R$ are realizable,
        \item the realization of~$R$ is a region and
        \item none of the boundaries touch each other.
    \end{enumerate}
\end{definition}

Using the concepts of pseudo-regions and winding number, we can derive a nesting property for realizations of the same combinatorial region.

\begin{lemma}\label{lem:winding-number-increasing}
    Let~$R$ be a combinatorial region and~$(\amin,\amax)$ the interval in which all combinatorial boundary arcs of~$R$ have realizations.
    For~$\alpha_1 \leq \alpha_2 \in (\amin,\amax)$ and any point~$p$ in the plane, it holds that~$\wind(p,R[\alpha_1]) \leq \wind(p,R[\alpha_2])$.
\end{lemma}
\begin{proof}
    We show that if we gradually transform~$R[\alpha_1]$ into~$R[\alpha_2]$ by moving the combinatorial arcs of~$R$ one at a time, the winding number does not decrease.
    Consider a combinatorial arc~$f$ with combinatorial endpoints~$u$ and~$v$.
    If~$u$ is a polygon vertex, then~$u[\alpha_1]$ and~$u[\alpha_2]$ coincide at~$u$.
    If~$u$ lies on an edge, then it follows from~\Cref{lem:arc-movement} that~$u[\alpha_2]$ lies to the right of~$u[\alpha_1]$ with respect to~$f[\alpha_1]$.
    The same holds for~$y$ and~$v$.
    Let~$h$ denote the concatenation of~$\overrightarrow{u[\alpha_2]u[\alpha_1]}$, $f[\alpha_1]$ and $\overrightarrow{v[\alpha_1]v[\alpha_2]}$.
    Because~$f[\alpha_2]$ has a larger radius that~$f[\alpha_1]$, it follows that~$h$ and~$f[\alpha_2]$ enclose a contiguous region~$R^*$ that is to the right of~$h$ and to the left of~$f[\alpha_2]$.
    Now, consider a ray~$r$ originating in~$p$ that does not cross~$h$.
    If~$p$ lies inside~$R^*$, then~$f[\alpha_2]$ crosses~$r$ from right to left.
    Hence, the winding number increases by~1.
    If~$p$ lies outside of~$R^*$, then~$f[\alpha_2]$ does not cross~$r$, so the winding number is unchanged.
\end{proof}

\begin{corollary}\label{cor:combinatorial-nested}
    Let~$R$ be a combinatorial region and~$(\amin,\amax)$ the interval in which all combinatorial boundary arcs of~$R$ have realizations.
    The realizations of~$R$ are nested for increasing~$\alpha$, i.e., $R[\alpha_1] \subseteq R[\alpha_2]$ for~$\alpha_1 \leq \alpha_2 \in (\amin,\amax)$.
\end{corollary}
\begin{proof}
This follows from~\Cref{lem:winding-number-increasing} and the fact that the interior of a realization consists of the points with winding number $>0$.
\end{proof}

We use these results to show that if a combinatorial region becomes invalid when increasing or decreasing~$\alpha$, it will stay invalid as~$\alpha$ increases or decreases further.

\begin{lemma}\label{prop:region-validity-range}
    For a combinatorial region~$R$, the set~$\mathcal{I}(R)\subseteq(0,\infty)$ of values for which~$R$ is valid is a single open interval.
\end{lemma}
\begin{proof}
    By~\Cref{prop:arc-validity-range}, the set of values for which all combinatorial boundary arcs have realizations is a single open interval, which we denote by~$(\amin,\amax)$.
    We fix a value~$\alpha^* \in (\amin,\amax)$ for which~$R[\alpha^*]$ is valid.
    Let~$\alpha_\text{high}$ be the lowest value in~$(\alpha^*,\amax)$ for which two boundary curves~$e$ and~$h$ intersect in some point~$x$, or~$\alpha_\text{high}=\amax$ if no such value exists.
    It follows from the strictly monotonous movement of the combinatorial arcs shown in~\Cref{lem:arc-movement} that~$x$ is in the exterior of~$R[\alpha]$ for~$\alpha\in(\alpha^*,\alpha_\text{high})$ and moves to the interior with respect to both~$e$ and~$h$ as~$\alpha$ crosses~$\alpha_\text{high}$.
    Hence, $\wind(x,R[\alpha])$ increases from~$0$ to~$2$ and~$R[\alpha]$ becomes invalid.
    By~\Cref{lem:winding-number-increasing}, the increase of~$\wind(x,R[\alpha])$ is monotonous in~$\alpha$, so~$R[\alpha]$ is invalid for~$[\alpha_\text{high},\amax)$.

    A symmetrical argument can be applied for decreasing~$\alpha$.
    Let~$\alpha_\text{low}$ be the highest value in~$(\amin,\alpha^*)$ for which two boundary curves~$e$ and~$h$ intersect in some point~$x$, or~$\alpha_\text{low}=\amin$ otherwise.
    Then~$\wind(x,R[\alpha])$ decreases from~$1$ to~$-1$ as~$\alpha$ crosses~$\alpha_\text{low}$, and thus~$R[\alpha]$ is invalid for~$(\amin,\alpha_\text{low}]$.
    Overall, we have~$\mathcal{I}(R) = (\alpha_\text{low},\alpha_\text{high})$.
\end{proof}

\begin{figure}
\centering
\includegraphics{images/combinatorial_2.pdf}
\caption{Three realizations of the same combinatorial solution for different~$\alpha$ values. The arcs move with~$\alpha$ but still connect the same edge pairs. The realization marked in red is invalid because it intersects a polygon.}
\label{fig:combinatorial}
\end{figure}

The definition of a combinatorial solution is straightforward; see~\Cref{fig:combinatorial}.
To distinguish combinatorial solutions from their realizations, we also refer to the latter as \emph{geometric solutions}.

\begin{definition}[Combinatorial solution]
A combinatorial solution is a set of combinatorial regions. It is \emph{valid} for~$\alpha$ if all combinatorial regions are valid, the realized regions form a solution for~$\freeProblem$ and no realized boundary arc intersects~$\polygons$ properly.
\end{definition}

\solutionValidityRange*
\begin{proof}
    By~\Cref{prop:region-validity-range}, the set of values for which all combinatorial regions of~$\combinatorialsol$ have valid realizations is a single open interval, which we denote by~$(\amin,\amax)$.
    Inside this interval, the realized regions form a solution if they are pairwise disjoint, they cover all polygons, and no realized boundary arc intersects~$\polygons$ properly.
    Let~$\alpha_r$ be the smallest value in~$(\amin,\amax)$ for which two realized regions~$R_1$ and~$R_2$ touch, or~$\alpha_r=\amax$ if there is no such value. By~\Cref{cor:combinatorial-nested}, the overlapping area of~$R_1$ to~$R_2$ grows monotonously in~$\alpha$, so the realized regions of~$\combinatorialsol$ are pairwise disjoint exactly in~$(\amin,\alpha_r)$.

    If a region~$R$ covers a polygon~$B$ only partially, then there is a boundary arc that intersects~$B$ properly. Therefore, each input polygon~$B$ must be fully covered by a single region.
    We can interpret~$B$ as a combinatorial region, which is valid in~$(0,\infty)$. Hence, the portion of~$B$ that is covered by a combinatorial region~$R$ grows with increasing~$\alpha \in \mathcal{I}(R)$ by~\Cref{cor:combinatorial-nested}.
    Let~$\alpha_p$ be the smallest value in~$(\amin,\alpha_r)$ such that every polygon is fully covered by a single region in~$\combinatorialsol[\alpha_p]$, or~$\alpha_p=\alpha_r$ if there is no such value.
    Because~$\alpha_p$ is chosen as the smallest such value, there must be at least one polygon~$B$ that is touched by a boundary arc of the region in~$\combinatorialsol[\alpha_p]$ that covers~$B$.
    By~\Cref{lem:arc-movement}, the movement of combinatorial arcs is strictly monotonous and outward with increasing~$\alpha$.
    Hence, $\combinatorialsol[\alpha_p]$ is invalid, but for~$\alpha\in(\alpha_p,\alpha_r)$, no boundary arc in~$\combinatorialsol[\alpha]$ intersects a covered polygon.
    In particular, no boundary arc intersects a polygon that is covered by a different region, because this would imply that the two regions are not disjoint.
    Hence, we have~$\mathcal{I}(\combinatorialsol) = (\alpha_p,\alpha_r)$.
\end{proof}

\subsection{The Parametric Problem}
Using the combinatorial solutions, we define the parametric problem.

\begin{problem_description}[The Exact Parametric Problem]\label{problem:problem_2}
    Let~$\polygons$ be a set of interior-disjoint polygons in the plane. A solution to the parametric unrestricted polygon aggregation problem $\biProblem$ is a set $\mathfrak{K}$ of combinatorial solutions such that for every $\alpha\in[0,\infty]$, there exists a combinatorial solution $\combinatorialsol\in\mathfrak{K}$ such that $\combinatorialsolevaluated{\alpha}$ is an optimal solution for $\freeProblem$. 
\end{problem_description}
Using our combinatorial view and exploiting the $\alpha$-nestedness property (see~\cref{prop:alpha-nested:paper}), we can show that the set $\mathfrak{K}$ has polynomial size.
\parametricSolutions*
\begin{proof}
Let~$\combinatorialsol_1,\dots,\combinatorialsol_k$ be the solutions in~$\mathfrak{K}$, sorted in ascending order of the lowest~$\alpha$ value for which they are optimal.
Note that~$k$ is finite because the number of possible combinatorial solutions is finite.
When switching from a combinatorial solution~$\combinatorialsol_i$ to its successor~$\combinatorialsol_{i+1}$, at least one combinatorial arc must be removed or added.
The total number of combinatorial arcs is in~$\mathcal O(n^2)$, and each of them can appear for the first time only once.
Below, we show that if a combinatorial arc from~$\combinatorialsol_i$ does not appear in~$\combinatorialsol_{i+1}$ but in another combinatorial solution~$\combinatorialsol_j$ with~$j > i+1$, then~$\combinatorialsol_j$ consists of strictly fewer regions than~$\combinatorialsol_i$.
Because every region in an optimal solution covers at least one polygon, the number of regions in~$\combinatorialsol_1$ is in~$\mathcal O(n)$.
Due to~$\alpha$-nestedness (\Cref{prop:alpha-nested:paper}), it cannot increase between subsequent solutions.
Hence, the removal and reappearance of a combinatorial arc can occur at most~$\mathcal O(n)$ times, summed across all combinatorial arcs.
Hence, $\mathfrak{K}$ has size~$\mathcal O(n^2)$.

For~$\alpha\in[0,\infty]$, let~$\combinatorialsol(\alpha)$ denote the combinatorial solution in~$\mathfrak{K}$ that is optimal for~$\alpha$, and~$\solution(\alpha)$ its realization for~$\alpha$.
Consider a combinatorial arc~$f \in \hat{F}(\polygons)$ with endpoints~$u$ and~$v$ as well as three values~$\alpha_1<\alpha_2<\alpha_3$ such that~$f$ is included in~$\combinatorialsol(\alpha_1)$ and~$\combinatorialsol(\alpha_3)$ but not in~$\combinatorialsol(\alpha_2)$.
Because~$f$ is included in~$\combinatorialsol(\alpha_1)$, its endpoints~$u$ and~$v$ belong to the same combinatorial region.
By~\Cref{prop:alpha-nested:paper}, this is still the case in~$\combinatorialsol(\alpha_2)$ and~$\combinatorialsol(\alpha_3)$.
Furthermore, because~$u$ and~$v$ lie on the boundary of~$\combinatorialsol(\alpha_1)$ and~$\combinatorialsol(\alpha_3)$, they must also lie on the boundary of~$\combinatorialsol(\alpha_2)$, again by~\Cref{prop:alpha-nested:paper}.
The boundary segment between~$u$ and~$v$ in~$\combinatorialsol(\alpha_2)$ is less efficient than the valid arc~$f[\alpha_2]$ (cf.~\Cref{def:efficiency:paper}), but~$\solution(\alpha_2)$ is optimal, so~$f[\alpha_2]$ must be excluded because it intersects a region~$R_I^2$ of~$\solution(\alpha_2)$ properly.
Let~$R_{uv}^2$ denote the region containing~$u$ and~$v$ in~$\solution(\alpha_2)$.
We distinguish between three cases:
\begin{enumerate}
    \item $R_{uv}^2 \neq R_I^2$:
    By~\Cref{lem:arc-movement} and~\Cref{prop:alpha-nested:paper}, both~$f[\alpha_2]$ and~$R_I^2$ are included in~$\solution(\alpha_3)$.
    Because they intersect properly and~$\solution(\alpha_3)$ is valid, they must be part of the same region in~$\solution(\alpha_3)$.
    Hence, $\combinatorialsol(\alpha_3)$ has strictly fewer regions than~$\combinatorialsol(\alpha_2)$.
    \item $R_{uv}^2 = R_I^2$ and the intersected boundary piece of~$R_{uv}^2$ lies between~$u$ and~$v$. Because all arcs bend inwards by P5 of~\Cref{prop:arc_properties:full}, the part of~$R_{uv}^2$ that lies to the right of~$f[\alpha_2]$ must contain at least part of a polygon~$B$.
    Let~$R_B^1$ be the region containing~$B$ in~$\solution(\alpha_1)$ and let~$R_{uv}^1$ be the region containing~$u$ and~$v$ in~$\solution(\alpha_1)$.
    If~$R_B^1 = R_{uv}^1$, then~$f[\alpha_1]$ intersects that region properly and therefore~$\solution(\alpha_1)$ is not valid. If~$R_B^1 \neq R_{uv}^1$, then~$\combinatorialsol(\alpha_1)$ has strictly fewer regions than~$\combinatorialsol(\alpha_2)$.
    \item $R_{uv}^2 = R_I^2$ and the intersected boundary piece of~$R_{uv}^2$ lies between~$v$ and~$u$. Then it follows from~\Cref{prop:alpha-nested:paper} that~$f[\alpha_3]$ intersects the region of~$\solution(\alpha_3)$ that contains~$R_I^2$ properly. Hence, $\solution(\alpha_3)$ is not valid.$\,$
\end{enumerate}
\end{proof}


To solve the parametric problem, we consider the objective values of geometric and combinatorial solutions as functions parameterized in~$\alpha$.
For a geometric solution~$\solution$, the parametric objective function given by~$g(\solution)(\alpha):=g_\alpha(\solution)$ is linear and its slope is the perimeter~$\PP(\solution)$.
For a combinatorial solution~$\combinatorialsol$, we define
\[
g(\combinatorialsol)(\alpha):=\begin{cases}
g_\alpha(\combinatorialsol[\alpha]) & \text{if } \alpha\in\mathcal{I}(\combinatorialsol),\\
\infty & \text{otherwise.}
\end{cases}
\]
It follows from property P2 of~\Cref{prop:arc_properties:full} that~$\combinatorialsol[\alpha]$ has a smaller objective value for~$\freeProblem$ than any other realization~$\combinatorialsol[\alpha']$ with~$\alpha'\in\mathcal{I}(\combinatorialsol)$.
Due to~$\alpha$-nestedness (\Cref{cor:combinatorial-nested}), the area~$\A(\combinatorialsol[\alpha])$ of the realizations is increasing with~$\alpha$, so the perimeter, which is the slope of~$g(\combinatorialsol[\alpha])$, must be decreasing.
Hence, $g(\combinatorialsol)$ is the lower envelope of the (infinitely many) functions~$g(\combinatorialsol[\alpha])$ for~$\alpha\in\mathcal{I}(\combinatorialsol)$ and its slope is monotonically decreasing.
Moreover, $g(\combinatorialsol[\alpha])$ is the tangent of~$g(\combinatorialsol)$ at~$\alpha$.

The objective of~$\biProblem$ can be stated as finding the lower envelope of the functions~$g(\combinatorialsol)$ for all possible combinatorial solutions~$\combinatorialsol$.
It is also the lower envelope of the linear functions~$g(\solution)$ for the infinitely many geometric solutions~$\solution$, and therefore continuous.
For the subdivision-restricted problem variant, the number of geometric solutions is finite and it can be solved using a \emph{dichotomic scheme}~\cite{rottmann2024bicritshapes}.
Given an interval~$[\amin,\amax]$ and optimal solutions~$\sollow$ at~$\amin$ and~$\solup$ at~$\amax$, the dichotomic scheme explores the interval by recursive bisection.
At the crossing point~$\anew$ of~$g(\sollow)$ and~$g(\solup)$, it computes an optimal solution~$\solnew$ for~$\anew$.
If~$\solnew$ is better than~$\sollow$ and~$\solup$ for~$\anew$, the algorithm adds~$\solnew$ to the solution set and recurses on the intervals~$[\amin,\anew]$ and~$[\anew,\amax]$.
Otherwise, it terminates with the solution set~$\{ \sollow, \solup \}$.

Adapting the dichotomic scheme to our scenario poses several challenges.
The objective functions of the combinatorial solutions are not linear and they are not fully continuous because they jump to~$\infty$ outside of the validity interval.
Hence, it is not immediately clear that a function cannot appear on the lower envelope more than once, which is necessary for the correctness of the termination condition.
To show that this property is indeed preserved in our case, we first show that if a combinatorial solution is fully contained in another combinatorial solution for one value of~$\alpha$, then this holds for all other values as well.

\begin{lemma}
\label{lemma:valid_solution_containment}
Let~$\combinatorialsol_1,\combinatorialsol_2$ be two combinatorial solutions and~$(\alpha_1,\alpha_2)= \validSet{\combinatorialsol_1}\cap\validSet{\combinatorialsol_2}$ the interval in which both are valid.
If there is an~$\alpha\in(\alpha_1,\alpha_2)$ for which~$\combinatorialsol_1[\alpha] \subseteq \combinatorialsol_2[\alpha]$, then~$\combinatorialsol_1[\alpha'] \subseteq \combinatorialsol_2[\alpha']$ holds for every~$\alpha'\in(\alpha_1,\alpha_2)$.
\end{lemma}
\begin{proof}
We describe points on the boundary of a solution combinatorially.
Consider a combinatorial arc~$f(a,b)$.
For a parameter~$\beta\in[0,1]$, we define the combinatorial boundary point~$p(a,b,\beta)$ as follows:
For~$\alpha\in[0,\infty)$, the realization~$p(a,b,\beta)[\alpha]$ is the point~$p$ on~$f(a,b)[\alpha]$ such that the circular arc from~$u(a,b)[\alpha]$ to~$p$ has length~$\beta \cdot |f(a,b)[\alpha]|$.
Note that~$u(a,b)=p(a,b,0)$ and~$v(a,b)=p(a,b,1)$.

Assume that there are values~$\alpha,\alpha'\in(\alpha_1,\alpha_2)$ such that~$\combinatorialsol_1[\alpha] \subseteq \combinatorialsol_2[\alpha]$ but~$\combinatorialsol_1[\alpha'] \setminus \combinatorialsol_2[\alpha'] \neq \emptyset$.
Let~$\alpha^*$ denote the largest value in $[\alpha,\alpha')$ such that~$\combinatorialsol_1[\alpha^*] \subseteq \combinatorialsol_2[\alpha^*]$.
Because the movement of the solution boundaries is continuous in~$\alpha$, there must be a combinatorial boundary point~$u_2$ of~$\combinatorialsol_2$ that lies on the boundary of~$\combinatorialsol_1[\alpha^*]$ but in the interior of~$\combinatorialsol_1[\alpha^*+\varepsilon]$ and the exterior of~$\combinatorialsol_1[\alpha^*-\varepsilon]$ for any~$\varepsilon>0$.
Let~$u_1$ denote the combinatorial boundary point of~$\combinatorialsol_1$ for which~$u_1[\alpha^*]=u_2[\alpha^*]$.
Note that~$u_1 \neq u_2$ because the two points do not coincide for~$\alpha^*-\varepsilon$ or~$\alpha^*+\varepsilon$.

If~$u_1$ is an interior point of a combinatorial arc~$f_1$ of~$\combinatorialsol_1$, then~$u_2[\alpha^*]$ cannot lie on an input polygon boundary because this would imply that~$\combinatorialsol_1[\alpha^*]$ is invalid.
Hence, $u_2$ must be an interior point of a combinatorial arc~$f_2$ of~$\combinatorialsol_2$.
It follows from~$\combinatorialsol_1[\alpha^*]\subseteq\combinatorialsol_2[\alpha^*]$ that~$f_1[\alpha^*]$ is fully enclosed by~$\combinatorialsol_2[\alpha^*]$.
Because~$f_1[\alpha^*]$ and~$f_2[\alpha^*]$ have the same radius and share at least one point, they must be part of the same circle.
Because~$f_1=f_2$ would contradict $u_1 \neq u_2$, there must be an endpoint of~$f_1[\alpha^*]$ that lies in the interior of~$f_2[\alpha^*]$ (or vice versa).
However, this implies that one of the arcs intersects a polygon properly and therefore the corresponding solution is invalid.

Assume therefore that~$u_1$ lies on an input polygon edge~$e$.
If~$u_2$ is an interior point of a combinatorial arc~$f_2$ of~$\combinatorialsol_2$, then~$f_2[\alpha^*]$ is invalid.
Hence, $u_2$ must also lie on~$e$.
Because~$u_1 \neq u_2$, at least one of them must be an endpoint of a circular arc.
If~$u_2$ is not an endpoint of a circular arc but~$u_1$ is an endpoint of a circular arc~$f_1$, then part of the enclosed region bounded by~$f_1[\alpha^*]$ is in~$\combinatorialsol_1[\alpha^*] \setminus \combinatorialsol_2[\alpha^*]$, which contradicts our choice of~$\alpha^*$.
Assume therefore that~$u_2$ is an endpoint of a circular arc~$f_2$.
If~$u_1$ is not an endpoint of a circular arc, then~$u_2[\alpha^*+\varepsilon]$ lies on the boundary of~$\combinatorialsol_1[\alpha^*+\varepsilon]$, which contradicts our choice of~$\alpha^*$.
Assume therefore that~$u_1$ is an endpoint of a circular arc~$f_1$.
Because~$u_1$ and~$u_2$ are both tangential on~$e$, the arcs~$f_1$ and~$f_2$ must be part of the same circle.
As shown above, this implies a contradiction.
\end{proof}

Using the~$\alpha$-nestedness of optimal solutions, this containment property can be chained across multiple consecutive pieces of the lower envelope, which implies that each combinatorial solution appears at most once.

\lowerEnvelope*
\begin{proof}
Let~$\combinatorialsol$ be a combinatorial solution with validity interval~$\mathcal{I}(\combinatorialsol)=(\amin,\amax)$.
To simplify this proof, we define realizations of~$\combinatorialsol$ at~$\amin$ and~$\amax$, such that the validity interval becomes closed.
Ordinarily, these realizations are not defined because there is at least one realized arc that touches a polygon edge or has distance~$2\alpha$ between its endpoints and radius~$\pi$.
In the former case, the realized arc is split into two free pieces, so we associate the realization with a different combinatorial solution.
In the latter case, the realization does not satisfy~\Cref{prop:arc_properties:full}.
However, it follows from the proof of~\Cref{prop:solution-validity-range_paper:paper} that the distance does not exceed~$2\alpha$, the radius does not exceed~$\pi$, and the realized arc does not intersect a polygon or a region properly.
Hence, there are valid geometric solutions that we can associate with~$\combinatorialsol[\amin]$ and~$\combinatorialsol[\amax]$.

We use the following observation:
Let~$\combinatorialsol_1,\combinatorialsol_2$ be two combinatorial solutions, $[\alpha_1,\alpha_2]$ an interval in which~$\combinatorialsol_1$ is optimal, and~$(\alpha_2,\alpha_3]$ an interval in which~$\combinatorialsol_1$ is not optimal but~$\combinatorialsol_2$ is.
For every~$0 < \varepsilon < \alpha_3-\alpha_2$, we have~$\combinatorialsol_1[\alpha_2] \subseteq \combinatorialsol_2[\alpha_2+\varepsilon]$ due to the~$\alpha$-nestedness of optimal solutions (\Cref{lem:boundary-nestedness}). Because the change in the realized solutions is continuous in~$\alpha$ by~\Cref{lem:arc-movement}, it follows that~$\combinatorialsol_1[\alpha_2] \subseteq \combinatorialsol_2[\alpha_2]$.

Now, assume there are two values~$\amin < \amax$ such that~$\combinatorialsol[\amin]$ and~$\combinatorialsol[\amax]$ are optimal but~$\combinatorialsol[\alpha]$ is not optimal for~$(\amin,\amax)$.
Then there is a sequence~$\langle\combinatorialsol=\combinatorialsol_0, \combinatorialsol_1,\dots,\combinatorialsol_{k-1},\combinatorialsol_k=\combinatorialsol\rangle$ of combinatorial solutions and a sequence~$\amin=\alpha_0<\alpha_1<\dots<\alpha_k<\alpha_{k+1}=\amax$ such that~$\combinatorialsol_{i-1}$, $\combinatorialsol_i$ and the intervals~$[\alpha_{i-1},\alpha_i]$ and~$(\alpha_i,\alpha_{i+1}]$ fulfill the preconditions of our observation for every~$1 \leq i \leq k$.
Then we have~$\combinatorialsol_{i-1}[\alpha_i] \subseteq \combinatorialsol_i[\alpha_i]$. Furthermore, because~$\combinatorialsol=\combinatorialsol_0$ is valid for~$\alpha_i$, it follows from~\Cref{lemma:valid_solution_containment} and an inductive argument that~$\combinatorialsol[\alpha_i] \subseteq \combinatorialsol_i[\alpha_i]$.
In particular, we have~$\combinatorialsol[\alpha_{k-1}] \subseteq \combinatorialsol_{k-1}[\alpha_{k-1}]$ and~$\combinatorialsol_{k-1}[\amax] \subseteq \combinatorialsol[\amax]$. Because $\combinatorialsol_{k-1}[\amax]$ is not optimal but $\combinatorialsol[\amax]$ is, the second inclusion is proper. Therefore, this contradicts~\Cref{lemma:valid_solution_containment}.
\end{proof}

For most applications, it is not necessary to compute the complete set of solutions for the parametric problem. Instead, an approximation of the set is sufficient.
\begin{definition} [Approximate Parametric Solution]
Let~$\polygons$ be a set of interior-disjoint polygons in the plane and $\varepsilon>0$. A $(1+\varepsilon)$-approximate solution for the parametric unrestricted polygon aggregation problem $\biProblem$ is a set $\mathfrak{K}$ of combinatorial solutions such that for every $\alpha\in[0,\infty]$, there exists an $\alpha'$ and a combinatorial solution $\combinatorialsol\in\mathfrak{K}$ such that $g_\alpha(\combinatorialsol[\alpha']) \leq (1+\varepsilon) \cdot \text{OPT}(\alpha)$, where~$\text{OPT}(\alpha)$ is an optimal solution to~$\freeProblem$.
\end{definition}

Now that the necessary tools have been established, we present three variants of a dichotomic scheme for the parametric problem.
We begin with an exact algorithm that runs in polynomial time, provided there exists an oracle that, given two combinatorial solutions $\combinatorialsol_1$ and $\combinatorialsol_2$, computes the intersection point of $g(\combinatorialsol_1)$ and $g(\combinatorialsol_2)$.
It is unclear whether such an oracle exists, so we propose a second algorithm that approximates the intersection point via binary bisection.
We show that this yields an FPTAS for the parametric problem: for any $\varepsilon > 0$, it computes a $(1+\varepsilon)$-approximate solution in time polynomial in the input size and~$1/\varepsilon$.
Finally, we present a third algorithm that uses a Newton-like method to approximate the intersection point.
Although this does not provide a formal runtime guarantee, we show in~\Cref{sec:appendix:chord_scheme_comparison} that it performs well in practice.

\subsection{Exact Dichotomic Scheme}\label{sec:dichotomic:exact}
\begin{algorithm2e}
    \caption{Exact dichotomic scheme for~$\biProblem$ in the interval~$[\amin,\amax]$.}
    \label{alg:chord:exact}
    $\comblow \gets$ Combinatorial solution optimal at $\amin$\;
    $(\alowinf,\alowsup) \gets$ Validity interval $\mathcal{I}(\comblow)$\;
    Report~$\comblow$\;
    \BlankLine
    $\combup \gets$ Combinatorial solution optimal at $\amax$\;
    $(\aupinf,\aupsup) \gets$ Validity interval $\mathcal{I}(\combup)$\;
    $\Recurse(\comblow,[\amin,\alowsup) ,\combup, (\aupinf, \amax])$\;
    Report~$\combup$\;
    \BlankLine
    \myproc{$\Recurse(\comblow,[\amin,\alowsup) ,\combup, (\aupinf, \amax])$}{
        \If{$\aupinf >\amin$}{
            $\combnew \gets$ Combinatorial solution optimal at $\aupinf$\;
            $(\anewinf,\anewsup) \gets$ Validity interval $\mathcal{I}(\combnew)$\;
            \If{$g(\combnew)(\aupinf) \neq g(\comblow)(\aupinf)$}{
                $\Recurse(\comblow,[\amin,\alowsup) ,\combnew, (\anewinf, \aupinf])$\;
                Report~$\combnew$\;
                $\Recurse(\combnew,[\aupinf,\anewsup) ,\combup, (\aupinf, \amax])$\;
            }
            \Else{
                $\Recurse(\comblow,[\aupinf,\alowsup) ,\combup, (\aupinf, \amax])$\;
            }
        }
        \ElseIf{$\alowsup <\amax$}{
            $\combnew \gets$ Combinatorial solution optimal at $\alowsup$\;
            $(\anewinf,\anewsup) \gets$ Validity interval~$\mathcal{I}(\combnew)$\;
            \If{$g(\combnew)(\alowsup) \neq g(\combup)(\alowsup)$}{
                $\Recurse(\comblow,[\amin,\alowsup) ,\combnew, (\anewinf, \alowsup])$\;
                Report~$\combnew$\;
                $\Recurse(\combnew,[\alowsup,\anewsup) ,\combup, (\aupinf, \amax])$\;
            }
            \Else{
                $\Recurse(\comblow,[\amin,\alowsup) ,\combup, (\aupinf, \alowsup])$\;
            }
        }
        \Else{
        \lIf{$g(\comblow)$ \text{and} $g(\combup)$ \text{do not cross in} $(\amin,\amax)$}{\Return\label{alg:low-high-equal}}
        $\anew\gets \text{Crossing point of } g(\comblow) \text{ and } g(\combup)$\;
        $\combnew \gets$ Combinatorial solution optimal at $\anew$\;
        $(\anewinf,\anewsup) \gets$ Validity interval~$\mathcal{I}(\combnew)$\;
            \If{$g(\combnew)(\anew)\neq g(\comblow)(\anew)$}{
                $\Recurse(\comblow,[\amin,\alowsup) ,\combnew, (\anewinf, \anew])$\;
                Report~$\combnew$\;
                $\Recurse(\combnew,[\anew,\anewsup) ,\combup, (\aupinf, \amax])$\;
            }
        }   
    }
\end{algorithm2e}
The exact variant of the dichotomic scheme for~$\freeProblem$, shown in~\Cref{alg:chord:exact}, requires an oracle that computes the crossing point of~$g(\combinatorialsol_1)$ and~$g(\combinatorialsol_2)$ for two combinatorial solutions~$\combinatorialsol_1,\combinatorialsol_2$.
Although these functions are elementary, they involve non-trivial interactions between polynomials and trigonometric terms, so it is unclear whether their crossing points can be determined analytically.
However, they can be approximated with numerical methods (e.g., with Newton's method).

The main challenge that needs to be solved is that the objective functions are not valid for all values of~$\alpha$.
Whenever the algorithm computes an optimal geometric solution for some value~$\alpha$ using~\optimizer, it retrieves the associated combinatorial solution and computes its validity interval.
The recursive step for the interval~$[\amin,\amax]$ takes as inputs a combinatorial solution~$\comblow$ that is optimal for~$\amin$ and valid below~$\alowsup$, and a combinatorial solution~$\combup$ that is optimal for~$\amax$ and valid above~$\aupinf$.
If~$\comblow$ and~$\combup$ are both valid in~$(\amin,\amax)$, the algorithm proceeds normally by computing the crossing point~$\anew$ of the objective functions, computing an optimal combinatorial solution~$\combnew$ at~$\anew$, and recursing if~$\combnew$ improves upon~$\comblow$ and~$\combup$, which are equivalent at~$\anew$.
If~$\aupinf > \amin$, the algorithm computes~$\combnew$ at~$\aupinf$ instead.
If~$\combnew$ is better than~$\comblow$ at~$\anew$, the algorithm recurses in both~$[\amin,\aupinf]$ and~$[\aupinf,\amax]$.
If it is equivalent to~$\comblow$, it is not necessary to recurse in~$[\amin,\aupinf]$.
The algorithm still recurses in~$[\aupinf,\amin]$, but it has now cut off the part of the interval in which~$\combup$ is invalid, which ensures that it makes progress.
The case for~$\alowsup < \amax$ is symmetrical.

\begin{theorem}
    When given access to an oracle that computes the crossing point between two functions, \Cref{alg:chord:exact} reports a  solution for~$\biProblem$ in the interval~$[\amin,\amax]$ in~$\bigO{n^6 \frac{\log^3n}{\log^2 \log n}}$ time.
\end{theorem}
\begin{proof}
The algorithm upholds the following invariants for each~$\Recurse$ call: The combinatorial solutions~$\comblow$ and~$\combup$ are valid and optimal for~$\amin$ and~$\amax$, respectively.
The supremum of the validity interval of~$\comblow$ is given by~$\alowsup$ and the infimum of the validity interval of~$\combup$ by~$\aupinf$.
We show by induction over the recursion tree that the set of combinatorial solutions reported by~$\Recurse$, plus~$\comblow$ and~$\combup$, is a solution for~$\biProblem$ the interval~$(\amin,\amax)$.
The algorithm distinguishes between three cases.
Unless additional conditions are met, each case computes a combinatorial solution~$\combnew$ that is optimal at some value~$\anew\in(\amin,\amax)$, reports it and recurses on the interval~$[\amin,\anew]$ and~$[\anew,\amax]$ while upholding the invariants.
By the induction step, this yields a solution for~$(\amin,\amax)$.

Regarding the additional conditions, we first consider the case~$\aupinf > \amin$. If~$\combnew$ and~$\comblow$ have the same value at~$\aupinf$, then~$\comblow$ is optimal at both~$\amin$ and~$\aupinf$. Then it follows from~\Cref{lem:lower-envelope:paper} that it is optimal within the entire interval~$[\amin,\aupinf]$.
Hence, it is sufficient to recurse on~$[\aupinf,\amax]$, using~$\comblow$ as the left input solution.
The case~$\alowsup < \amax$ is symmetrical.
In the third case, we know that~$\comblow$ and~$\combup$ are both valid in~$(\amin,\amax)$.
If~$g(\comblow)$ and~$g(\combup)$ do not cross in~$(\amin,\amax)$, then one of the two solutions is optimal at both~$\amin$ and~$\amax$.
If~$\comblow$ and~$\combnew$ have the same value at~$\anew$, then all three solutions are optimal at~$\anew$.
In both cases, $\{ \comblow, \combup\}$ is already a solution for~$\biProblem$ in~$(\amin,\amax)$ by~\Cref{lem:lower-envelope:paper}.

The runtime for a single~$\Recurse$ call (excluding the recursive calls) is dominated by the time for finding the optimal solution~$\combnew$, which is in~$\tilde{\mathcal O}(n^4)$ by~\Cref{theorem:algorithm:paper}.
The other potentially expensive step is computing the validity interval of~$\combnew$.
Here, the most expensive step is testing for pairwise intersection between combinatorial boundary pieces, which requires~$\mathcal{O}(n^2)$ time.

To obtain the overall runtime bound, we show that the number of~$\Recurse$ calls is proportional to the number of reported solutions, which is in~$\mathcal O(n^2)$ by~\Cref{prop:parametric-solutions:paper}. Whenever the algorithm makes two recursive calls, it reports a new solution. When it makes one recursive call without reporting a new solution, it leaves all inputs unchanged except for~$\amin$ (or~$\amax$, respectively). This ensures that the condition~$\aupinf > \amax$ (or~$\alowsup < \amax$, respectively), cannot be satisfied anymore. Thus, the algorithm enters the else-case after at most two calls, where it either reports a solution or terminates without another recursive call.
\end{proof}
\subsection{An Approximation Scheme}
For our approximation scheme, we replace the crossing point oracle with a binary search and use it to recursively bisect the interval~$[\amin,\amax]$.
Given combinatorial solutions $\combinatorialsol_L$ optimal at $\alpha_\text{min}$ and $\combinatorialsol_U$ optimal at $\alpha_\text{max}$, the bisection point would ordinarily be $\alpha_\text{N} = (\alpha_\text{max} + \alpha_\text{min})/2$.
However, this is not well defined for $\alpha_\text{max}=\infty$.
To circumvent this, we switch to a different but equivalent formulation of the objective function:
\begin{align*}
    f_\lambda(\solution) = (1-\lambda) \cdot \A(\solution) + \lambda \cdot \PP(\solution)
\end{align*}
Since the domain of $f$~is~$[0,1]$, the bisection point~$\lnew= (\lmax - \lmin)/2$, is always well defined.
To see that the formulations are equivalent, note that for~$\alpha=\frac{\lambda}{1-\lambda}$, we have~$(1-\lambda) \cdot g_\alpha(\solution) = (1-\lambda) \cdot \A(\solution) + \lambda \cdot \PP(\solution) = f_\lambda(\solution)$.
Moreover, we note that approximating~$f$ and~$g$ gives the same solution set because~$f_\lambda(\solution_1) \leq (1+\varepsilon) \cdot f_\lambda(\solution_2)$ is equivalent to~$(1-\lambda) \cdot g_\alpha(\solution_1) \leq (1-\lambda) \cdot (1+\varepsilon) \cdot g_\alpha(\solution_2)$.
Note that this formulation is almost identical to the original formulation by Rottmann et al.~\cite{rottmann2024bicritshapes}, except that the role of area and perimeter is switched.
This is done for ease of exposition, as it ensures that~$f_\alpha$ is decreasing in the parameter, just as $g_\alpha$.
Because the parameter~$\lambda$ is strictly increasing in~$\alpha$, it follows that the area of a combinatorial solution is decreasing in~$\lambda$ and the perimeter is increasing.
This also means that~$g_\alpha(\solution_1) < g_\alpha(\solution_2)$ iff $f_\lambda(\solution_1) < f_\lambda(\solution_2)$.
The slope of the function is~$f'(\combinatorialsol)(\lambda) = \PP(\combinatorialsol[\lambda]) - \A(\combinatorialsol[\lambda])$, which is decreasing in~$\lambda$.

Unlike in the exact dichotomic scheme, $\comblow$ and~$\combup$ do not have the same objective value at~$\lnew$ because it is not the actual crossing point.
Hence, our algorithm, which is depicted in~\Cref{alg:chord:heuristic:new}, must perform two independent comparisons and may end up recursing on only one side. Since we are not guaranteed to reach the exact crossing point with binary bisection, we introduce an approximate optimality test: After computing the optimal solution $\combnew$ at~$\lnew$, we compare it to the evaluated geometric solutions~$\comblow[\amin]$ and~$\combup[\amax]$ at~$\anew$.
If one of them is within an~$(1+\varepsilon)$-factor of~$\combnew$, we do not recurse on that side. 

In the remainder of this section, we show that our algorithm computes a $(1+\varepsilon)$-approximate solution in polynomial time, with a logarithmic dependence on $1/\varepsilon$ and a quantity that describes the range of distances that occur in~$\polygons$.
A standard measure for describing the range of distances in a point set $\mathcal{S}$ is the \emph{spread}, defined as $\frac{\max_{u,v \in \mathcal{S}} \LL(\overline{uv})}{\min_{u,v \in \mathcal{S}} \LL(\overline{uv})}.$ However, in algorithmic settings involving polygons, this quantity fails to adequately capture the structural properties of the input. In particular, regardless of the spread, one can construct instances in which the polygons become degenerate, for example by having arbitrarily small area. To address this limitation, we introduce a refined measure that also considers the structure of the polygons.
Given an instance~$\polygons$, the \emph{diameter}~$\text{diam}(\polygons)$ is the maximum distance between any pair of vertices of~$\polygons$.
The \emph{segment separation}~$\delta_\text{seg}(\polygons)$ is the minimum distance between any polygon vertex~$v$ and any polygon edge that does not contain~$v$.
This quantity captures both the shortest distance between vertices and the thickness of the polygons.
The ratio between the two is the \emph{segment spread}~$\Phi:=\frac{\text{diam}(\polygons)}{\delta_\text{seg}(\polygons)}$.
In the following proofs, we use the quantity~$K:=\max\{\frac{\PP(\polygons)}{\A(\polygons)},\frac{\A(\conv(\polygons))}{\PP(\conv(B_\text{min}))}\}$, where $B_\text{min} \in \polygons$ denotes the polygon whose convex hull has the smallest perimeter.

\begin{lemma}
\label{lem:size-of-K}
For a set~$\polygons$ of polygons, the quantity~$K=\max\{\frac{\PP(\polygons)}{\A(\polygons)},\frac{\A(\conv(\polygons))}{\PP(\conv(B_\text{min}))}\}$ satisfies~$\log K\in\bigO{\log n +  \log \Phi}$.
\end{lemma}
\begin{proof}
    The perimeter of~$\polygons$ consists of at most~$n$ line segments of length at most~$\text{diam}(\polygons)$, i.e., we have~$\PP(\polygons)\leq n \cdot \text{diam}(\polygons)$.
    On the other hand, the perimeter of every polygon in~$\polygons$ includes at least one line segment of length at least~$\delta_\text{seg}(\polygons)$, so we have~$\PP(\conv(B_\text{min}))\geq \delta_\text{seg}(\polygons)$.
    We have~$\A(\conv(\polygons))\leq \text{diam}(\polygons)^2$, as the convex hull of~$\polygons$ is contained within a bounding box of side length~$\text{diam}(\polygons)$.
    Let~$B$ be any polygon of~$\polygons$ and let~$T$ be any triangle in an arbitrary triangulation of~$B$.
    Let $g$ be the longest side of $T$, $p$ the opposite vertex, and the line segment given by the orthogonal projection of~$p$ onto the line through~$g$.
    Because~$g$ is the longest side, $h$ touches~$g$.
    Since~$p$ is not incident to~$g$, it follows that~$\LL(h) \geq \delta_{\mathrm{seg}}$.
    Similarly, we have~$\LL(g) \geq \delta_{\mathrm{seg}}$.
    Hence, we have
\begin{equation*}
    A(\polygons) \geq A(B)\geq A(T) = \tfrac{1}{2}\LL(g)\cdot \LL(h) \geq \tfrac{1}{2}\delta_{\mathrm{seg}}(\polygons)^2.
\end{equation*}
     It follows that $K \leq \max \{ \frac{2n \cdot \text{diam}(\polygons)}{\delta_\text{seg}(\polygons)^2}, \frac{\text{diam}(\polygons)^2}{\delta_\text{seg}(\polygons)} \}$ and therefore $\log K \in \bigO{\log n  + \log \Phi}$.
\end{proof}

We show that computing the (unique) intersection point of~$f(\combinatorialsol_1)$ and~$f(\combinatorialsol_2)$ for two combinatorial solutions~$\combinatorialsol_1$ and~$\combinatorialsol_2$ within an interval of length~$\Delta$ is equivalent to finding the root of a $4K$-Lipschitz function within this interval, and that binary search $\varepsilon$-approximates this root in~$\bigO{\log \frac{K}{\varepsilon} + \log \Delta}$ time.
Hence, although our binary bisection scheme does not necessarily find the crossing point exactly, it comes sufficiently close in a logarithmic number of steps that the approximate optimality test succeeds.

\begin{algorithm2e}
    \caption{Approximative dichotomic scheme for~$\biProblem$.}\label{alg:chord:heuristic:new}
    $\comblow \gets$ Combinatorial solution optimal for~$\lambda = 0$\;
    Report~$\comblow$\;
    \BlankLine
    $\combup \gets$ Combinatorial solution optimal for~$\lambda = 1$\;
    $\Recurse(\comblow, 0, \combup, 1)$\;
    Report~$\combup$\;
    \BlankLine
    \myproc{$\Recurse(\comblow, \lmin, \combup, \lmax)$}{
        \lIf{$\comblow =\combup$ \KwOr $f(\comblow[\lmin]) = f(\combup[\lmax])$\label{alg:chord:heuristic:new:identical}}{\Return}
    $\lnew \gets (\lmax - \lmin)/2$\;
    $\combnew \gets$ Comb. solution optimal for $\lnew$\label{alg:chord:heuristic:new:optimize}\;
    \tcp{Note $f(\combinatorialsol)(\lambda)=\infty$ if not valid}
    $\text{dom}_L\gets f(\combnew)(\lnew) < \min(f(\comblow)(\lnew), f(\comblow[\lmin])(\lnew)/(1+\varepsilon))$\;
    $\text{dom}_U\gets f(\combnew)(\lnew) < \min(f(\combup)(\lnew), f(\combup[\lmax])(\lnew)/(1+\varepsilon))$\;
    \lIf{$\text{dom}_L$}{
        $\Recurse(\comblow, \lmin, \combnew, \lnew)$
    }
    \lIf{$\text{dom}_L$ $\KwAnd$ $\text{dom}_U$}{Report~$\combnew$}
     \lIf{$\text{dom}_U$}{
        $\Recurse(\combnew, \lnew, \combup, \lmax)$
    }
    }
\end{algorithm2e}

\begin{lemma}
    \label{lem:lipschitz}
    Let~$\solution$ be a geometric solution that is optimal for some~$\lambda^* \in [0,1]$.
    Then the function~$h(\solution)$ with~$h(\solution)(\lambda) := \log f(\solution)(\lambda)$ is~$K$-Lipschitz.
\end{lemma}
\begin{proof}
    For any~$\lambda \in [0,1]$, we show that~$|h'(\solution)(\lambda)| \leq K$, where~$h'$ denotes the first derivative of~$h$ in~$\lambda$.
    We have
    \[
    h'(\solution)(\lambda) = \frac{f'(\solution)(\lambda)}{f(\solution)(\lambda)} = \frac{\PP(\solution) - \A(\solution)}{f(\solution)(\lambda)}.
    \]
    If~$\PP(\solution) < \A(\solution)$, then we have
    \[
    |h'(\solution)(\lambda)| = \frac{\A(\solution) - \PP(\solution)}{f(\solution)(\lambda)} \leq \frac{\A(\solution)}{\PP(\solution)} \leq \frac{\A(\conv(\polygons))}{\PP(\conv(B_\text{min}))} \leq K.
    \]
    Because~$\solution$ is optimal for some~$\lambda^*$ and~$\A(\solution) \geq \A(\polygons)$, we have~$\PP(\solution) \leq \PP(\polygons)$.
    Hence, if~$\A(\solution) < \PP(\solution)$, we have
    \[
    |h'(\solution)(\lambda)| = \frac{\PP(\solution)-\A(\solution)}{f(\solution)(\lambda)} \leq \frac{\PP(\solution)}{\A(\solution)} \leq \frac{\PP(\polygons)}{\A(\polygons)} \leq K.
    \]
\end{proof}

\begin{lemma}
    \label{lem:perimeter-bound}
    Let~$\combinatorialsol$ be a combinatorial solution and~$\alpha^* \in \validSet{\combinatorialsol}$ such that~$\combinatorialsol[\alpha^*]$ is optimal.
    Then for every~$\alpha \in \validSet{\combinatorialsol}$, we have~$\PP(\combinatorialsol[\alpha]) \leq 3 \cdot \PP(\polygons)$.
\end{lemma}
\begin{proof}
    We have~$\PP(\combinatorialsol[\alpha^*]) \leq \PP(\polygons)$ because~$\combinatorialsol[\alpha^*]$ is optimal.
    Due to~$\alpha$-nestedness (\Cref{cor:combinatorial-nested}), the area of~$\combinatorialsol$ is increasing with~$\alpha$, so the perimeter is decreasing.
    Thus, the remaining case is~$\alpha < \alpha^*$.
    Let~$C_{vv}(\combinatorialsol)$ denote the set of vertex-vertex arcs in~$\combinatorialsol$, $C_{ve}(\combinatorialsol)$ the set of vertex-edge arcs, $C_{ee}(\combinatorialsol)$ the set of edge-edge arcs, and~$E_\polygons(\combinatorialsol[\alpha])$ the set of input polygon edge segments that appear in~$\combinatorialsol[\alpha]$.
    Then we have
    \[
    \PP(\combinatorialsol[\alpha]) = \sum_{c \in C_{vv}(\combinatorialsol)} \ell(c[\alpha]) + \sum_{c \in C_{ve}(\combinatorialsol)} \ell(c[\alpha]) + \sum_{c \in C_{ee}(\combinatorialsol)} \ell(c[\alpha]) + \sum_{e \in E_\polygons(\combinatorialsol[\alpha])} \ell(e).
    \]
    It is clear that~$\sum_{e \in E_\polygons(\combinatorialsol[\alpha])} \ell(e) \leq \PP(\polygons)$.
    By~\Cref{lem:arc-movement}, the vertex-edge and edge-edge arcs become shorter as~$\alpha$ decreases.
    For~$c \in C_{vv}(\combinatorialsol)$ with endpoints~$u,v$, we have~$d(u,v) \leq \ell(c[\alpha]) \leq \pi/2 \cdot d(u,v)$.
    This is because~$c[\alpha]$ is a straight line for~$\alpha=\infty$ and a half-circle for the smallest~$\alpha$ such that~$c$ is valid.
    It follows that~$\ell(c[\alpha]) \leq 2 \cdot \ell(c[\alpha^*])$.
    Overall, we have~$\PP(\combinatorialsol[\alpha]) \leq 2 \cdot \PP(\combinatorialsol[\alpha^*]) + \PP(\polygons) \leq 3 \cdot \PP(\polygons)$.
\end{proof}

It is clear that the area of a combinatorial solution is always within $[\A(\polygons),\A(\conv(\polygons))]$ because the arcs bend inwards and do not touch polygons except at endpoints.
It also clear that the perimeter of any solution is at least $\PP(\conv(B_\text{min}))$.

\begin{lemma}
    Let~$\combinatorialsol$ be a combinatorial solution and~$\lambda^* \in [0,1]$ such that~$\combinatorialsol[\lambda^*]$ is optimal.
    Then the function~$h(\combinatorialsol)$ with~$h(\combinatorialsol)(\lambda) := \log f(\combinatorialsol)(\lambda)$ is~$3K$-Lipschitz.
\end{lemma}
\begin{proof}
    For any~$\lambda \in [0,1]$, we show that~$|h'(\combinatorialsol)(\lambda)| \leq 3K$, where~$h'$ denotes the first derivative of~$h$ in~$\lambda$.
    We have
    \[h'(\combinatorialsol)(\lambda) = \frac{f'(\combinatorialsol)(\lambda)}{f(\combinatorialsol)(\lambda)} = \frac{\PP(\combinatorialsol[\lambda]) - \A(\combinatorialsol[\lambda])}{f(\combinatorialsol)(\lambda)}.\]

    If~$\PP(\combinatorialsol[\lambda]) < \A(\combinatorialsol[\lambda])$, then we have
    \[
    |h'(\combinatorialsol)(\lambda)| = \frac{\A(\combinatorialsol[\lambda]) - \PP(\combinatorialsol[\lambda])}{f(\combinatorialsol)(\lambda)} \leq \frac{\A(\combinatorialsol[\lambda])}{\PP(\combinatorialsol[\lambda])} \leq \frac{\A(\conv(\polygons))}{\PP(\conv(B_\text{min}))} \leq K.
    \]
    If~$\A(\combinatorialsol[\lambda]) < \PP(\combinatorialsol[\lambda])$, then using~\Cref{lem:perimeter-bound} and the fact that switching between the objectives $g_\alpha$ and $f_\lambda$ does not change the perimeter of the solution, we have
    \[
    |h'(\combinatorialsol)(\lambda)| = \frac{\PP(\combinatorialsol[\lambda])-\A(\combinatorialsol[\lambda])}{f(\combinatorialsol)(\lambda)} \leq \frac{\PP(\combinatorialsol[\lambda])}{\A(\combinatorialsol[\lambda])} \leq \frac{3\PP(\polygons)}{\A(\polygons)} \leq 3K.
    \]
\end{proof}

\begin{corollary}
    \label{cor:lipschitz-diff}
     Let~$\combinatorialsol$ be a combinatorial solution and~$\lambda^* \in [0,1]$ such that~$\combinatorialsol[\lambda^*]$ is optimal.
     Furthermore, let~$\solution$ be a geometric solution that is optimal for some~$\lambda'$.
     Then the function~$h(\solution) - h(\combinatorialsol)$ is~$4K$-Lipschitz.
\end{corollary}

\begin{lemma}
    \label{lem:lipschitz-binary}
    Let~$f$ be a monotone~$c$-Lipschitz function that has exactly one root in the interval~$[L,U]$. Then binary search finds a value~$x \in [L,U]$ with~$|f(x)| \leq \varepsilon$ in~$\mathcal O(\log\frac{c}{\varepsilon} + \log \Delta)$ time, where~$\Delta = U - L$.
\end{lemma}
\begin{proof}
    Let~$x^*$ denote the root of~$f$ in~$[L,U]$.
    Then we have~$|f(x)| = |f(x) - f(x^*)| \leq c |x - x^*|$.
    After~$k$ steps of the binary search, the considered interval has length~$\frac{\Delta}{2^k}$, so we have
    \[ |f(x_k)| \leq c  |x_k - x^*| \leq c \frac{\Delta}{2^k}. \]
    For~$k = \lceil\log_2\frac{c}{\varepsilon} + \log_2 \Delta\rceil$, we have~$|f(x_k)| \leq \varepsilon$.
\end{proof}

\begin{theorem}
    \Cref{alg:chord:heuristic:new} computes a~$(1+\varepsilon)$-approximate solution to~$\biProblem$ in~$\mathcal O(n^6 \frac{\log^3 n}{\log^2 \log n} (\log n + \log \Phi \log \frac{1}{\varepsilon}))$ time.
\end{theorem}
\begin{proof}
    We first show that the algorithm terminates in the given time.
    To simplify the analysis, we consider a modified version of the algorithm that calls~$\Recurse(\combnew,\lnew,\combnew,\lnew)$ if~$\text{dom}_L$ and~$\text{dom}_U$  are both false and~$f(\combnew)(\lnew) = f(\comblow)(\lnew) = f(\combup)(\lnew)$ holds. 
    Because the call terminates immediately in line~\ref{alg:chord:heuristic:new:identical}, this variant has the same output and the same asymptotic runtime as~\Cref{alg:chord:heuristic:new}.
    
    If~$f(\combnew)[\lnew]$ is strictly smaller than both~$f(\comblow)(\lnew)$ and~$f(\combup)(\lnew)$, then the algorithm has discovered a new part of the lower envelope, even if it does not report it.
    This can happen at most~$\mathcal O(n^2)$ times by~\Cref{prop:parametric-solutions:paper}.
    If~$f(\combnew)[\lnew]$ is equal to at least one of~$f(\comblow)(\lnew)$ and~$f(\combup)(\lnew)$, the algorithm makes at most a single~$\Recurse$ call.
    We show that after~$\mathcal O(\log \frac{K}{\varepsilon})$ recursive calls in which no new part of the lower envelope is discovered, the algorithm terminates because both~$\text{dom}_L$ and~$\text{dom}_U$ are false.
    Then it follows that the total number of $\Recurse$ calls is in $O(n^2 \log \frac{K}{\varepsilon})$.
    The time for each call is dominated by line~\ref{alg:chord:heuristic:new:optimize}, which takes~$\mathcal O(n^4 \frac{\log^3 n}{\log^2 \log n})$ time, so the runtime claim follows with~\Cref{lem:size-of-K}.
    
    We show that~$(1+\varepsilon) f(\combnew)(\lnew) \geq f(\comblow[\lmin])(\lnew)$ holds after~$\mathcal O(\log \frac{K}{\varepsilon})$ recursive calls; the argument for~$\text{dom}_U$ is analogous.
    If the algorithm did not find a new part of the lower envelope in the current call, then we have~$f(\combnew)(\lnew) = \min\{f(\comblow)(\lnew),f(\combup)(\lnew)\}$.
    Thus, the condition to be met is
    \[
    \frac{f(\comblow[\lmin])(\lnew)}{\min\{f(\comblow)(\lnew),f(\combup)(\lnew)\}} \leq 1 + \varepsilon,
    \]
    which is equivalent to
    \[
    \max\{\frac{f(\comblow[\lmin])(\lnew)}{f(\comblow)(\lnew)},\frac{f(\comblow[\lmin])(\lnew)}{f(\combup)(\lnew)}\} \leq 1 + \varepsilon.
    \]
    Once again, we restrict ourselves to the case
    \[
    \frac{f(\comblow[\lmin])(\lnew)}{f(\combup)(\lnew)} \leq 1 + \varepsilon,
    \]
    as the other one is analogous.
    With~$h(\lambda) := h(\comblow[\lmin])(\lambda) - h(\combup)(\lambda) \leq \log(1+\varepsilon)$, this is equivalent to finding a value~$\lambda^*\in[\lmin,\lmax]$ such that~$h(\lambda^*) \leq \log(1+\varepsilon)$.
    By~\Cref{cor:lipschitz-diff}, $h$ is~$4K$-Lipschitz.
    For~$0 < \varepsilon \leq 1$, we have~$\log(1+\varepsilon) \geq \varepsilon/2$ (note that~$\log$ denotes the natural logarithm).
    By~\Cref{lem:lipschitz-binary}, binary search finds a value~$\lambda^* \in [\lmin,\lmax]$ with~$h(\lambda^*) \leq |h(\lambda^*)| \leq \varepsilon/2 \leq \log(1+\varepsilon)$ in~$\mathcal O(\log\frac{K}{\varepsilon} + \log \Delta)$ time, where~$\Delta = \lmax - \lmin \leq 1$.

    To show that the algorithm computes a $(1+\varepsilon)$-approximation, consider the case that the recursion in~$[\lmin,\lnew]$ is skipped because~$\text{dom}_L$ is false (the case for the other recursion is symmetrical).
    We show that for every~$\lambda\in[\lmin,\lnew]$, there is a~$\lambda'\in[0,1]$ such that~$f(\comblow[\lambda'])(\lambda) \leq (1+\varepsilon) \cdot f(\combnew)(\lambda)$.
    Consider the case that~$\text{dom}_L$ is false because~$f(\combnew) \geq f(\comblow)(\lnew)$ holds.
    Because~$\combnew$ is optimal for~$\lnew$, it follows that~$\comblow$ is optimal as well. Then~$\comblow$ is optimal within~$[\lmin,\lnew]$ by~\Cref{lem:lower-envelope:paper} and the claim follows with~$\lambda'=\lambda$.
    
    Now, consider the other case that~$(1+\varepsilon) \cdot f(\combnew)(\lnew) \geq f(\comblow[\lmin])(\lnew)$.
    We write this as~$z(\lnew) \geq 0$ with the functions~$f_1:=(1+\varepsilon) \cdot f(\combnew)$, $f_2:=f(\comblow[\lmin])$ and~$z:=f_1-f_2$.
    Because~$\comblow$ is optimal at~$\lmin$, we also have~$h(\lmin) \geq 0$.
    We show that~$z(\lambda) \geq 0$ for~$\lambda\in[\lmin,\lnew]$.
    The slope of~$f_1$ is decreasing, whereas the slope of~$f_2$ is constant, so the slope of~$z$ is decreasing.
    If~$z(\lambda) < 0$, then the slope of~$z$ must be negative at~$\lambda$.
    However, this implies that it remains negative, which contradicts~$z(\lnew) \geq 0$.
    Hence, the claim follows for~$\lambda'=\lmin$.
\end{proof}

\subsection{Heuristic Dichotomic Scheme}\label{sec:dichotomic:heuristic}

The binary bisection scheme of \cref{alg:chord:heuristic:new} searches for the intersection point in an uninformed manner, while the exact scheme of \cref{alg:chord:exact} requires an oracle to determine it. In contrast, we present a heuristic dichotomic scheme (given in \cref{alg:chord:heuristic}) that aims to estimate the intersection point in a more informed way.
To this end, it approximates the functions~$g(\comblow)$ and~$g(\combup)$ by their tangents~$g(\comblow[\amin])$ and~$g(\combup[\amax])$, corresponding to the objective functions of the geometric solutions. The intersection point~$\anew$ of these tangents is then used as an estimate for the true crossing point~$\alpha^*$ of the combinatorial solutions (see~\Cref{fig:chord}).
Otherwise, the algorithm is identical to \cref{alg:chord:heuristic:new}.

In the remainder of this section, we show that \cref{alg:chord:heuristic} terminates and return an $(1+\varepsilon)$-approximate solution.

\begin{figure}
\centering
\includegraphics{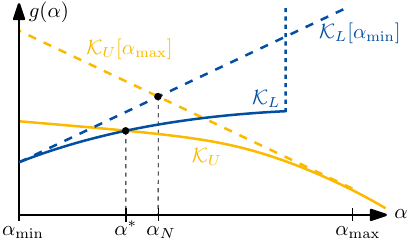}
\caption{A~\texttt{Recurse} call in the dichotomic scheme for~$\biProblem$. The combinatorial solutions~$\comblow$ and~$\combup$ are approximated via their tangents~$\comblow[\amin]$ and~$\combup[\amax]$. A new optimal solution is computed at the crossing point~$\anew$ of the tangents, which differs from the crossing point~$\alpha^*$ of the combinatorial solutions. For illustration purposes, we depict the slope of~$\combup$ as negative.}
\label{fig:chord}
\end{figure}

\begin{algorithm2e}
    \caption{Heuristic dichotomic scheme for~$\biProblem$ in the interval~$[\amin,\amax]$.}\label{alg:chord:heuristic}
    $\comblow \gets$ Combinatorial solution optimal at $\amin$\;
    Report~$\comblow$\;
    \BlankLine
    $\combup \gets$ Combinatorial solution optimal at $\amax$\;
    $\Recurse(\comblow, \amin, \combup, \amax)$\;
    Report~$\combup$\;
    \BlankLine
    \myproc{$\Recurse(\comblow, \amin, \combup, \amax)$}{
        \lIf{$\comblow =\combup$ \KwOr $g(\comblow[\amin]) = g(\combup[\amax])$\label{alg:chord:heuristic:identical}}{\Return}
    $\anew \gets$ Crossing point of ${g}(\comblow[\amin])$ and ${g}(\combup[\amax])$\;
    $\combnew \gets$ Comb. solution optimal at $\anew$\;
    \tcp{Note $g(\combinatorialsol)(\alpha)=\infty$ if not valid}
    $\text{dom}_L\gets g(\combnew)(\anew) < \min(g(\comblow)(\anew), g(\comblow[\amin])(\anew)/(1+\varepsilon))$\;
    $\text{dom}_U\gets g(\combnew)(\anew) < \min(g(\combup)(\anew), g(\combup[\amax])(\anew)/(1+\varepsilon))$\;
    \lIf{$\text{dom}_L$}{
        $\Recurse(\comblow, \amin, \combnew, \anew)$
    }
    \lIf{$\text{dom}_L$ $\KwAnd$ $\text{dom}_U$}{
    Report~$\combnew$}
    
     \lIf{$\text{dom}_U$}{
        $\Recurse(\combnew, \anew, \combup, \amax)$
    }
    }
\end{algorithm2e}

\begin{theorem}
    For~$\varepsilon>0$, \Cref{alg:chord:heuristic} reports a~$(1+\varepsilon)$-approximated solution for~$\biProblem$ in the interval~$[\amin,\amax]$.
\end{theorem}
\begin{proof}
We first show that the algorithm terminates.
To simplify the analysis, we consider a modified version of the algorithm that calls~$\Recurse(\combnew,\anew,\combnew,\anew)$ if~$\text{dom}_L$ and~$\text{dom}_U$  are both false and~$g(\combnew)(\anew) = g(\comblow)(\anew) = g(\combup)(\anew)$ holds. 
Because the call terminates immediately in line~\ref{alg:chord:heuristic:identical}, this variant has the same output and the same asymptotic runtime as~\Cref{alg:chord:heuristic}.

If~$g(\combnew)[\anew]$ is strictly smaller than both~$g(\comblow)(\anew)$ and~$g(\combup)(\anew)$, then the algorithm has discovered a new part of the lower envelope, even if it does not report it.
This can happen at most~$\mathcal O(n^2)$ times by~\Cref{prop:parametric-solutions:paper}.
If~$g(\combnew)[\anew]$ is equal to at least one of~$g(\comblow)(\anew)$ and~$g(\combup)(\anew)$, the algorithm makes at most a single~$\Recurse$ call.
We show that in a chain of recursive calls, the interval length converges to~$0$.

Because the tangents~$g(\comblow[\amin])$ and~$g(\combup[\amax])$ are optimal at~$\amin$ and~$\amax$, respectively, they are either identical or have a crossing point in~$[\amin,\amax]$.
If they are identical, the algorithm terminates in line~\ref{alg:chord:heuristic:identical}.
If the crossing point is in~$(\amin,\amax)$, then the interval length strictly decreases in the next~$\Recurse$ call.
If the crossing point is~$\amin$ or~$\amax$, then~$\text{dom}_L$ and~$\text{dom}_U$ are both false and the interval has length~$0$ in the next~$\Recurse$ call.
Because the interval length is bounded from below by~$0$, it must converge towards a limit that is a fixed point, i.e., a~$\Recurse$ call in which the interval length does not change.
The only fixed point is for~$\amin=\amax$, so the interval length converges towards~$0$.

We use this to show that after a finite number of steps, the algorithm terminates because both~$\text{dom}_L$ and~$\text{dom}_U$ are false.
Let~$\sollow:=\comblow[\amin]$, $\solnew:=\combnew[\anew]$ and~$\solup:=\combup[\amax]$.
Because~$\anew - \amin$ and~$\amax - \anew$ converge towards~$0$, they decrease below any constant threshold after a finite number of steps.
Hence, we eventually have
\begin{align*}
\anew - \amin \leq &\varepsilon \cdot \frac{\A(\polygons)}{\PP(\polygons)} \leq \varepsilon \cdot \frac{\A(\sollow)}{\PP(\sollow)},\\
\amax - \anew \leq &\frac{\varepsilon \cdot \A(\polygons)}{(1+\varepsilon) \cdot \PP(\polygons)} \leq \frac{\varepsilon \cdot \A(\solup)}{(1+\varepsilon) \cdot \PP(\polygons) - \PP(\solup)}.
\end{align*}
Here we use that~$\sollow$ is optimal for~$\amin$ and~$\solup$ is optimal for~$\amax$ and therefore~$\A(\polygons) \leq \A(\sollow)$, $A(\polygons) \leq \A(\solup)$ and~$\PP(\polygons) \geq \PP(\sollow)$.
Furthermore, the denominator of the final inequality is strictly positive because~$\polygons \subset \solup$.
We rewrite these inequalities as
\begin{align}
\anew \cdot \PP(\sollow) \leq &\ \varepsilon \cdot \A(\sollow) + \amin \cdot \PP(\sollow),\label{eqn:perimeter-lower}\\
\anew \cdot \PP(\solup) \leq &\ \varepsilon \cdot \A(\solup) - (\amax - \anew) \cdot (1+\varepsilon) \cdot \PP(\polygons)\label{eqn:perimeter-upper}\\
&- \amax \cdot \PP(\solup).\notag
\end{align}
We also use the following observation about the function~$\text{OPT}$ that describes the value of the optimal solution dependent on~$\alpha$.
This function is the lower envelope of linear functions whose slope is between~$0$ and~$\PP(\polygons)$.
Hence, for~$\alpha_1 < \alpha_2$ we have
\begin{align}
    \text{OPT}(\alpha_1) &\leq \text{OPT}(\alpha_2),\label{eqn:opt-1}\\
    \text{OPT}(\alpha_2) &\leq \text{OPT}(\alpha_1) + (\alpha_2 - \alpha_1) \cdot \PP(\polygons)\label{eqn:opt-2}.
\end{align}
Assuming~(\ref{eqn:perimeter-lower}) holds, and using~(\ref{eqn:opt-1}), we obtain
\begin{align*}
    &g(\sollow)(\anew)\\
    =&\ \A(\sollow) + \anew \cdot \PP(\sollow)\\
    \leq&\ \A(\sollow) + \varepsilon \cdot \A(\sollow) + \amin \cdot \PP(\sollow)\\
    \leq&\ (1+\varepsilon) \cdot (\A(\sollow) + \amin \cdot \PP(\sollow))\\
    =&\ (1+\varepsilon) \cdot g(\sollow)(\amin) = (1+\varepsilon) \cdot \text{OPT}(\amin)\\
    \leq&\ (1+\varepsilon) \cdot \text{OPT}(\anew) = (1+\varepsilon) \cdot g(\solnew)(\anew).
\end{align*}
Similarly, assuming (\ref{eqn:perimeter-upper}) holds, and using~(\ref{eqn:opt-2}), we obtain
\begin{align*}
    &\ g(\solup)(\anew)\\
    = &\ \A(\solup) + \anew \cdot \PP(\solup)\\
    \leq &\ \A(\solup) + \varepsilon \cdot \A(\solup) - (\amax - \anew) \cdot (1+\varepsilon) \cdot \PP(\polygons)\\
    &\ + \amax \cdot \PP(\solup)\\
    \leq &\ (1+\varepsilon) \cdot (\A(\solup) - (\amax - \anew) \cdot \PP(\polygons)\\
    &\ + \amax \cdot \PP(\solup))\\
    = &\ (1+\varepsilon) \cdot (g(\solup)(\amax) - (\amax - \anew) \cdot \PP(\polygons))\\
    = &\ (1+\varepsilon) \cdot (\text{OPT}(\amax) - (\amax - \anew) \cdot \PP(\polygons))\\
    \leq &\ (1+\varepsilon) \cdot \text{OPT}(\anew) = (1+\varepsilon) \cdot g(\solnew)(\anew).
\end{align*}
Hence, both~$\text{dom}_L$ and~$\text{dom}_U$ become false after a finite number of steps.

To show that the algorithm computes a $(1+\varepsilon)$-approximation, consider the case that the recursion in~$[\amin,\anew]$ is skipped because~$\text{dom}_L$ is false (the case for the other recursion is symmetrical).
We show that for every~$\alpha\in[\amin,\anew]$, there is an~$\alpha'\in[0,\infty]$ such that~$g(\comblow[\alpha'])(\alpha) \leq (1+\varepsilon) \cdot g(\combnew)(\alpha)$.
Consider the case that~$\text{dom}_L$ is false because~$g(\combnew) \geq g(\comblow)(\anew)$ holds.
Because~$\combnew$ is optimal for~$\anew$, it follows that~$\comblow$ is optimal as well. Then~$\comblow$ is optimal within~$[\amin,\anew]$ by~\Cref{lem:lower-envelope:paper} and the claim follows with~$\alpha'=\alpha$.

Now, consider the other case that~$(1+\varepsilon) \cdot g(\combnew)(\anew) \geq g(\comblow[\amin])(\anew)$.
We write this as~$h(\anew) \geq 0$ with the functions~$f:=(1+\varepsilon) \cdot g(\combnew)$, $f':=g(\comblow[\amin])$ and~$h:=f-f'$.
Because~$\comblow$ is optimal at~$\amin$, we also have~$h(\amin) \geq 0$.
We show that~$h(\alpha) \geq 0$ for~$\alpha\in[\amin,\anew]$.
The slope of~$f$ is decreasing, whereas the slope of~$f'$ is constant, so the slope of~$h$ is decreasing.
If~$h(\alpha) < 0$, then the slope of~$h$ must be negative at~$\alpha$.
However, this implies that it remains negative, which contradicts~$h(\anew) \geq 0$.
Hence, the claim follows for~$\alpha'=\amin$.
\end{proof}

The runtime of the algorithm depends on the rate of convergence of the estimated crossing point towards the actual one.
It is unclear whether this is polynomial in~$1/\varepsilon$ because the second derivatives of the objective functions can become unbounded.
However, we observe in~\Cref{sec:appendix:chord_scheme_comparison} that the algorithm converges quickly in practice.

\subsection{Exploiting $\alpha$-Nestedness}
Independent of the chosen algorithm, during a call of
\(
\Recurse(\comblow, \amin, \combup, \amax),
\)
we exploit the fact that we already know the optimal solutions $\comblow$ and $\combup$ at the boundaries $\amin$ and $\amax$.
Let~$\sollow:=\comblow[\amin]$, $\solnew:=\combnew[\anew]$ and~$\solup:=\combup[\amax]$.
Due to the~$\alpha$-nestedness of optimal solutions, we have~$\sollow\subseteq\solnew\subseteq\solup$.
For the subdivision-restricted version of the problem, it is known that this can be exploited by contracting the problem instance that is solved for~$\anew$~\cite{beines2024}.
This optimization does not carry over directly to~$\freeProblem$, but we can nevertheless exploit~$\alpha$-nestedness in three ways:
\begin{enumerate}
    \item Since~$\sollow \subseteq \solnew$ and $\optimizer$ can handle circular polygons as inputs, we can run the optimization at~$\anew$ with~$\sollow$ as the input instance instead of $\polygons$. Note that $\sollow$ is an $\amin$-circular polygon with $\amin \leq \anew$. Thus, by \cref{obs:no_arc_in_interior_of_arc}, all new arcs still connect only to the edges of $\polygons$, and consequently, the combinatorial solutions are still defined with respect to the original input $\polygons$.
    Especially for larger~$\alpha$ values, this can significantly reduce the complexity of the input.
    \item Let $\polygons(S)$ denote the polygons contained in a region $S$.
    It follows from~$\solnew \subseteq \solup$ that for every region~$S \in \solnew$, there exists a region $S' \in \solup$ with $\polygons(S) \subseteq \polygons(S')$.
    Therefore, $\solnew$ can be computed by solving the subproblem with input~$\polygons(S')$ for each region~$S' \in \solup$ independently. This allows us to discard arcs between different regions of~$\solup$, leading to smaller subdivisions.
    \item Consider an arbitrary but fixed region $S_L \in \sollow$, and let $S_U \in \solup$ be the region such that $S_L \subseteq S_U$. If $S_L$ and $S_U$ contain the same polygons, then we can find the corresponding region~$S_N \in \solnew$ by solving a simplified sub-instance for every free arc~$c=\sarc{}{xy}$ in~$S_U$ (see~\Cref{fig:meshes}).
    Let~$s=\langle x=v_0,v_1,\dots,v_k=y\rangle$ be the sequence of combinatorial vertices between~$u$ and~$v$ on the boundary of~$S_L$.
    The boundary segment between~$x$ and~$y$ in~$S_N$ must lie fully within the region enclosed by~$c$ and~$s$, and the only vertices within this region are those in~$s$. 
    Thus, it suffices to consider arcs between vertices in~$s$ for this sub-instance.
    In the special case~$s=\langle x,y \rangle$, the sub-instance can be solved directly without invoking~$\optimizer$ because the only possible arc is~$c$.
    Especially once~$\amin$ and $\amax$ are close together, this optimization substantially reduces the amount of work.
\end{enumerate}
\begin{figure}
    \centering 
    \includegraphics[width = 0.45\textwidth]{images/new/Meshes.pdf}
    \caption{An example of optimization 3 for an input polygon shown in blue, $\sollow$ in dark yellow and $\solup$ in light yellow. By exploiting $\alpha$-nestedness, we can isolate the vertex sequences $\langle v_2,v_3,v_4,v_5 \rangle$, $\langle v_6,v_7,v_8,v_9 \rangle$, and $\langle v_{11},v_{12},\dots,v_{14} \rangle$ of~$\sollow$ as sub-instances, where arcs in $\solnew$ may only connect vertices from the respective sequence. In the example, this makes the first two sub-instances trivial to solve, as there is only one possible arc of radius $\anew$ each, marked in red. For the third sub-instance, three possible arcs exist.}
    \label{fig:meshes}
\end{figure}

\section{Data and Additional Experimental Results}
\label{sec:data_and_adittional_experiments}
In~\Cref{tab:dataset_overview}, we provide an overview of the datasets used in our experimental evaluation.
We report characteristics of the datasets that have an influence on the performance of the exact algorithm for the unrestricted problem.
\subsection{Additional Experiments for the Fixed Parameter Problem}\label{sec:appendix:Fixed_parameter_experiments}

We consider the two instance Andernach and Euskirchen in more detail. The runtime peak for Andernach is at $\alpha=260$ and $\alpha=500$ for Euskirchen. Using the detailed measurements for the two medium-size instances given in~\Cref{tab:single_parameter_runs:paper}, we now discuss the runtime fluctuation for different values $\alpha$. At $\alpha=100$, few arcs exist and the preprocessing performs few successful merges, so the preprocessed instance and the solution contain many polygons. At $\alpha=3000$, over $90\%$ of merges succeed, and the number of regions after preprocessing is close to the final solution. Additionally, for sub-instances, on average, each \optimizer~call involves only $4$--$5$ cells, making most of them trivial to solve and the final global subdivision is also close to trivial. In the range $\alpha\in[200,500]$, solutions are the most volatile: some polygons already merge into large representatives, but still many merges can fail, resulting in larger subdivisions during the preprocessing and in particular for the final $\optimizer$ call after the preprocessing. Hence, different factors can lead to the peak in runtime: For Andernach, it is due to many generated free arcs and failed merges; for Euskirchen, it is mainly due to the size of the final arrangement, though free arcs and average subdivision size also contribute.

Overall, we observe that the preprocessing drastically reduces the number of generated arcs that need to be tested for intersection with $\polygons$ (for $\alpha=3000$ this reduction is by more then one order of magnitude). In addition, for larger $\alpha$, the number of regions after preprocessing is already close to that in the final solution and the preprocessing solution is already very close to optimal.

\begin{table}[tbh]
\caption{
    Performance for $\alpha\in\{10,3000\}$ and the $\alpha$ value corresponding to the maximum runtime.
    We measure the number of useful arcs before ($|\hat{{F}}_\alpha(\polygons)|$) and after ($|{{F}_\alpha(\polygons)}|$) testing for intersection with polygons, both with and without preprocessing.
    In the preprocessing case, the values are accumulated over all sub-problems as well as the final one.
    Merges (Tested) is the number of calls to \texttt{UPA-Opt} during the preprocessing and Succ is the percentage of successful merges.
    The average number of cells in the subdivision across all \texttt{UPA-Opt} calls in the preprocessing is listed under Cells (PP) and the number of cells in the final \texttt{UPA-Opt} call is given in Cells (Final).
    Finally, we list the number of regions after the preprocessing (PP) and in the solution (Sol).
}
\label{tab:single_parameter_runs:paper}
\begin{tabular*}{\textwidth}{@{\,}l
@{\extracolsep{\fill}}r@{\extracolsep{\fill}}r
@{\extracolsep{\fill}}r@{\extracolsep{\fill}}r
@{\extracolsep{\fill}}r@{\extracolsep{\fill}}r
@{\extracolsep{\fill}}r@{\extracolsep{\fill}}r
@{\extracolsep{\fill}}r@{\extracolsep{\fill}}r
@{\extracolsep{\fill}}r@{\extracolsep{\fill}}r@{\,}}
\toprule
& & 
& \multicolumn{2}{c}{$|\hat{F}_\alpha(\polygons)|$}
& \multicolumn{2}{c}{$|{F}_\alpha(\polygons)|$}
& \multicolumn{2}{c}{Merges}
& \multicolumn{2}{c}{Cells}
& \multicolumn{2}{c}{Regions} \\
\cmidrule{4-5} \cmidrule{6-7} \cmidrule{8-9} \cmidrule{10-11} \cmidrule{12-13}

Dataset & $|V(\mathcal{B})|$ & $\alpha$ 
& no PP & PP 
& no PP & PP 
& Tested & Succ 
& PP & Final 
& PP & Sol \\
\midrule
Andernach & 21\,036 & 100 & 27\,220 & 27\,892 & 14\,185 & 14\,813 & 2\,591 & 49\,\% & 4.2 & 8\,616 & 2\,180 & 1\,879 \\
Andernach & 21\,036 & 260 & 468\,642 & 121\,414 & 82\,389 & 29\,845 & 15\,509 & 64\,\% & 5.0 & 5\,272 & 218 & 147 \\
Andernach & 21\,036 & 3\,000 & 4\;651\;862 & 217\,147 &  69\;163 & 12\,057 & 3\,778 & 90\,\% & 4.5 & 2705 & 50 & 11\\[5pt]

Euskirchen & 43\,146 & 100 & 68\,201 & 65\,885 & 28\,328 & 27\,700 & 3\,948 & 56\,\% & 4.3 & 12\,367 & 2\,651 & 1\,941 \\
Euskirchen & 43\,146 & 500 & 1\,025\,116 & 233\,526 & 64\,510 & 24\,773 & 5\,748 & 80\,\% & 4.7 & 47\,780 & 259 & 123 \\
Euskirchen & 43\,146 & 3\,000 & 8\,018\, 632 & 386\,706 & 92\,171 & 22\,794 & 5\,403 & 89\,\% & 4.5 & 368 & 66 & 20 \\
\bottomrule
\end{tabular*}
\end{table}

\subsection{Comparison Binary Bisection and Approximate Intersection Points}
\label{sec:appendix:chord_scheme_comparison}
In \Cref{fig:speedup_intersection_approx}, we plot the speedup of the dichotomic algorithm using approximate intersection points compared to the binary bisection–based approximation scheme, for all medium-sized instances and $\varepsilon \in \{10^{-2}, 10^{-4}, 10^{-6}\}$. Similarly, \Cref{fig:ratio_intersection_approx} shows the ratio of recursive calls.

Across all instances and independently of $\varepsilon$, the intersection-based scheme consistently achieves a speedup greater than one, with a maximum of $2.6$. The speedup decreases for smaller $\varepsilon$, as the $\alpha$-range must be explored more thoroughly. Consequently, the intervals of $\alpha$ for which $\optimizer$ is called in both approaches become smaller.
Thus, the “unnecessary” calls performed by the binary bisection scheme can be executed very quickly due to our engineering optimizations.

\begin{figure}[bt]
    \centering
    \begin{minipage}[t]{0.48\textwidth}
        \centering
        \includegraphics[width=\textwidth]{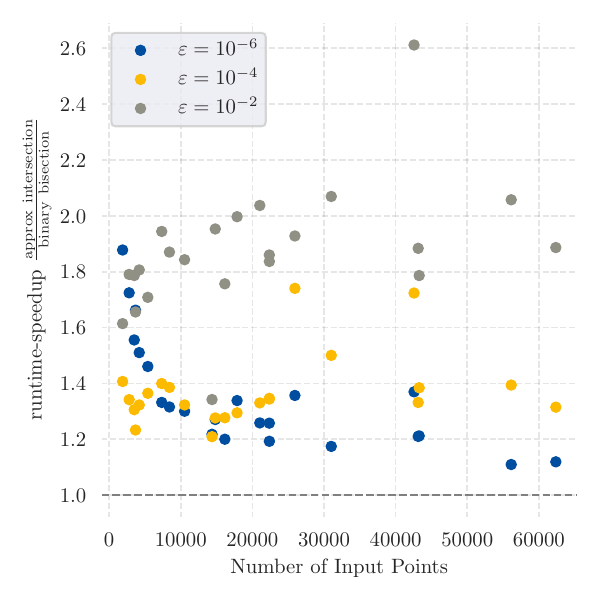}
        \caption{The runtime speedup when choosing $\alpha_N$ as the approximate intersection point instead of the binary bisection.}
        \label{fig:speedup_intersection_approx}
    \end{minipage}\hfill
    \begin{minipage}[t]{0.48\textwidth}
        \centering
        \includegraphics[width=\textwidth]{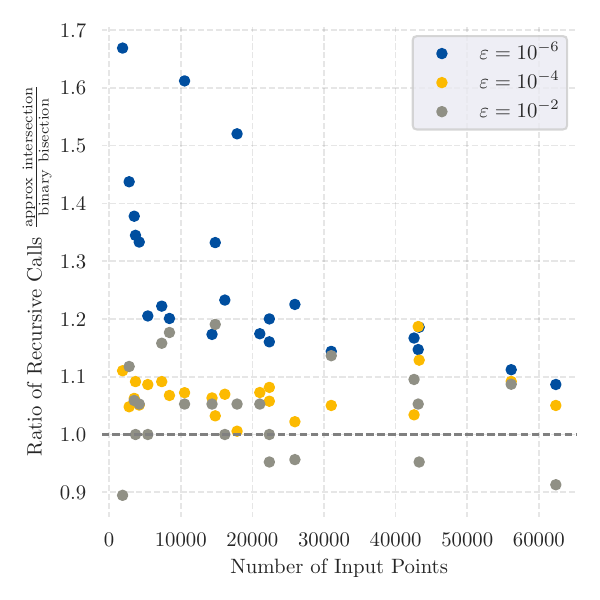}
        \captionsetup{textformat=simple} \caption{Ratio of recursive calls comparing the bisection approach using approximate intersection points to  binary bisection.}
        \label{fig:ratio_intersection_approx}
    \end{minipage}
\end{figure}%

\begin{table}[htp]
\caption{An overview of the different datasets used for our experiments. ``Diameter'' refers to the diameter of the vertex set. ``Density'' is the proportion of the convex hull of the vertex set that is covered by input polygons. Column $\Phi$ is the segment spread, i.e., the diameter divided by the minimum distance from any polygon edge to a non-adjacent vertex.}
\label{tab:dataset_overview}
\centering
\begin{tabular*}{\textwidth}{l@{\extracolsep{\fill}}r@{\extracolsep{\fill}}r@{\extracolsep{\fill}}r@{\extracolsep{\fill}}r@{\extracolsep{\fill}}r@{\extracolsep{\fill}}r}
\toprule
Instance & Vertices & Polygons & Diameter & Density & $\Phi$ & $\log_2(\Phi)$ \\
\midrule
bockelskamp & 1\,883 & 360 & 1\,324 & 10.7\,\% & 36\,819 & 15.17 \\
bokeloh & 2\,793 & 579 & 1\,612 & 11.7\,\% & 23\,900 & 14.54 \\
erlenbach & 3\,505 & 650 & 1\,789 & 12.0\,\% & 17\,892 & 14.13 \\
ahrem & 3\,683 & 334 & 1\,754 & 7.3\,\% & 60\,604 & 15.89 \\
goddula & 4\,202 & 731 & 2\,214 & 7.7\,\% & 10\,025 & 13.29 \\
friesheim & 5\,402 & 862 & 2\,090 & 9.6\,\% & 99\,379 & 16.60 \\
gerolstein & 7\,344 & 1\,312 & 3\,291 & 8.0\,\% & 33\,149 & 15.02 \\
belm & 8\,420 & 1\,722 & 3\,589 & 11.4\,\% & 35\,896 & 15.13 \\
edendorf & 10\,536 & 1\,464 & 2\,605 & 9.0\,\% & 26\,794 & 14.71 \\
gruppe8 & 14\,365 & 2\,437 & 10\,554 & 0.8\,\% & 373\,983 & 18.51 \\
gruppe2 & 14\,814 & 2\,192 & 4\,941 & 3.6\,\% & 134\,884 & 17.04 \\
forsbach & 16\,156 & 1\,801 & 2\,439 & 12.4\,\% & 42\,529 & 15.38 \\
gruppe3 & 17\,864 & 2\,866 & 3\,560 & 6.9\,\% & 668\,344 & 19.35 \\
andernach & 21\,036 & 3\,450 & 5\,476 & 11.5\,\% & 72\,523 & 16.15 \\
jena & 22\,365 & 3\,970 & 6\,643 & 9.9\,\% & 288\,002 & 18.14 \\
bad\texttt{\textunderscore}neuenahr & 22\,371 & 3\,989 & 6\,533 & 9.8\,\% & 1\,270\,417 & 20.28 \\
gruppe4 & 25\,933 & 4\,305 & 9\,140 & 1.9\,\% & 2\,618\,358 & 21.32 \\
gruppe6 & 31\,011 & 4\,730 & 7\,936 & 2.7\,\% & 2\,741\,171 & 21.39 \\
weimar & 42\,572 & 6\,426 & 5\,797 & 10.7\,\% & 638\,680 & 19.28 \\
euskirchen & 43\,146 & 4\,865 & 4\,248 & 14.8\,\% & 68\,756 & 16.07 \\
bergedorf & 43\,283 & 4\,865 & 6\,384 & 11.1\,\% & 96\,323 & 16.56 \\
gruppe10 & 56\,132 & 6\,752 & 9\,704 & 3.1\,\% & 1\,485\,154 & 20.50 \\
celle & 62\,364 & 8\,411 & 6\,155 & 10.0\,\% & 974\,941 & 19.89 \\
ludwigshafen & 81\,262 & 11\,341 & 11\,094 & 13.9\,\% & 1\,955\,340 & 20.90 \\
mainz & 116\,944 & 17\,551 & 9\,821 & 13.9\,\% & 16\,997\,667 & 24.02 \\
koblenz & 180\,073 & 20\,679 & 11\,404 & 9.6\,\% & 1\,483\,791 & 20.50 \\
aachen & 279\,046 & 28\,460 & 15\,785 & 9.2\,\% & 2\,766\,948 & 21.40 \\
bonn & 388\,237 & 38\,258 & 16\,814 & 8.0\,\% & 19\,453\,005 & 24.21 \\
\bottomrule
\end{tabular*}
\end{table}

\subsection{Comparison to the Subdivision-Restricted Variant}\label{appendix:subdivisioncomparison}

\begin{figure}[tbp]
    \centering 
    \includegraphics[width = 1\textwidth]{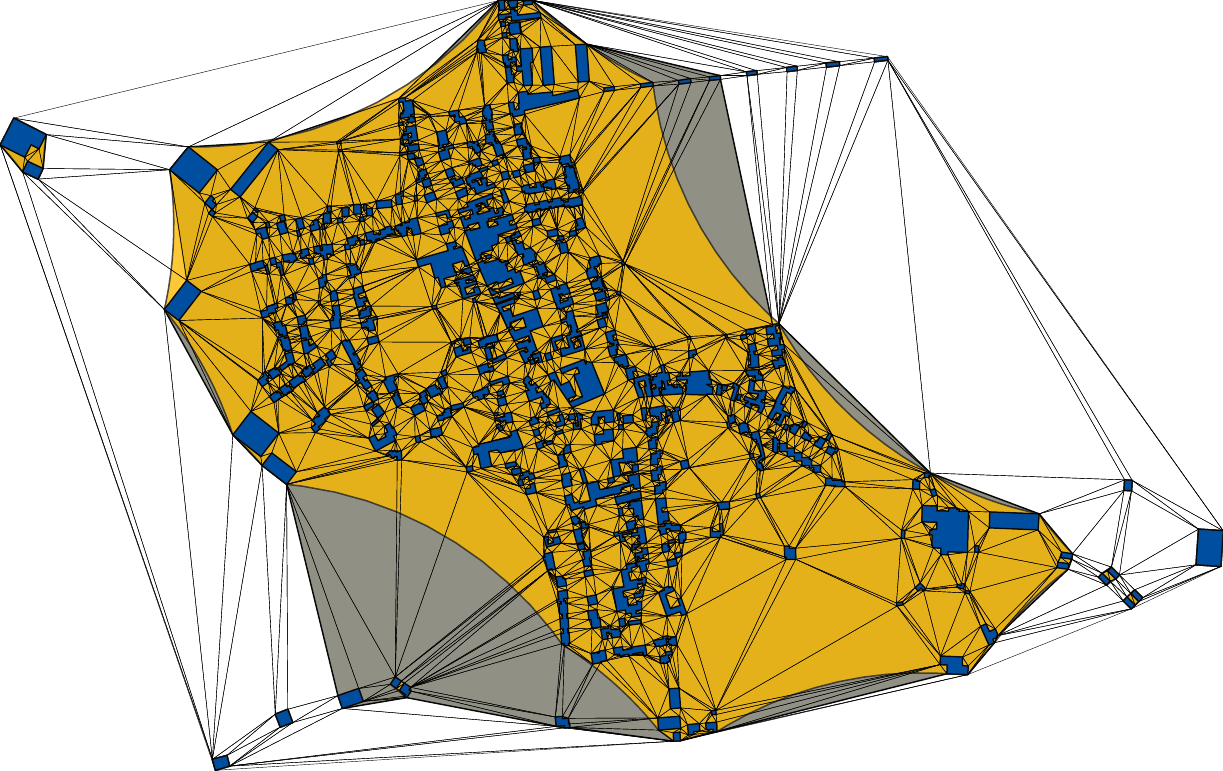}
    \caption{An example in which the triangulation-based aggregation approach by Rottmann et al.~\cite{rottmann2024bicritshapes} produces a significantly different solution than the optimal one. The dataset is Ahrem with~$\alpha = 5000$ for both solutions.}\label{fig:RottmannVsFree}
\end{figure}

Figure~\ref{fig:RottmannVsFree} displays a situation in which the unrestricted and subdivision-restricted approaches produce substantially different solutions for the same value of~$\alpha$.
The chosen subdivision is a constrained Delaunay triangulation (as suggested in \cite{rottmann2024bicritshapes}).
As can be seen, the outer limits of settlement areas tend to be sparsely developed, so a decision has to be made whether to include outlier buildings in the main component or not.
The subdivision has to incorporate these buildings somehow, which leads to long edges that span across large empty areas.
These edges prescribe certain directions in the solution boundary and preclude others.
In this case, the boundary arc in the optimal (unconstrained) solution is perpendicular to many edges of the Delaunay triangulation. Hence, the subdivision-based only has the choice between staying closer to the buildings near the boundary (which increases the perimeter) or including the outlier buildings (which vastly increases the area).
For aggregations that place a high emphasis on the perimeter, the better choice is to include the outlier buildings, which leads to a drastically different aggregation.
\end{document}